\documentclass[a4paper]{article} 
\usepackage{amsmath,amsthm,amssymb}

\title{Quantization of classical integrable systems \\
Part III: \\
systems in $n$-dimensional Euclidean space}
\author{M. Marino and N. N. Nekhoroshev \\
{\small Dipartimento di Matematica, Universit\`{a} degli Studi di Milano,} \\
{\small via Saldini 50, I-20133 Milano (Italy)}}


\newtheorem{thm}{Theorem}
\newtheorem{cor}[thm]{Corollary}
\newtheorem{lem}[thm]{Lemma}
\newtheorem{prop}[thm]{Proposition}

\theoremstyle{definition}
\newtheorem{defn}{Definition}[section]

\theoremstyle{remark}
\newtheorem{rem}{Remark}[section]

\def\beq{\begin{equation}}
\def\eeq{\end{equation}}

\def\rank{{\rm rank}\,}
\def\ker{{\rm Ker}\,}
\def\Tr{{\rm Tr}\,}

\def\bop{{\cal B}}

\def\fop{{\cal F}}
\def\rn{\mathbb{R}}
\def\nn{\mathbb{N}}

\def\galg{\mathfrak{g}}
\def\gtyp{\mathfrak{g}_{\rm typ}}
\def\mtyp{M_{\rm typ}}

\def\pop{{\cal P}}

\def\oop{{\cal O}}

\def\ipn2{\left[\frac n2\right]}
\def\sspan{{\rm Span}\,}

\numberwithin{equation}{section} \numberwithin{thm}{section}

\begin{document}

\maketitle

\begin{abstract}
In this paper we give examples of applications of general methods
of quantization by symmetrization of classical integrable systems,
which have been illustrated in two previous works by the same
authors. We consider two classes of systems in $n$ spatial
dimensions, which respectively describe a point particle in a
central force field and a freely rotating rigid body. In the
former case, the application of the general methods to an
integrable classical system leads in an almost straightforward way
to the quasi-integrability of the corresponding quantum system. In
the latter case instead, a modification of the symmetrization
procedure is necessary in order to achieve quantum integrability
for $n= 6$.
\end{abstract}

\section{Introduction}
In two previous papers of this series, we have introduced the
concept of quasi-integrable quantum system \cite{part1}, and we
have established general methods to obtain examples of such
systems starting from classical integrable systems \cite{part2}.
Integrability of an operator $\hat H$ is defined as the existence
of a sufficiently large set $\hat F$ of operators which commute
with $\hat H$, more exactly a quasi-integrable set $\hat F$ of
operators. The main source of integrable sets are Lie closed sets
of operators commuting with $\hat H$. Making the union of several
such sets one can obtain an integrable set $\hat F$. An important
particular case of Lie closed set is a Lie algebra of operators.
General methods for the constructions of integrable sets are based
on the symmetrization of the products of operators, which
correspond to the elements of an integrable set of functions for
the classical system. In the present paper these methods will be
applied to some important classes of integrable classical and
quantum systems.

In section \ref{meth} we consider systems in a Euclidean space of
arbitrary dimension $n$, which describe a point particle in a
central force field. The discussion is then extended to more
general one-particle systems which are symmetric with respect to
the group of rotations $SO(n)$. For all these cases we construct
various types of classical integrable sets of functions. These
sets contain in general $2n-k$ elements, where $k$ is equal to the
number of elements in the central subset \cite{TMMO}. This number,
for the various integral sets here considered, can take all
possible values from 2 to $n$. Each integrable set can be applied
to all systems whose hamiltonian is an arbitrary function of the
central elements. In all these cases we show that, by applying the
general results of \cite{part2}, one can obtain a corresponding
integrable quantum systems with an equal number $k$ of central
operators.

In section \ref{meth2} we then consider a freely rotating rigid
body in a Euclidean space of arbitrary dimension $n$. In the
classical case, it is known that this system is completely
integrable. By applying our scheme of noncommutative integrability
\cite{TMMO, fomenko, fasso}, we find how the number $k$ of central
integrals depends on the space dimension $n$ and on the properties
of the set of generalized moments of inertia of the body. We then
show that, for $n\leq 5$, the application of the general results
of \cite{part2} leads in an almost straightforward way to the
integrability of the corresponding quantum systems. However, for
$n=6$ we find that a modification of the symmetrization procedure
is necessary in order to obtain a quasi-integrable set of
operators. In fact, a Manakov polynomial of fourth order in the
left-invariant momenta, which belongs to the central subset of the
classical integrable set, does not commute with the hamiltonian
operator after symmetrization in the momenta. However,
commutativity can be restored by adding to it a suitable second
order polynomial. One can conjecture that analogous procedures can
be applied also for $n>6$.

\section{One-particle systems in a $n$-dimensional
central force field} \label{meth}

It is well-known that several mechanical systems have been proven
to be integrable both at the classical and at the quantum level.
The integrability of various classes of systems is discussed for
instance in \cite{olsha0, olsha1, evans, etingof, tempesta,
Gra_Wint, marquette}. These systems usually consist of point
particles moving in a space of one or more dimensions, subjected
to an external potential or mutually interacting via suitable
two-particle potentials. In particular, in \cite{rodriguez} a
class of maximally superintegrable systems is studied, which
includes as a particular case the hydrogen atom in $n$ dimensions.
In this section we too shall consider systems in $n$ dimensions
which generalize in some sense the hydrogen atom problem, although
our aim will be partly different with respect to most of the cited
investigations. We shall not in fact restrict ourselves to
considering hamiltonians which are the sum of the usual kinetic
term and of a potential term dependent only on the position. We
shall instead consider a generic invariant hamiltonian with
respect to the group of $n$-dimensional rotations, and we shall
look for all the possible integrable sets of functions and
operators which can be constructed for such an hamiltonian. It is
then obvious that each of these sets can also be associated with
the entire class of integrable systems, whose hamiltonian is
expressible as a functions of the central elements of the set. In
this way our approach leads to the systematic individuation of
families of integrable systems in $n$ spatial dimensions. However,
we shall not discuss the possible physical interpretation of the
systems obtained with this method.

\subsection{Classical particle in a central force field} \label{ccff}

The hamiltonian function of this system in an $n$-dimensional
euclidean space has the following form:
\begin{equation}\label{claplace}
H= \frac 12 p^2 +U(r)
\end{equation}
where $U\in C^\infty (0,+\infty)$. Here we use the notation
\begin{equation*}
r= \sqrt {x^2}\,, \qquad  x= (x_1, \dots, x_n)\,, \qquad p = (p_1,
\dots, p_n)\,.
\end{equation*}
The first term on the right-hand side of (\ref{claplace})
corresponds to the kinetic energy of the particle, and the second
one corresponds to its potential energy. The configuration space
$K$ of this classical system is the $n$-dimensional euclidean
linear space ${\mathbb R}^n_x$, more exactly, $K= {\mathbb R}^n_x
\setminus \{0\}$. The group of orthogonal transformations $G=
SO(n)$ acts on this space. This action in an orthonormal basis in
${\mathbb R}^n_x$ is defined by orthogonal matrices. The dimension
of this Lie group is $N=n(n-1)/2$. The action of this Lie group
$G$ transfers onto the cotangent bundle $T^*{\mathbb R}^n_x =
{\mathbb R}^{2n}_{xp}$ to ${\mathbb R}^n_x$. This bundle, without
the cotangent space $T^*_0 {\mathbb R}^n_x$ to ${\mathbb R}^n_x$
at the point $0$, represents the phase space $M = T^* K$ of the
classical system. The action of this group $G$ conserves the
hamiltonian function $H= H(x,p) = \frac 12 p^2 +U(r)$ of the
classical system.

Let us denote the Lie algebra of the group $G$ as $\mathfrak{g}=
so(n)$. Each element $a$ of $\mathfrak{g}$ is associated with a
vector field $v_a$ on ${\mathbb R}^n_x$. The corresponding vector
field on the symplectic manifold $M \subset T^*{\mathbb R}^n_x$ is
hamiltonian with hamiltonian function $P_a(m)=\langle p, v_a(x)
\rangle$, where $p=m \in T^*_x {\mathbb R}^n_x$ is a linear form
on $T_x {\mathbb R}^n_x$. Let $(v_{a1}(x), \dots, v_{an}(x))$ be
the components of the vector field $v_a$ in coordinates $x=(x_1,
\dots, x_n)$; then $P_a(p,x)= \sum_i p_i v_{ai}(x)$, where $(p,x)$
are canonical coordinates on $T^*\mathbb{R}^n_x$. In the
considered case, in which the group is $G=SO(n)$, an element $a$
of the Lie algebra $\mathfrak{g}$ is represented by a
skew-symmetric matrix $A=A_a$, and the vector $v_a(x)$ at the
point $x=(x_1, \dots, x_n)$ has the form $v_a(x) = -A_a x$. The
action of the group $G$ on $T^*\mathbb{R}^n_x$ is a Poisson
action, i.e., $\{P_a, P_b\}= P_{[a,b]}$, where $a,b$ are any two
elements of $\mathfrak{g}$, and $[a,b]$ is the commutator defined
in this algebra. Each system of cartesian coordinates in
$\mathbb{R}^n_x$ defines in the algebra $\mathfrak{g}$ a basis
whose elements are given in these coordinates by matrices
$D^{ij}$, $1\leq i<j\leq n$, having a particularly simple form.
These matrices have only two non-zero elements which are equal to
$\pm 1$. Namely, $D^{ij}_{ij}= -D^{ij}_{ji}=1$, where
$D^{ij}_{kl}$ denotes the element of the matrix $D^{ij}$ lying at
the intersection of row $k$ and column $l$. Making use of
Kr\"onecker symbol $\delta_{ij}$, we can write in general
\begin{equation} \label{aij}
D^{ij}_{kl}= \delta_{ki} \delta_{lj} - \delta_{kj} \delta_{li} \,.
\end{equation}
According to this formula, a matrix $D^{ij}$ can naturally be
defined also for $i\geq j$, and we have $D^{ji} =-D^{ij}\
\forall\, i,j=1, \dots, n$. The commutation relations between
these matrices are
\begin{equation} \label{acomm}
[D^{ij}, D^{hk}] = -\delta_{ih} D^{jk} - \delta_{jk} D^{ih} +
\delta_{ik} D^{jh} + \delta_{jh} D^{ik} \,.
\end{equation}
Note that, whenever the commutator is nonzero, only one of the
four terms on the right-hand side is different from zero.

Let $P_{ij}$ denote the function $P_a$, corresponding to the
matrix $A_a =D^{ij}$. Then
\begin{equation} \label{pij}
P_{ij} = x_i p_j -x_j p_i\,.
\end{equation}
From (\ref{acomm}) one immediately derives that the Poisson
brackets relations between these functions are
\begin{equation} \label{ppoiss}
\{P_{ij}, P_{hk}\} = -\delta_{ih} P_{jk} - \delta_{jk} P_{ih} +
\delta_{ik} P_{jh} + \delta_{jh} P_{ik} \,.
\end{equation}
Let $P(x,p)$ denote the $N$-dimensional vector $(P_{ij}(x,p),\
1\leq i<j\leq n)$. It is easy to verify that the rank of the map
$P: \mathbb{R}^{2n}_{xp} \to \mathbb{R}^N$ at a typical point
$(x,p)$ is equal to $2n-3$. Let $P^2$ denote the square of the
length of vector $P$, i.e., $P^2:= \sum_{1\leq i<j \leq n}
P_{ij}^2=r^2p^2- (x\cdot p)^2$. Here $x\cdot p$ denotes the scalar
product of vectors $x$ and $p$, i.e., $x \cdot p := \sum_{i=1}^n
x_i p_i$. Using (\ref{ppoiss}) it is easy to check that
\begin{equation} \label{rotp}
\{P^2, P_{ij}\}=0
\end{equation}
for any component $P_{ij}$ of vector $P$. It is clear that one can
select $2n-4$ components $L=(P_{i_1 j_1}, \dots, P_{i_{2n-4}
i_{2n-4}})$ of this vector, such that the set $\Pi := (P^2, L)$
defines a regular map $\Pi: \mathbb{R}^{2n}_{xp} \to
\mathbb{R}^{2n-3}$ almost everywhere in $\mathbb{R}^{2n}_{xp}$.
This means that the rank of the map $\Pi$ is equal to $2n-3$
almost everywhere. A possible choice is $L = (P_{13}, P_{14},
\dots, P_{1n}, P_{23}, P_{24}, \dots, P_{2n})$.

Let us add to this set the hamiltonian function $H=H(x,p)$, and
denote by $F=(H, P^2;L)$ the resulting set of $2n-2$ functions. It
is easy to see that this set $F$ is functionally independent
almost everywhere, that is the set of critical points of the map
defined by this set has zero measure, is a closed set and is
nowhere dense. Using the relations
\begin{align}
\{x_i, P_{jk}\} &= \delta_{ij}x_k - \delta_{ik}x_j \,,\label{p1}\\
\{p_i, P_{jk}\} &= \delta_{ij}p_k - \delta_{ik}p_j \label{p2}
\end{align}
for $i,j,k=1, \dots, n$, it is also easy to verify that
\begin{equation}\label{rot}
\{r^2,P\}=0\,,\qquad \{p^2,P\} =0\,,
\end{equation}
that is $\{r^2,P_{jk}\}=\{p^2,P_{jk}\}=0$. Hence $\{H,P\}=0$.
Since $L \subset P$, this implies that the set $F$ has 2 central
functions, $H$ and $P^2$. According to the definition 
given in \cite{part1}, $F$ is thus an integrable set with two
central integrals, and the system with hamiltonian function $H$ is
globally integrable with set of invariants $F$. The conservation
of $P$ implies that the orbit of the particle lies in a
2-dimensional plane.

\begin{prop} \label{strind2}
Let $(V_1,\ldots,V_l)$ and $(W_1,\ldots,W_s)$ be two sets of
functionally independent functions on a $2n$-dimensional
symplectic manifold, such that $\{V_i,W_k\} =0$ for $i=1,\ldots,l$
and $k=1,\ldots,s$. Then $l+s\leq 2n$.
\end{prop}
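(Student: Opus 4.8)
The plan is to work at a single generic point $m$ of the symplectic manifold $M$, where all the functions in both families are functionally independent, and to translate the hypotheses into linear algebra on the tangent space $T_mM$, which carries the symplectic form $\omega$. Functional independence of $(V_1,\ldots,V_l)$ at $m$ means that the differentials $dV_1,\ldots,dV_l$ are linearly independent in $T_m^*M$; equivalently, via the isomorphism $T_m^*M\to T_mM$ induced by $\omega$, the Hamiltonian vector fields $X_{V_1},\ldots,X_{V_l}$ span an $l$-dimensional subspace $\mathcal{V}\subset T_mM$. Likewise $X_{W_1},\ldots,X_{W_s}$ span an $s$-dimensional subspace $\mathcal{W}$. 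Since such generic points form a dense set (each independence condition fails only on a closed nowhere-dense set, and there are finitely many of them), it suffices to produce the bound at one such $m$.

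The key step is to observe that the vanishing of all the Poisson brackets $\{V_i,W_k\}=0$ says precisely that $\omega(X_{V_i},X_{W_k})=0$ for all $i,k$, i.e. that $\mathcal{W}$ is contained in the $\omega$-orthogonal complement $\mathcal{V}^\omega$ of $\mathcal{V}$. Because $\omega$ is nondegenerate, $\dim \mathcal{V}^\omega = 2n - \dim\mathcal{V} = 2n - l$. Hence $s = \dim\mathcal{W} \le \dim\mathcal{V}^\omega = 2n - l$, which is exactly $l+s\le 2n$.

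The only genuine point requiring care — and the main obstacle, such as it is — is justifying that a point $m$ exists at which both families are simultaneously functionally independent: each set is independent only almost everywhere, so one must check that the union of the two bad sets still has empty interior, which follows because the finite union of closed nowhere-dense sets is nowhere dense. Everything else is the standard dictionary between differentials, Hamiltonian vector fields, Poisson brackets, and symplectic orthogonality, together with the dimension formula $\dim \mathcal{V} + \dim \mathcal{V}^\omega = \dim T_mM$ valid for any linear subspace of a symplectic vector space.
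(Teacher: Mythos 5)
Your proof is correct. The paper itself offers no proof of this proposition, introducing it only as ``the above well-known result,'' and your argument --- passing to a single generic point where both families have linearly independent differentials (legitimate, since the union of the two exceptional closed nowhere-dense sets is still nowhere dense), converting differentials to Hamiltonian vector fields via the nondegenerate form $\omega$, and reading $\{V_i,W_k\}=0$ as $\mathcal{W}\subseteq\mathcal{V}^{\omega}$ with $\dim\mathcal{V}^{\omega}=2n-l$ --- is the standard one; it is also the exact tangent-space counterpart of the skew-orthogonality lemma (Lemma \ref{rskew}) that the paper proves later on $T^{*}_m M$ using the Poisson bivector $\Pi$.
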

Taking into account the above well-known result, we can describe
the set of all integrable classical systems which are invariant
with respect to the action of the group $G=SO(n)$ on
$\mathbb{R}^{2n}_{xp}$, i.e., the integrable systems whose
hamiltonian $H$ is in involution with vector $P$.
\begin{lem} \label{40}
We have $\{H,P\}=0$ if and only if locally $H= f(p^2, r, P^2)$.
\end{lem}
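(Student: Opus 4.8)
The plan is to prove the two implications of the ``if and only if'' separately. The direction $H=f(p^2,r,P^2)\Rightarrow\{H,P\}=0$ is immediate: by the Leibniz/chain rule for Poisson brackets,
\[
\{f(p^2,r,P^2),P_{ij}\}=\partial_1 f\cdot\{p^2,P_{ij}\}+\partial_2 f\cdot\{r,P_{ij}\}+\partial_3 f\cdot\{P^2,P_{ij}\},
\]
and all three brackets vanish: $\{p^2,P_{ij}\}=0$ by (\ref{rot}), $\{P^2,P_{ij}\}=0$ by (\ref{rotp}), and $\{r,P_{ij}\}=\tfrac1{2r}\{r^2,P_{ij}\}=0$ again by (\ref{rot}). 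Hence $\{H,P\}=0$ wherever $H$ has this form, and since the bracket is computed pointwise a local representation suffices.

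For the converse I would first record the relevant commutations. From (\ref{rot}) and (\ref{rotp}) each of $r$, $p^2$, $P^2$ Poisson-commutes with every component $P_{ij}$, and therefore also with $P^2=\sum P_{ij}^2$; moreover, assuming $\{H,P\}=0$, one has $\{H,P^2\}=\sum 2P_{ij}\{H,P_{ij}\}=0$ as well. Thus \emph{every} function in the list $(r,p^2,P^2,H)$ Poisson-commutes with \emph{every} function in the list $(P^2,L)$ introduced earlier. Next I would note that the triple $(r,p^2,P^2)$ is functionally independent at a typical point: using $dr=r^{-1}(x\cdot dx)$, $d(p^2)=2\,p\cdot dp$ and $d(P^2)=\bigl(2p^2x-2(x\cdot p)p\bigr)\cdot dx+\bigl(2r^2p-2(x\cdot p)x\bigr)\cdot dp$, a short linear-algebra check shows $dr,d(p^2),d(P^2)$ are independent at every point where $x$ and $p$ are linearly independent and $x\cdot p\neq0$. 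Call a point \emph{typical} if, in addition, the map $\Pi=(P^2,L)$ has its generic rank $2n-3$ there; typical points form a dense open set.

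The key step is then a dimension count using Proposition \ref{strind2}. Fix a typical point $m_0$ and suppose, for contradiction, that the quadruple $(r,p^2,P^2,H)$ had rank $4$ at $m_0$; then it has rank $4$ on a neighbourhood $U$, and on a possibly smaller $U$ the quadruple $(r,p^2,P^2,H)$ and the $(2n-3)$-tuple $(P^2,L)$ are both functionally independent and mutually Poisson-commuting. Proposition \ref{strind2} (applied on $U$, viewed as a $2n$-dimensional symplectic manifold) gives $4+(2n-3)\leq 2n$, i.e. $2n+1\leq 2n$, absurd. Hence the rank of $(r,p^2,P^2,H)$ at $m_0$ is at most $3$; since $(r,p^2,P^2)$ alone already has rank $3$ there, we get $dH(m_0)\in\sspan\left\{dr,\,d(p^2),\,d(P^2)\right\}$, and this holds at every typical point. (Geometrically this is just the statement that the $SO(n)$-orbits on $\mathbb R^{2n}_{xp}$ have codimension $3$ and that $r,p^2,P^2$ are three independent invariants, but routing it through Proposition \ref{strind2} and the already-established rank $2n-3$ of $P$ keeps the argument self-contained.)

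Finally I would conclude with the submersion theorem: near $m_0$ the map $\Phi=(r,p^2,P^2)\colon\mathbb R^{2n}_{xp}\to\mathbb R^3$ is a submersion, so on a coordinate ball around $m_0$ there are coordinates $(y_1,y_2,y_3,z_1,\dots,z_{2n-3})$ with $\Phi=(y_1,y_2,y_3)$; the inclusion $dH\in\sspan\left\{dy_1,dy_2,dy_3\right\}$ then forces $\partial H/\partial z_k\equiv0$ on that ball, whence $H=f(y_1,y_2,y_3)=f(p^2,r,P^2)$ locally with $f$ smooth. The only genuinely non-routine point is the one in the previous paragraph — recognizing that Proposition \ref{strind2} together with $\rank P=2n-3$ pins the local invariant ring down to exactly three generators; the rest is the chain rule, one rank computation, and the submersion theorem.
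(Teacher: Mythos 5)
Your proof is correct and follows essentially the same route as the paper: the forward implication by the chain rule together with (\ref{rot}) and (\ref{rotp}), and the converse by the dimension count of Proposition \ref{strind2} applied to $(p^2,r,P^2,H)$ against the $2n-3$ functions of $\Pi=(P^2,L)$. You merely make explicit some steps the paper leaves implicit (the functional independence of the triple $(r,p^2,P^2)$ and the passage from $dH\in\sspan\{dr,d(p^2),d(P^2)\}$ to the local functional form via the submersion theorem).
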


\begin{proof}
Let us suppose that $H= f(p^2, r, P^2)$. Since the functions
$r,p^2,P^2$ are in involution with $P$ (see above), we have $\{H,
P\}= 0$.

Viceversa, let us suppose that there exists a function $H(x,p)$
such that $\{H,P\}=0$ and that has not locally the form $H=f(p^2,
r, P^2)$. In this case we would have 4 functionally independent
functions $p^2$, $r$, $P^2$ and $H$, which are in involution with
the $2n-3$ functionally independent functions of the set $\Pi$.
Since the sum of the numbers of functions belonging to these two
sets equals $2n+1$, this would be in contradiction with
proposition \ref{strind2}.
\end{proof}

\begin{prop} \label{cf1}
In the real analytic case, a hamiltonian function $H$ is in
involution with $P$ and is integrable if and only if it has
locally the following form: $H=f(p^2, r, P^2)$.

The situation considered in all the present article refers to the
more general case of infinitely differentiable functions, that is
of class $C^\infty$. In this case, if $\{H,P\}=0$ and $H$ is
integrable, then $H=f(p^2, r, P^2)$. Viceversa, if $H=f(p^2, r,
P^2)$, where the 2-covector $\partial f/\partial (p^2, r)=
(\partial f/ \partial (p^2), \partial f/ \partial r)$ is not zero,
more exactly $\partial f/\partial (p^2, r) \neq 0$ almost
everywhere, then $H$ is integrable and $\{H,P\}=0$.
Such systems are always integrable with $k=2$, with central
integrals $H$ and $P^2$. In certain cases it is possible to find
an additional integral, and to have integrability with $k=1$. For
example in the case of the Newton potential, that is for $H= p^2/2
-\alpha/r$, and in the case of identical uncoupled oscillators,
that is for $H= (p^2 +r^2)/2$. The system with hamiltonian
function $H$ which is a function only of the square angular
momentum, more exactly $H = f(P^2)$, where $df\neq 0$ almost
everywhere, is integrable with $k=1$.

Let us present the integrable sets $F= F(H)$ which correspond to
these hamiltonian functions. For $H=f(p^2, r, P^2)$, where the
function $f$ is not locally functionally dependent only on $P^2$,
one can take $F= (H, P^2; L)$, and therefore $k=2$. For $H= p^2/2
-\alpha/r$ one can take $F= (H; P^2, L, A_1)$, where $A_1=
\sum_{j=2}^n P_{1j}p_j - \alpha x_1/r$. For $H= (p^2 +r^2)/2$ one
can take $F= (H; H_1, H_2, \dots, H_{n-1},$ $P_{12}, P_{13},
\dots, P_{1n})$, where $H_i = (p_i^2 +x_i^2)/2$. For $H = f(P^2)$
one can take $F= (P^2;p^2, r, L)$. In all cases except the first
one, we have $k=1$.
\end{prop}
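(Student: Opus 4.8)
The plan is to derive the two characterizations from Lemma \ref{40}, and then to check, one by one, that the four sets $F=F(H)$ in the statement are integrable sets with the stated number $k$ of central integrals in the sense of \cite{part1}; concretely this requires that $F$ be functionally independent almost everywhere, that the subset $F_0\subset F$ of functions commuting with all of $F$ have $k$ elements, and that $|F|=2n-k$ (the non-central part being symplectically non-degenerate, which holds in each case).

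For the characterizations, the implication ``$\{H,P\}=0$ and $H$ integrable $\Rightarrow$ $H=f(p^2,r,P^2)$ locally'' is Lemma \ref{40} itself, integrability not even being needed. Conversely, $H=f(p^2,r,P^2)$ gives $\{H,P\}=0$ again by Lemma \ref{40}, and such an $H$ is integrable because one of the sets below applies: $F=(H,P^2;L)$ when $f$ is not locally a function of $P^2$ alone, and $F=(P^2;p^2,r,L)$ when $f=g(P^2)$. In the real analytic case every non-constant $f$ falls, on a dense open set, into one of these alternatives (and $H$ constant is trivially integrable), which yields the stated equivalence; in the $C^\infty$ case the sufficient hypothesis $\partial f/\partial(p^2,r)\neq 0$ a.e.\ is precisely what makes $F=(H,P^2;L)$ work.

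It then remains to verify the four sets. For $F=(H,P^2;L)$ with $H=f(p^2,r,P^2)$: the map $(P^2,L)$ has rank $2n-3$ a.e.\ (established before the statement), and $dp^2,dr$ are a.e.\ independent of $\sspan\{dP^2,dL\}$, so $dH=f_1\,dp^2+f_2\,dr+f_3\,dP^2$ contributes one further independent direction wherever $\partial f/\partial(p^2,r)\neq 0$, giving $2n-2$ functionally independent functions a.e.; $H$ and $P^2$ are central by $(\ref{rot})$, $(\ref{rotp})$ and $L\subset P$, while no $P_{ij}\in L$ is central because the $P_{ij}$ do not pairwise commute, so $k=2$, $|F|=2n-2$. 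For $F=(P^2;p^2,r,L)$: one checks that $(p^2,r,P^2,L)$ has rank $2n-1$ a.e.\ (using that $(r,p^2,x\cdot p,L)$ is independent a.e.\ and $P^2=r^2p^2-(x\cdot p)^2$); $P^2$ commutes with $p^2,r,L$ by the same equations, but $\{p^2,r\}=-2(x\cdot p)/r$ is not identically zero, so $p^2$ is not central; hence $k=1$, $|F|=2n-1$.

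For the Newton case $F=(H;P^2,L,A_1)$, $A_1=\sum_j P_{1j}p_j-\alpha x_1/r$, the one non-routine point is the classical computation that the Laplace--Runge--Lenz vector is conserved, i.e.\ that $\{p^2/2+U(r),\ \sum_j P_{ij}p_j+g(r)x_i\}$ vanishes identically exactly when $U(r)=-\alpha/r$ and $g(r)=-\alpha/r$; with $\{H,P\}=0$ this makes $H$ central, whereas $\{P^2,A_1\}$ is not identically zero and the $P_{ij}$ do not commute, so $P^2,L,A_1$ are non-central, $k=1$, and the $2n-1$ functions $(H,P^2,L,A_1)$ are checked to be independent at a generic point, so $|F|=2n-1$. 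For the isotropic oscillator $F=(H;H_1,\dots,H_{n-1},P_{12},\dots,P_{1n})$, $H_i=(p_i^2+x_i^2)/2$: from $H=\sum_i H_i$ and $\{H_i,H_j\}=0$ we get $\{H,H_i\}=0$, together with $\{H,P\}=0$, so $H$ is central; no $H_i$ or $P_{1j}$ is central, since e.g.\ $\{H_i,P_{1i}\}=x_1x_i+p_1p_i$ and $\{P_{1i},P_{1j}\}=-P_{ij}$ (both from $(\ref{p1})$--$(\ref{p2})$ and $(\ref{ppoiss})$), and the $2n-1$ differentials $dH_i,\,dP_{1j}$ are independent at a generic point, so $k=1$, $|F|=2n-1$. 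The bracket identities above are all bookkeeping against $(\ref{p1})$--$(\ref{p2})$, $(\ref{ppoiss})$, $(\ref{rot})$ and $(\ref{rotp})$; the only genuinely non-automatic steps, which I expect to be the main obstacle, are the selection $g(r)=-\alpha/r$ in the conservation of the Runge--Lenz vector and the almost-everywhere functional independence of the two enlarged sets, the latter settled by exhibiting one point where the relevant Jacobian has maximal rank and invoking that this is an open dense condition.
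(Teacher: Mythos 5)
Your proof is correct and follows essentially the same route as the paper: both arguments rest on Lemma \ref{40} for the two characterizations and then verify, set by set, the centrality relations and the almost-everywhere functional independence of the four candidate integrable sets. You in fact supply slightly more detail than the paper in two places it leaves terse --- the explicit bracket checks for the oscillator set $(H;H_1,\dots,H_{n-1},P_{12},\dots,P_{1n})$, which the paper defers to a later work, and the independence argument for $(P^2;p^2,r,L)$ via $P^2=r^2p^2-(x\cdot p)^2$ --- while omitting only the paper's side remark that $A^2=2P^2H+\alpha^2$ explains why a single component $A_1$ of the Runge--Lenz vector suffices.
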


\begin{proof} According to the
previous lemma, $\{H, P\}= 0$ if and only if $H= f(p^2, r, P^2)$.

If $\partial f/\partial (p^2, r) \neq 0$ almost everywhere, we can
repeat the proof given at the beginning of section \ref{ccff} for
the case $H= p^2/2 -U(r)$. By adding the function $H$ to the set
$\Pi = (P^2, L)$, we thus obtain a set of $2n-2$ functions which
defines almost everywhere a regular map and has two central
integrals, $H$ and $P^2$. Therefore the system is integrable with
$k=2$. If instead $f=f(P^2)$, then the set $F= (P^2;p^2, r, L)$ is
an integrable set with $k=1$.

For the Kepler system, $H= p^2/2 -\alpha/r$, it is straightforward
to verify that $\{H, A_i\}=0$, where
\[
A_i=\sum_{j=1}^n P_{ij}p_j- \alpha \frac{x_i}r= \left(
p^2-\frac{\alpha}r\right)x_i- (x\cdot p)p_i
\]
for $i=1, \dots, n$ \cite{Landau}. Since vector $A$ lies in the
plane of the orbit and satisfies $A^2= 2P^2 H+ \alpha^2$,
obviously only one component of $A$ is functionally independent of
the set $(H, P^2, L)$. Hence $F=(H; P^2, L, A_1)$ is an integrable
set with $k=1$. Of course, one has to keep in mind that the
potential $U(x)= -\alpha/r$ is a solution of the Laplace equation
$\sum_{i=1}^n \partial^2 U/ \partial x_i^2 =0$ only for $n=3$.

The hamiltonian $H= (p^2 +r^2)/2$ actually describes a set of
resonators with equal frequencies, and will be considered again in
a following paper.

In all the considered cases, the linear independence of the
differentials of the functions $F(H)$ can be checked directly by
considering the corresponding jacobian matrices. In the analytic
case, either $H=f(P^2)$, or $\partial f/\partial (p^2, r) \neq 0$
almost everywhere: therefore integrability is always guaranteed.
\end{proof}

\begin{rem}
In the previous proposition we have considered the potential
$\alpha/r$ for a particle in a space of arbitrary dimension $n$.
Of course one has to keep in mind that such a potential is a Green
function for the $n$-dimensional Laplace operator only for $n=3$.
\end{rem}

Let us consider the case $H= f(p^2, r, P^2)$, with $\partial
f/\partial (p^2, r) \neq 0$ almost everywhere. We have seen that
these systems are integrable with $k=2$. This means that the
typical invariant surface for the phase-flow of the system is a
two-dimensional torus. It is however possible to find for the same
systems also integrable sets with a larger number $k$ of central
integrals, up to the maximum possible number $k=n$ which
corresponds to standard Liouville integrability. In this way one
can construct a larger class of integrable systems, which includes
all systems whose hamiltonian is an arbitrary function of the
central elements of the set. In general, such systems will no
longer be invariant under the action of the whole group $SO(n)$,
but only of some subgroup of it.

For example, in the familiar case $n=3$, one can take $F=(H, P^2;
P_{12}, P_{13})$, with $k=2$, but also $F=(H, P^2, P_{12})$, with
$k=3$, or in general $F=(H, P^2, P_{a})$, where $P_a$ is the
momentum associated with any arbitrary element $a\in so(3)$. It
follows that any system with hamiltonian $K= g(p^2, r, P^2,
P_{a})$, where $g$ is a function such that $\partial g/\partial
(p^2, r) \neq 0$, is integrable with $k=3$. Of course any such
system is only invariant with respect to the one-parameter
subgroup of the rotations around the axis associated with $a$.
\begin{defn}\label{feq}
We say that two integrable sets are {\it functionally equivalent}
if the elements of one set are locally functions of the elements
of the other set.
\end{defn}
Owing to the arbitrariness in the choice of the element $a\in
so(3)$, we see that, for the system with hamiltonian $H= f(p^2, r,
P^2)$, there exist infinitely many functionally inequivalent
integrable sets with $k=3$.

Let us now consider the case of arbitrary $n$. Note first of all
that in the integrable set described in proposition \ref{cf1}, all
$n$ coordinates of configuration space are treated on the same
footing, since all possible choices of the $2n-4$ noncentral
elements of the set actually lead to functionally equivalent
integrable sets. More generally, under any transformation of the
group $SO(n)$, the set $F$ with $k=2$ is transformed into an
equivalent set. One can however construct other integrable sets in
the following way. One takes the function $P^2$ as central
element, and then splits the set of $n$ coordinates of
configuration space into two arbitrary disjoint subsets. One takes
as additional central elements the two functions, one for each of
these two subsets, which are obtained by summing the squares of
all the components of $P$ acting on the coordinates of the subset.
Then, for any of the two subsets of coordinates, one can proceed
in two alternative ways. Either one takes as integral functions a
suitable set $L'$ of $2n'-4$ momenta acting on the coordinates of
the subset, where $n'$ is the number of such coordinates, or one
splits again the subset into two arbitrary smaller disjoint
subsets, and repeats the procedure. If one wishes, one can
continue splitting the subsets into two parts, until one is left
with only subsets consisting of either one or two space
coordinates (the splitting of a set of two coordinates is
ineffective with respect to the resulting integrable set). Coming
back to the case $n=3$, we see that the integrable system $F=(H,
P^2, P_{12})$ considered above corresponds to the splitting of the
set of coordinates $(x_1, x_2, x_3)$ into the two subsets $(x_1,
x_2)$ and $(x_3)$.

The general procedure is described in a formal way by the
following proposition.
\begin{prop}\label{bamb}
For any $n\geq 2$ and for any $z=1, \dots, n-1$, it is possible to
construct in a recursive manner sets $Z_{n,z}$ and $L_{n,z}$ of
polynomial functions of degree $\leq 2$ in the variables
$P_n:=(P_{ij}, 1\leq i<j \leq n)$, with the following properties:
\begin{enumerate}
\item $Z_{n,z}$ contains $z$ elements,

\item $L_{n,z}$ contains $2(n-z-1)$ elements, all of degree 1 in
$P_n$,

\item the set $\Pi_{n,z}:= (Z_{n,z}, L_{n,z})$ is functionally
independent,

\item $\{Z_{n,z}, \Pi_{n,z}\}=0$.
\end{enumerate}
\end{prop}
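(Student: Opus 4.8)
The plan is to argue by induction on $z$, handling all admissible dimensions $n\geq z+1$ simultaneously and realising the recursion in the simplest possible way: by peeling off one coordinate at a time. I would also keep track, as part of the construction, of the fact that the first element of every central set $Z_{m,w}$ so produced is the squared total angular momentum $\sum_{i<j}P_{ij}^2$ of the corresponding block of coordinates. For the base case $z=1$ (any $n\geq 2$) I would take $Z_{n,1}=(P^2)$ with $P^2=\sum_{1\leq i<j\leq n}P_{ij}^2$ and $L_{n,1}=(P_{13},\dots,P_{1n},P_{23},\dots,P_{2n})$: property~1 is trivial, property~2 is the count $|L_{n,1}|=2(n-2)=2(n-z-1)$, property~4 reduces to $\{P^2,P^2\}=0$ together with~(\ref{rotp}), and property~3 is exactly the regularity of the map $\Pi=(P^2,L)$ established at the beginning of section~\ref{ccff}.

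For the inductive step, let $z\geq 2$ and $n\geq z+1$ (so $n\geq 3$), assume the claim for $z-1$ in every dimension $\geq z$, and single out the coordinate $x_1$. Applying the construction to the coordinates $(x_2,\dots,x_n)$ produces sets $Z'=Z_{n-1,z-1}$, $L'=L_{n-1,z-1}$ of polynomials of degree $\leq 2$ in the $(P_{ij})_{2\leq i<j\leq n}$ with properties~1--4, the first element of $Z'$ being $P'^2:=\sum_{2\leq i<j\leq n}P_{ij}^2$. I would then set
\[
Z_{n,z}:=(P^2,Z'),\qquad L_{n,z}:=L',\qquad \Pi_{n,z}=(P^2,\Pi_{n-1,z-1}),
\]
with $P^2=\sum_{1\leq i<j\leq n}P_{ij}^2$. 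Properties~1 and~2 follow from the inductive counts; property~4 holds because every element of $\Pi_{n,z}$ is a polynomial in the $P_{ij}$ and hence Poisson-commutes with $P^2$ by~(\ref{rotp}), while $\{Z',\Pi_{n-1,z-1}\}=0$ is the inductive hypothesis.

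The heart of the proof — and the only step I expect to need real work — is property~3, the functional independence of $\Pi_{n,z}=(P^2,\Pi_{n-1,z-1})$. The elements of $\Pi_{n-1,z-1}$ depend only on $(x_2,\dots,x_n,p_2,\dots,p_n)$ and are functionally independent almost everywhere by induction, so it suffices to check that at a generic point $dP^2$ is not a linear combination of their differentials. Writing $P^2=P'^2+C$ with $C:=\sum_{i=2}^{n}P_{1i}^2$, and recalling $P'^2\in\Pi_{n-1,z-1}$, it is enough to prove this for $dC$. From $dP_{1i}=p_i\,dx_1+x_1\,dp_i-p_1\,dx_i-x_i\,dp_1$ one gets that the $dx_1$- and $dp_1$-components of $dC=\sum_{i=2}^{n}2P_{1i}\,dP_{1i}$ equal $2(x_1 p'^2-p_1\,x'\cdot p')$ and $-2(x_1\,x'\cdot p'-p_1 r'^2)$ respectively, primes denoting restriction to the indices $2,\dots,n$; on the other hand every $d\phi$ with $\phi\in\Pi_{n-1,z-1}$ has vanishing $dx_1$- and $dp_1$-components. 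These two components of $dC$ do not both vanish on a dense open set, since for fixed $(x',p')$ with $P'^2>0$ they are a linear function of $(x_1,p_1)$ of nonzero determinant $-P'^2$. Hence $dP^2\notin\sspan\{d\phi:\phi\in\Pi_{n-1,z-1}\}$ on a dense open set, so the Jacobian of $\Pi_{n,z}$ has rank one more than that of $\Pi_{n-1,z-1}$, namely $2n-z-2=|\Pi_{n,z}|$; all functions involved being polynomial, this is functional independence in the required sense.

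Finally I would note that the more general scheme sketched before the proposition — splitting the coordinate set along a binary tree, attaching to each internal node the squared total angular momentum of its block and to each leaf a base-case set of momenta — produces further pairs $(Z_{n,z},L_{n,z})$ enjoying all four properties; the verification is identical, property~3 again reducing to the observation that each freshly adjoined $P^2_T$ exceeds the sum of its two children's angular momenta by a cross term involving mixed momenta $P_{ij}$ that none of the previously selected functions contains.
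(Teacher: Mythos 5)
Your proof is correct, and it is essentially the paper's construction specialised to the simplest splitting: peeling off a single coordinate at each step is exactly the case $n_1=1$, $z_1=0$, $n_2=m-1$, $z_2=z-1$ of the paper's general decomposition of the index set into two blocks, and your chain recursion does reach every pair $(n,z)$ with $1\leq z\leq n-1$, which is all the proposition asserts. Two points of comparison are worth recording. First, the paper keeps the full two-block splitting because it wants not just existence but the whole family of (generally functionally inequivalent) integrable sets listed in its tables — e.g.\ $(H,P^2,P_{12},P_{34})$ for $n=4$ arises from the partition $\{1,2\}\cup\{3,4\}$ and is not produced by the one-coordinate chain; your closing remark correctly observes that the same verification extends to the general binary tree, so nothing is lost. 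Second, where the paper disposes of properties 1--4 with ``it is easy to see,'' you actually prove the only nontrivial one, property 3: writing $P^2=P'^2+\sum_{i\geq 2}P_{1i}^2$ and computing the $dx_1$- and $dp_1$-components of the cross term to be $2(x_1p'^2-p_1\,x'\cdot p')$ and $-2(x_1\,x'\cdot p'-p_1r'^2)$, whose joint vanishing is excluded generically because the associated $2\times2$ form has determinant proportional to $P'^2$, cleanly shows that $dP^2$ leaves the span of $d\Pi_{n-1,z-1}$ on a dense open set. This is a genuine and welcome addition of rigour (the sign and the factor $4$ in the determinant are immaterial); the same cross-term argument is what one would need to make the paper's ``easy to see'' precise for a general split, as you indicate at the end.
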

Given any set $A$, it is useful to denote with $\sharp A$ the
number of its elements. Properties 1 and 2 can thus be written
$\sharp Z_{n,z}= z$ and $\sharp L_{n,z}= 2(n-z-1)$ respectively.
Property 4 means that any element of $Z_{n,z}$ is in involution
with all elements of the set $\Pi_{n,z}$.

\begin{proof}
According to proposition \ref{cf1}, for $z=1$ one can take
$Z_{n,1}= P_n^2$ and form $L_{n,1}$ by collecting $2n-4$ suitable
elements of $P_n$. For $n=2$ we can only have $z=1$, $Z_{2,1}=
(P_{12})$ and $L_{2,1}=\emptyset$. In order to construct sets
$Z_{n,z}$ and $L_{n,z}$, with $n>2$ and $2\leq z \leq n-1$, we
shall proceed by induction on $n$.

Let us take $m>2$ and suppose that, for all $n=2, \dots, m-1$, we
have constructed sets $Z_{n,z}$ and $L_{n,z}$ of polynomial
functions of degree $\leq 2$ in $P_n$, for all possible $z=1,
\dots, n-1$, satisfying properties 1--4 specified above. Let us
split the set of indexes $N_m:=(1, \dots, m)$ into two arbitrary
nonempty disjoint subsets $I_1$ and $I_2$, such that $\sharp I_1=
n_1$, $\sharp I_2= n_2$, $n_1+ n_2=m$ and $N_m= I_1 \cup I_2$. For
$k=1,2$, consider the two sets of momenta $P^{(k)}\subset P_m$,
with $P^{(k)} := (P_{ij},\ i,j \in I_k,\ i<j)$ if $1< n_k\leq m-
1$, and $P^{(k)} := \emptyset$ if $n_k=1$. This means that the
elements of $P^{(k)}$ are the generators of the orthogonal
transformations of the subspace $\rn^{n_k}\subset \rn^m$ having
set of coordinates $(x_i,\ i\in I_k)$. We have obviously
$\{P^{(1)}, P^{(2)}\}= 0$. If $1< n_k\leq m- 1$, consider for any
$z_k$, with $1\leq z_k \leq n_k-1$, the sets
$Z_{n_k,z_k}(P^{(k)})$ and $L_{n_k,z_k}(P^{(k)})$, which are
obtained from the set of polynomials $Z_{n_k,z_k}$ and
$L_{n_k,z_k}$ by replacing the variables $P_{n_k}:=(P_{ij},\ 1\leq
i<j \leq n_k)$ with $P^{(k)}$. If instead $n_k=1$, take $z_k=0$,
$Z_{1,0}(P^{(k)}):= \emptyset$ and $L_{1,0}(P^{(k)}):= \emptyset$.
We thus have in all cases $\sharp Z_{n_k,z_k}(P^{(k)})= z_k$ and
$\sharp L_{n_k, z_k}(P^{(k)})=2(n_k- z_k-1)$. Finally, take
\begin{align*}
z&=z_1 + z_2+1\,, \\
Z_{m, z}&= \big(P_m^2, Z_{n_1,z_1}(P^{(1)}), Z_{n_2,z_2}(P^{(2)})
\big)\,, \\
L_{m, z}&= \big(L_{n_1,z_1}(P^{(1)}), L_{n_2,z_2}(P^{(2)})
\big)\,,
\end{align*}
where $P_m^2:= \sum_{1\leq i<j \leq m} P_{ij}^2$. It is easy to
see that the sets $Z_{m, z}$ and $L_{m, z}$ satisfy properties
1--4 above for $n=m$. If $n_1=1$, $z_1=0$, $n_2=m-1$, $z_2=1$, we
obtain $z=2$. With any other choice of $n_k$ and $z_k$, $k=1,2$,
$z$ can assume any value from 3 to $m-1$.
\end{proof}

Let the hamiltonian of a system have the form $H= f(p^2, r, P^2)$,
with $\partial f/\partial (p^2, r) \neq 0$ almost everywhere. From
proposition \ref{bamb} it follows that the sets of functions
$F_{n,z}:= (H, Z_{n,z}; L_{n,z})$ are integrable sets, with subset
of central elements $(H, Z_{n,z})$, for all $z=1, \dots, n-1$. We
have $\sharp F_{n,z}= 2n-k$ and $k=z+1$. Hence the number $k$ of
central elements can take all values from $k=2$ to $k=n$.

It has to be noted that all integrable sets with $k>2$, obtained
by means of proposition \ref{bamb}, depend on the choice of a
cartesian set of coordinates on $\rn^n$. This means that, to any
such set of coordinates, it corresponds in general an inequivalent
integrable set. If one performs a transformation of $SO(n)$ on
configuration space, then a given integrable set is transformed
into an equivalent one only if the transformation leaves invariant
all the central functions of the set.

As an example of application of proposition \ref{bamb}, in the two
following tables we show explicitly some integrable sets which are
obtained in the two cases $n=4$ and $n=5$ respectively. In these
tables we use the notation $P^2_{(123)} := P_{12}^2 +P_{13}^2+
P_{23}^2$ and $P^2_{(1234)} := P_{12}^2 +P_{13}^2 +P_{14}^2+
P_{23}^2 +P_{24}^2 +P_{34}^2$.

\begin{table}[h]
\centering
\begin{tabular}{l|c}
$F$ & $k$ \\
\hline
$(H, P^2; P_{13}, P_{14}, P_{23}, P_{24})$ & 2\\
$(H, P^2, P^2_{(123)}; P_{12}, P_{13})$ & 3 \\
$(H, P^2, P^2_{(123)}, P_{12})$ & 4 \\
$(H, P^2, P_{12}, P_{34})$ & 4
\end{tabular}
\caption{Integrable sets for $H= f(p^2, r, P^2)$ and $n=4$.}
\end{table}
\begin{table}[h]
\centering
\begin{tabular}{l|c}
$F$ & $k$ \\
\hline
$(H, P^2; P_{13}, P_{14}, P_{15}, P_{23}, P_{24}, P_{25})$ & 2\\
$(H, P^2, P^2_{(1234)}; P_{13}, P_{14}, P_{23}, P_{24})$ & 3 \\
$(H, P^2, P^2_{(1234)}, P^2_{(123)}; P_{13}, P_{23})$ & 4 \\
$(H, P^2, P^2_{(1234)}, P^2_{(123)}, P_{12})$ & 5 \\
$(H, P^2, P^2_{(1234)}, P_{12}, P_{34})$ & 5 \\
$(H, P^2, P^2_{(123)}, P_{45}; P_{12}, P_{13})$ & 4 \\
$(H, P^2, P^2_{(123)}, P_{45}, P_{12})$ & 5
\end{tabular}
\caption{Integrable sets for $H= f(p^2, r, P^2)$ and $n=5$.}
\end{table}

\subsection{Quantum particle in a central force field}\label{qcff}

The hamiltonian operator $\hat H$ of this system is obtained from
the hamiltonian function (\ref{claplace}) by standard quantization
(see definition 
in \cite{part1}), which here simply consists in the substitution
$p \to \partial/\partial x$ and the the replacement of the
multiplication of functions with the composition of corresponding
operators, in symbols: $\times \to \circ$. We thus obtain
\begin{equation}\label{laplace}
\hat H= \frac 12 \hat p^2 +U(r) \,,
\end{equation}
where
\begin{equation*}
r= \sqrt {x^2}\,, \qquad  x= (x_1, \dots, x_n)\,, \qquad \hat p =
(\hat p_1, \dots, \hat p_n)\,, \qquad \hat p_i = \frac
\partial {\partial x_i}\,.
\end{equation*}
The operator $\hat p^2$ is the Laplace operator in cartesian
coordinates. We can proceed as for the classical system, and on
the basis of the classical formulas we will obtain the
corresponding formulas where functions are converted into
operators and Poisson brackets into Lie brackets. In a similar way
we will also verify the quasi-independence of operators. Let us
fix cartesian coordinates in $\mathbb{R}^n_x$. Let $(x,p)$ be the
corresponding canonical coordinates on $\mathbb{R}^{2n}_{xp} =T^*
\mathbb{R}^n_x$. Let us consider the standard set of operators
$(x, \hat p)$, where $\hat p=\partial /\partial x$. This set is
obtained by canonical quantization from the coordinates $(x,p)$.
Therefore, identifying $(x,p)$ and $(x, \hat p)$ with the sets $B$
and ${\cal B}$ respectively, one can study the algebra of the
polynomial functions of these operators by applying propositions
3.1, 4.1, and also remark 4.4 of \cite{part2}. This
remark is useful in order to deal with arbitrary functions of $r$.

Let us consider the operators $\hat P_{ij}$, $1 \leq i <j \leq n$,
obtained by symmetric quantization from the classical momenta
$P_{ij} = x_i p_j -x_j p_i$. Taking into account the canonical
commutation relations
\begin{equation} \label{ccr}
[x_i,x_j]= 0\,, \qquad [\hat p_i,\hat p_j]= 0\,, \qquad [\hat p_i,
x_j] = \delta_{ij}
\end{equation}
for $i,j= 1, \dots, n$, where $\delta_{ij}$ is the Kr\"onecker
symbol, we have that $\hat P_{ij} = x_i \hat p_j -x_j \hat p_i$,
i.e., these operators coincide with the standard quantization of
momenta $P_{ij}$. From the quadratic dependence of $P_{ij}$ on
$(x,p)$, and from proposition
3.1 of \cite{part2} (case 1), it follows that the commutation
relations among the operators $\hat P_{ij}$ have the same form as
the Poisson brackets (\ref{ppoiss}) among the corresponding
classical functions:
\begin{equation} \label{pcomm}
[\hat P_{ij}, \hat P_{hk}] = -\delta_{ih} \hat P_{jk} -
\delta_{jk} \hat P_{ih} + \delta_{ik} \hat P_{jh} + \delta_{jh}
\hat P_{ik} \,.
\end{equation}
Similarly, from (\ref{rot}) and proposition 
4.1 of \cite{part2} it follows that
\begin{equation}\label{rot2}
[r^2,\hat P]=0\,,\qquad [\hat p^2,\hat P] =0\,.
\end{equation}
It is also easy to verify that $[U(r),\hat P]=0$ for any function
$U$, in accordance with the first of (\ref{rot2}) and with remark
4.4 of \cite{part2}. We thus conclude that $[\hat H, \hat P]=0$.
Note that $\{p^2, r^2\} = 4x \cdot p \neq 0$ and correspondingly
$[\hat p^2, r^2] = 2(x \cdot \hat p +\hat p \cdot x)= 4x \cdot
\hat p +2n\neq 0$.

Using proposition 
2.5 of \cite{part2}, it is easy to check that the operator
$(P^2)^{\rm sym}$, obtained by symmetrization with respect to
$(x,\hat p)$ of the square length $P^2$ of momentum $P$, coincides
with the operator $\hat P^2= \sum_{i<j} \hat P_{ij}^2$ up to an
additive constant. We have in fact $\hat P^2= (P^2)^{\rm sym}+
n(n-1)/4$. Since the additive constant $n(n-1)/4$ is irrelevant
for Lie brackets, from classical relations
(\ref{rotp}) and from proposition 
4.1 (case b) of \cite{part2} we obtain
\begin{equation}\label{rotp2}
[\hat P^2, \hat P_{ij}]= [(P^2)^{\rm sym}, \hat P_{ij}] =0\,.
\end{equation}

The quasi-independence of the set $\hat{\Pi}= (\hat P^2, \hat L)$
of $2n-3$ operators follows from the functional independence of
the corresponding set of symbols $\Pi= (P^2, L)$, and from the
homogeneity of these functions with respect to $p$. In fact, the
latter property implies that these functions coincide with their
respective main parts. Let us consider the set $\hat F = (\hat H,
\hat P^2; \hat L)$, where $\hat H$ is given by (\ref{laplace}). If
$U(r)\equiv 0$, it is easy to check that the set of corresponding
symbols $F= (H, P^2; L)$ is functionally independent. Since these
functions are homogeneous with respect to $p$, these functions are
quasi-independent. Furthermore, since the main part of $H$ does
not depend on $U$, the property of quasi-independence is true for
arbitrary $U$. Therefore, the quantum system with hamiltonian
$\hat H$ is quasi-integrable with integrable set $\hat F = (\hat
H, \hat P^2; \hat L)$ and $k=2$ central operators, $\hat H$ and
$\hat P^2$.

It is possible to give a partial characterization of integrable
quantum systems which are invariant with respect to the action of
the group $G=SO(n)$ on $\mathbb{R}^{2n}_{xp}$, i.e., the systems
whose hamiltonian operator $\hat H$ commutes with vector operator
$\hat P$.

\begin{prop} \label{cf2}
If the hamiltonian operator $\hat H$ of a system has the form:
$\hat H=f(\hat p^2, \hat P^2, g_1(r), \dots, g_l(r))$, where the
function $f$ is an arbitrary noncommutative polynomial in the
$l+2$ variables $(\hat p^2, \hat P^2, g_1(r), \dots, g_l(r))$, and
$g_1(r), \dots, g_l(r)$ are arbitrary functions of $r$, then
$[\hat H,\hat P]=0$.

Viceversa, let $\hat H$ be an arbitrary operator of class ${\cal
O}$ on $K = \rn^n \setminus \{0\}$, such that $[\hat H, \hat
P]=0$. Then the symbol of its main part $M\hat H$ with respect to
linear momenta $\hat p$ (see 
definitions in \cite{part1}) has the form $MH= g(p^2, r, P^2)$,
where $g$ is a homogeneous polynomial in the two variables $(p^2,
P^2)$, whose coefficients are arbitrary functions of $r$ defined
for all $r>0$. If the polynomial $g$ satisfies the condition
$\partial g/\partial (p^2, r) \neq 0$ almost everywhere, then the
systems is quasi-integrable with $k=2$, with central integrals
$\hat H$ and $P^2$. In certain cases it is possible to find an
additional integral, and to have quasi-integrability with $k=1$.
This is the case for example for the Newton potential, that is for
$\hat H= \hat p^2/2 -\alpha/r$, and for identical uncoupled
oscillators, that is for $\hat H= (\hat p^2 +r^2)/2$. The system
with hamiltonian function $\hat H$ which is a function only of the
square angular momentum, more exactly $\hat H = f(\hat P^2)$,
where $df\neq 0$ almost everywhere, is quasi-integrable with
$k=1$.

Let us present the integrable sets of operators $F= F(\hat H)$
which correspond to integrable quantum systems with these
hamiltonian operators. For $\hat H=f(\hat p^2, \hat P^2, g_1(r),
\dots, g_l(r))$, where the function $f$ is not locally
functionally dependent only on $\hat P^2$, one can take $F= (\hat
H, \hat P^2; \hat L)$, and therefore $k=2$. For $\hat H= \hat
p^2/2 -\alpha/r$ one can take $F= (\hat H; \hat P^2, \hat L, \hat
A_1)$, where $\hat A_1= \sum_{j=2}^n (\hat P_{1j}\hat p_j+ \hat
p_j \hat P_{1j})/2 - \alpha x_1/r$. For $\hat H= (\hat p^2
+r^2)/2$ one can take $F= (\hat H;\hat H_1, \hat H_2, \dots, \hat
H_{n-1},$ $\hat P_{12}, \hat P_{13}, \dots, \hat P_{1n})$, where
$\hat H_i = \frac 12 (\hat p_i^2 +x_i^2)$. For $\hat H = f(\hat
P^2)$ one can take $F= (\hat P^2;\hat p^2, r, \hat L)$. In all
cases except the first one, we have $k=1$.
\end{prop}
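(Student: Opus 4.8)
The plan is to establish the four assertions of the proposition in order: (a) the sufficiency of the stated form of $\hat H$ for $[\hat H,\hat P]=0$; (b) the characterization of the symbol $MH$ of the main part $M\hat H$ in the converse direction; (c) quasi-integrability with $k=2$; (d) the three special cases with $k=1$ and the corresponding explicit integrable sets. For (a), the key remark is that, for each $i<j$, the map $\hat X\mapsto[\hat X,\hat P_{ij}]$ is a derivation of the associative algebra of operators, so it is enough to check that it kills the building blocks of $\hat H$. But $[\hat p^2,\hat P]=0$ and $[\hat P^2,\hat P]=0$ are exactly (\ref{rot2}) and (\ref{rotp2}), and $[g(r),\hat P]=0$ for any function $g$ of $r$ was already noted in section \ref{qcff} (via remark 4.4 of \cite{part2}); hence, $\hat H=f(\hat p^2,\hat P^2,g_1(r),\dots,g_l(r))$ being obtained from these generators by sums and noncommutative products, the Leibniz rule gives $[\hat H,\hat P]=0$.

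For (b), let $\hat H$ be of class ${\cal O}$ with $[\hat H,\hat P]=0$ and pass to main parts with respect to $\hat p$. Each $\hat P_{ij}$ is homogeneous of order one in $\hat p$, hence coincides with its own main part, so the standard rule identifying the leading symbol of a commutator with the Poisson bracket of the leading symbols shows that $\{MH,P_{ij}\}$, if it were nonzero, would be the symbol of the nonvanishing main part of $[\hat H,\hat P_{ij}]=0$; therefore $\{MH,P\}=0$. By Lemma \ref{40}, $MH=f(p^2,r,P^2)$ locally. Now $MH$ is the symbol of the top-order part of a differential operator of class ${\cal O}$, so it is a polynomial in $p$, homogeneous of some degree and $SO(n)$-invariant in $(x,p)$; by the first fundamental theorem of invariant theory it is then a polynomial in $p^2$, $x\cdot p$ and $r$, and combining the homogeneity in $p$ with the local form from Lemma \ref{40} and the identity $(x\cdot p)^2=r^2p^2-P^2$ forces $MH=g(p^2,r,P^2)$ with $g$ a homogeneous polynomial in $(p^2,P^2)$ whose coefficients are functions of $r$ defined for all $r>0$. (Making this passage to main parts and the globalization of the coefficients fully rigorous is where one must use the precise definitions of the class ${\cal O}$ and of $M\hat H$ from \cite{part1}.)

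For (c), assume $\partial g/\partial(p^2,r)\neq 0$ almost everywhere and take $\hat F=(\hat H,\hat P^2;\hat L)$. The central pair $(\hat H,\hat P^2)$ commutes with the whole set: $[\hat P^2,\hat L]=0$ by (\ref{rotp2}), $[\hat H,\hat L]=0$ by (a), and $[\hat H,\hat P^2]=0$ because $[\hat p^2,\hat P^2]=[g_k(r),\hat P^2]=0$. For quasi-independence, the symbols of the main parts of the three groups of operators are $g(p^2,r,P^2)$, $P^2$ and $L$, and these are homogeneous in $p$, hence coincide with their own main parts; so quasi-independence of $\hat F$ is the almost-everywhere functional independence of $(g,P^2,L)$, which follows from the hypothesis on $g$ by the very Jacobian computation carried out classically at the start of section \ref{ccff}. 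This yields $k=2$, and it covers the first set in the list since "$f$ not locally a function of $\hat P^2$ alone" is precisely $\partial g/\partial(p^2,r)\neq 0$ a.e. For the isotropic oscillator $\hat H=(\hat p^2+r^2)/2=\sum_i\hat H_i$ with $\hat H_i=\frac12(\hat p_i^2+x_i^2)$, the operators $\hat H_i$ commute among themselves (they involve disjoint coordinates) and with $\hat H$, and $[\hat H,\hat P_{1j}]=0$ by (\ref{rot2}); the main symbols $p^2/2,\ p_1^2/2,\dots,p_{n-1}^2/2,\ P_{12},\dots,P_{1n}$ are functionally independent almost everywhere by a direct Jacobian check, so $\hat F=(\hat H;\hat H_1,\dots,\hat H_{n-1},\hat P_{12},\dots,\hat P_{1n})$ is quasi-integrable with $k=1$. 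For $\hat H=f(\hat P^2)$, the operator $\hat P^2$ commutes with $\hat p^2$, with $r$ and with $\hat L$ by (\ref{rot2}) and (\ref{rotp2}), hence with $\hat H$, and the main symbols $P^2,p^2,r,L$ are functionally independent a.e., so $\hat F=(\hat P^2;\hat p^2,r,\hat L)$ is quasi-integrable with $k=1$.

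The Kepler case $\hat H=\hat p^2/2-\alpha/r$ is what I expect to be the main obstacle. One has to verify $[\hat H,\hat A_i]=0$ for $\hat A_i=\frac12\sum_j(\hat P_{ij}\hat p_j+\hat p_j\hat P_{ij})-\alpha x_i/r$. Since $[\hat p^2,\hat P_{ij}]=0$, the only contributions that survive are $[\hat p^2,x_i/r]$ and $[1/r,\sum_j(\hat P_{ij}\hat p_j+\hat p_j\hat P_{ij})]$, and one must check that these cancel; here the symmetric ordering is genuinely needed, because $[\hat P_{ij},\hat p_j]=\hat p_i\neq 0$ makes the symmetrized $\hat A_i$ differ from the naive expression $\sum_j\hat P_{ij}\hat p_j-\alpha x_i/r$ by the term $-\frac{n-1}{2}\hat p_i$, which is exactly what makes the dimension-dependent pieces match — alternatively one may deduce the vanishing from propositions 3.1 and 4.1 and remark 4.4 of \cite{part2} on the symmetric quantization of functions quadratic in the momenta with $r$-dependent coefficients, starting from the classical identity $\{H,A_i\}=0$ (see Proposition \ref{cf1} and \cite{Landau}). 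Finally, quasi-independence of $(\hat H;\hat P^2,\hat L,\hat A_1)$ reduces to the almost-everywhere functional independence of the main symbols $p^2/2$, $P^2$, $L$, $\sum_j P_{1j}p_j$; since $\sum_i(\sum_j P_{ij}p_j)^2=p^2P^2$ and the vector with components $\sum_j P_{ij}p_j$ lies in the plane $\sspan(x,p)$, only one of its components is independent of $(p^2,P^2,L)$, and the resulting $2n-1$ functions are independent a.e. by a direct Jacobian check, giving $k=1$. The secondary difficulty, as already mentioned, is making the main-part arguments of step (b) fully rigorous via \cite{part1}.
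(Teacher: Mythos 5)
Your proof is correct and follows essentially the same route as the paper: the forward direction from the commutation relations (\ref{rot2}) and (\ref{rotp2}), the converse by passing to main parts (the paper's citation of lemma 3.24 of \cite{part1} is exactly your symbol-of-a-commutator step) and then applying Lemma \ref{40} plus homogeneity, quasi-independence deduced from functional independence of the homogeneous main symbols, and the Kepler case handled via the symmetrized quantization $\hat A_i=\sum_j \hat P_{ij}\diamond\hat p_j-\alpha x_i/r$ using the quantization results of \cite{part2}. The only cosmetic differences are your (redundant) appeal to invariant theory in step (b) and your use of the classical identity $\sum_i\bigl(\sum_j P_{ij}p_j\bigr)^2=p^2P^2$ where the paper records the operator identity $\hat A^2=2\hat H\bigl[\hat P^2-\bigl(\tfrac{n-1}2\bigr)^2\bigr]+\alpha^2$ to count the independent components of $\hat A$.
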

Since $x \cdot\hat p= (\hat p^2 r^2- r^2 \hat p^2)/4 - n/2$, this
proposition implies in particular that $[x \cdot\hat p, \hat
P]=0$. Note also the relation $\hat P^2= r^2 \hat p^2- (x
\cdot\hat p)^2 -(n-2)x \cdot\hat p$, which can for instance be
easily verified using proposition 
3.2 of \cite{part1}.

\begin{proof} Let the operator $\hat H$ of class
${\cal O}_K$ have the form $\hat H =f(\hat p^2, \hat P^2, g_1(r),$
$\dots, g_l(r))$, where the functions $f,g_1, \dots, g_l$ have the
properties specified in the proposition. Then the relation $[\hat
H, \hat P]=0$ follows from (\ref{rot2}) and (\ref{rotp2}).

Viceversa, let $\hat H$ be an operator of class ${\cal O}$ such
that $[\hat H,\hat P]=0$. Then lemma 
3.24 of \cite{part1} implies that $\{M H, P\}=0$, where $H$ and
$P$ are the symbols of $\hat H$ and $\hat P$ respectively. Using
lemma \ref{40} we thus obtain that $M H= g(p^2, r, P^2)$, where
$g$ is an arbitrary function of three variables. Furthermore,
since $p^2$ and $P^2$ are both homogeneous polynomials of order 2
in $p$, taking into account the definition of main part we obtain
that $g$ is a homogeneous polynomial in the two variables $(p^2,
P^2)$, whose coefficients are arbitrary functions of $r$ defined
for all $r>0$.

The proof of the remaining statements is similar to the proof of
the corresponding statements of proposition \ref{cf1}. Let us
consider, in particular, the hamiltonian $\hat H= \hat p^2/2
-\alpha/r$ of the quantum Kepler system \cite{Landau2}. Using
(\ref{rot2}),
together with proposition 
2.1 and lemma 2.2 of \cite{part2}, it is easy to verify that
$[\hat H, \hat A_i]=0$, where
\[
\hat A_i=\sum_{j=1}^n \hat P_{ij}\diamond \hat p_j- \alpha
\frac{x_i}r \,, \qquad i=1, \dots, n\,.
\]
We have used above the symbol $\diamond$ to denote symmetrized
products, as in \cite{part2}.
Only one component of $\hat A$ is quasi-independent of the set
$(\hat H, \hat P^2, \hat L)$. Note that we have in this case
\[
\hat A^2= 2\hat H\left[\hat P^2 -\left(\frac{n-1} 2\right)^2
\right] +\alpha^2\,.
\]

Finally, it is easy to check that the main parts of the operators
of the sets considered in the last part of the proposition are
functionally independent.
\end{proof}

The commutation relations between operators, from which the
integrability of the considered sets of operators has been
established, have been derived exploiting the quadratic dependence
of classical momenta $P_{ij}$ on the canonical variables $(x,p)$.
Let us now present an alternative proof of these relations, which
is only based on the linear dependence of these momenta on
impulses $p$. We shall consider the quantization of an arbitrary
vector field on configuration space $K$, more exactly, the
quantization of the hamiltonian function on $T^*K$ which
corresponds to the lifting of this field on $T^*K$. These
considerations are useful for the investigation of any linear
operator which is invariant with respect to the phase flows of
such vector fields on $K$, independently of the assumption that
these vector fields be linear.

Let $P=(P_1, \dots, P_l)$ be a set of functions on the symplectic
manifold $M=T^*K$, which are linear with respect to $p$:
\begin{equation} \label{linp}
P_i= v^0_i(x) + \langle p, v_i(x) \rangle \,,
\end{equation}
$P_i:T^*K \to \rn$. Such functions, in local coordinates $(x,p)$
induced by local coordinates $x$ on $K$, have the form $P_i=
P_i(x,p)= v^0_i(x)+ \sum_{k=1}^n v_i^k(x) p_k$. Let us suppose
that the linear combinations of these functions with constant
coefficients form a Lie algebra $\galg$ with Poisson brackets in
the role of commutators, and that the functions of the set $P$
form a basis of this algebra. Let us consider the set of operators
$\hat P=(\hat P_1, \dots, \hat P_l)$ obtained by standard
quantization from the set of functions $P$, i.e.,
\begin{equation} \label{diffp}
\hat P_i:= v^0_i(x)+ \sum_{k=1}^n v_i^k(x) \frac
\partial {\partial x_k}\,, \qquad i=1, \dots, l\,.
\end{equation}
Then the linear combinations of these operators form also a Lie
algebra with respect to the usual commutator of linear operators.
Moreover, let us consider the linear map, from the original Lie
algebra of functions on $M$ to the Lie algebra of operators, which
is defined by the correspondence of sets $P \to \hat P$, obtained
by standard quantization. This map is a isomorphism between these
two Lie algebras, i.e., the linear map preserves commutators.

This fact is an obvious consequence of the following more general
proposition. Let us consider the Lie algebra ${\cal V}= {\rm
Vect}_K (K\times \rn)$ of all vector fields defined on the direct
product $K\times \rn \ni (x,u)$, which do not depend on $u$. The
commutator in this algebra is the usual Lie bracket of vector
fields. In local coordinates $(x,u)$, $x= (x_1, \dots, x_n)$, a
vector field $V\in {\cal V}$ on the $(n+1)$-dimensional manifold
$K\times \rn$ has the form $\dot x= v(x)$, $\dot u=v^0 (x)$. Let
us consider also the algebra ${\cal F}$ of all functions $P$ on
the symplectic manifold $M=T^*K$ which are linear with respect to
the impulse $p$, i.e., functions of the form (\ref{linp}). Let us
also consider the Lie algebra ${\cal O}$ of all linear
nonhomogeneous differential operators on $K$, i.e., operators of
the form (\ref{diffp}).
\begin{prop} \label{isom}
There are canonical isomorphisms between these three Lie algebras
${\cal V}, {\cal F}, {\cal O}$. In local coordinates on $K$, the
coefficients $v_0(x), v_1(x), \dots, v_n(x)$ defining the elements
of these algebras are conserved under these isomorphisms.

Let $V$ and $W$ be two vector fields of class ${\cal V}= {\rm
Vect}_K (K\times \rn)$. Let us indicate the functions and
operators, associated with these fields, as $P_V, P_{W}$ and $\hat
P_{V}, \hat P_{W}$ correspondingly. Then the conservation of
commutators of these algebras under the considered isomorphisms
can be written in the form
\[
\{P_{V}, P_{W}\} = P_{[{V},{W}]} \,,\qquad [\hat P_{V}, \hat
P_{W}] =\hat P_{[{V}, {W}]} \,.
\]

The hamiltonian vector field $X_P$, defined by the hamiltonian
function $P\in {\cal F}$, can be lowered by natural projection
$\pi: T^*K \to K$. This means that $\pi_* (X_P(m)) \in T_x K$ does
not depend on the choice of the point $m \in \pi^{-1} (x)$, where
$X_P(m)$ is the vector field $X_P$ at point $m$, $\pi_*: TM \to
TK$ is the derivative of the map $\pi$, and $T_x K$ is the tangent
space to $K$ at point $x$. Suppose that $P=P_V$, where $V\in {\cal
V}$. Since the elements of $\cal V$ do not depend on $u$, the
vector field $V$ can be lowered onto configuration space $K$ via
the natural projection $K\times \rn_u \to K$. These two vector
fields, obtained by projection on $K$ from $P_V$ and $V$
respectively, are coincident.
\end{prop}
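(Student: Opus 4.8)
The plan is to exhibit all three algebras as incarnations of one and the same object: an $(n+1)$-tuple of functions $v_0(x),v_1(x),\dots,v_n(x)$ on $K$, read off in a local chart $x=(x_1,\dots,x_n)$, to which one associates the $u$-independent field $\dot x_k=v_k(x)$, $\dot u=v_0(x)$ in ${\cal V}$, the function $P=v_0(x)+\sum_k v_k(x)p_k$ in ${\cal F}$, and the operator $\hat P=v_0(x)+\sum_k v_k(x)\,\partial/\partial x_k$ in ${\cal O}$. First I would take these coefficient-matching correspondences as the candidate isomorphisms and check that they are independent of the chart: under a change of coordinates $x\mapsto y(x)$ (leaving $u$ untouched in $K\times\rn$, and with the induced cotangent lift $p_k=\sum_j q_j\,\partial y_j/\partial x_k$ on $T^*K$), the $n$-tuple $(v_1,\dots,v_n)$ transforms in each of the three pictures as the components of a contravariant vector field, $\tilde v_j=\sum_k v_k\,\partial y_j/\partial x_k$, while $v_0$ is simply carried to the same function expressed in the new coordinates. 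This establishes that the three identifications are canonical, proves the conservation of the coefficients, and exhibits $(v_1,\dots,v_n)$ as the common symbol.

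The main step is preservation of brackets, which I would organize around the observation that each of the three algebras is the semidirect product $C^\infty(K)\rtimes{\rm Vect}(K)$: the elements with $v_1=\dots=v_n=0$ form an abelian ideal $\cong C^\infty(K)$; the quotient is ${\rm Vect}(K)$, realized by the symbol part $v=\sum_k v_k\,\partial/\partial x_k$; the bracket of two symbol parts is again a symbol part, equal to the ordinary bracket $[v,w]$ of vector fields on $K$; and ${\rm Vect}(K)$ acts on the ideal by the directional derivative $v_0\mapsto v(v_0)$. Since the coefficient-matching maps respect all of this structure, they are Lie-algebra isomorphisms. Explicitly, for $V\leftrightarrow(v_0;v)$ and $W\leftrightarrow(w_0;w)$,
\[
[V,W]\ \leftrightarrow\ \big(\ v(w_0)-w(v_0)\ ;\ [v,w]\ \big),
\]
and one checks directly that, under the three identifications, the right-hand side equals (a) the Lie bracket $[V,W]$ in ${\cal V}$, (b) the commutator $[\hat P_V,\hat P_W]$ in ${\cal O}$ -- the only not-quite-trivial contributions being $[v_0,w]=-w(v_0)$ and $[v,w_0]=v(w_0)$, computed with $v_0,w_0$ acting as multiplication operators -- and (c), with the Poisson bracket normalized as in Section~\ref{ccff} so that $\{P_a,P_b\}=P_{[a,b]}$, the Poisson bracket $\{P_V,P_W\}$ in ${\cal F}$ (which is again linear in $p$, with the same coefficients). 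This is the content of the two displayed identities. For (b) a shortcut is available: the embedding $\iota(\phi)=e^u\phi$ of $C^\infty(K)$ into $C^\infty(K\times\rn)$ satisfies $V(\iota\phi)=\iota(\hat P_V\phi)$ for every $V\in{\cal V}$, so $\hat P_V=\iota^{-1}\circ V\circ\iota$ on the invariant subspace ${\rm im}\,\iota$; since passing from a vector field to the derivation it defines is a Lie homomorphism, conjugation by $\iota$ makes bracket preservation immediate. An analogous shortcut for (c) uses Hamiltonian vector fields on $T^*K$, cotangent lifting being a Lie homomorphism and functions of $x$ alone Poisson-commuting.

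For the last paragraph of the statement I would compute, for $P=P_V=v_0(x)+\sum_k v_k(x)p_k$, the Hamiltonian vector field $X_P=\sum_k v_k(x)\,\partial/\partial x_k-\sum_k\big(\partial v_0/\partial x_k+\sum_j(\partial v_j/\partial x_k)\,p_j\big)\,\partial/\partial p_k$ and observe that its horizontal component $\sum_k v_k(x)\,\partial/\partial x_k$ does not depend on $p$. Hence $\pi_*(X_P(m))$ depends only on $x=\pi(m)$, so $X_P$ projects to the field $\sum_k v_k(x)\,\partial/\partial x_k$ on $K$; and this is precisely the field one gets from $V$ by dropping its $\partial/\partial u$ component, i.e.\ by projecting along $K\times\rn_u\to K$. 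So the two projected fields coincide.

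I do not expect a real obstacle -- the statement is essentially a careful repackaging of standard facts about the cotangent lift and about first-order operators -- but two points deserve honest attention: the chart-independence of the three correspondences, and in particular the fact that the inhomogeneous term $v_0$ is a genuine scalar (not part of the symbol) in each picture, which is what makes the semidirect-product splitting intrinsic; and pinning down the Poisson-bracket and Hamiltonian-vector-field sign conventions so that everything remains compatible with the relations already used in the paper.
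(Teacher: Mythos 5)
Your proof is correct and follows the same route the paper takes, namely direct verification in local coordinates that the coefficient-matching maps preserve brackets and that the projected fields coincide; the paper's own proof merely asserts this can be "easily checked by direct computation," whereas you actually supply the computation. The semidirect-product decomposition $C^\infty(K)\rtimes{\rm Vect}(K)$ and the conjugation by $\phi\mapsto e^u\phi$ are nice organizing devices but do not change the substance of the argument.
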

\begin{proof}
Both statements of this proposition, about the correspondence of
commutators of the three Lie algebras and about the coincidence of
the projections on $K$ of the two vector fields, can be easily
checked by direct computation in local coordinates. In general,
these statements are reformulations of simple well-known facts.
\end{proof}

Since classical momenta $P_{ij}(p,x)= x_i p_j -x_j p_i$ are
linearly dependent on classical impulses $p$, we can use the above
proposition to deduce the commutation relations (\ref{pcomm}) for
the operators $\hat P_{ij}$ from the corresponding classical
relations (\ref{ppoiss}).

Relations (\ref{ppoiss}) show that the set of momenta $P_{ij}$,
$1\leq i <j \leq n$, is a basis of the Lie algebra $\galg =so(n)$,
which is isomorphic to the Lie algebra of all skew-symmetric
matrices. A natural basis in this Lie algebra is formed by
matrices $D^{ij}$, $1\leq i <j \leq n$, defined by formula
(\ref{aij}). The correspondence $D^{ij} \mapsto P_{ij}$ is
extended to linear combinations of matrices $D^{ij}$ and functions
$P_{ij}$ as an isomorphism of Lie algebras. The basis $P=(P_{ij},
1\leq i <j \leq n)$ of Lie algebra $\galg$ induces a dual set of
coordinates on the co-algebra $\galg^*$. It is well-known that the
function $P^2:\galg^* \to \rn$, $P^2 := \sum_{1\leq i <j \leq n}
P_{ij}^2$, is an invariant of the co-adjoint representation of
$SO(n)$ on the co-algebra $\galg^*$, where $\galg = so(n)$. From
corollary 
4.4 of \cite{part2}, it follows that $[(P^2)^{\rm sym}_{\hat P},
\hat P] =0$, where $(P^2)^{\rm sym}_{\hat P}$ denotes the
symmetrization with respect to $\hat P$ of polynomial $P^2$. But
obviously $(P^2)^{\rm sym}_{\hat P}= \sum_{1\leq i <j \leq n} \hat
P_{ij}^2= \hat P^2$. We thus conclude that $[\hat P^2, \hat P]=
0$.

Let us prove now that $[r^2, \hat P]=[\hat p^2, \hat P]= 0$. Let
us consider the set $B= (1, x, p, P)$ of $l:= n(n-1)/2 +2n +1$
functions on $M^{2n}= T^*K$. From Poisson brackets relations
(\ref{ppoiss}), (\ref{p1}) and (\ref{p2}) it follows that this set
of functions is a basis in a $l$-dimensional Lie algebra. The
functions of set $B$ are linear nonhomogeneous functions of $p$.
Therefore proposition \ref{isom} implies that analogous
commutation relations hold for the operators $\bop =(1, x, \hat p,
\hat P)$ obtained from the functions of set $B$ by standard
quantization. We have in particular
\begin{align}
[x_i, \hat P_{jk}] &= \delta_{ij}x_k - \delta_{ik}x_j\,, \\
[\hat p_i, \hat P_{jk}] &= \delta_{ij} \hat p_k - \delta_{ik}\hat
p_j\,.
\end{align}
Then relations (\ref{rot2}) can be derived from (\ref{rot}) using
proposition 
4.2, case b, of \cite{part2}.

The following proposition is the quantum equivalent of proposition
\ref{bamb}.

\begin{prop}\label{bamb2}
For any $n\geq 2$ and for any $z=1, \dots, n-1$, it is possible to
construct in a recursive manner sets $\hat Z_{n,z}$ and $\hat
L_{n,z}$ of polynomial functions of degree $\leq 2$ in the
variables $\hat P_n:=(\hat P_{ij}, 1\leq i<j \leq n)$, with the
following properties:
\begin{enumerate}
\item $\hat Z_{n,z}$ contains $z$ elements,

\item $\hat L_{n,z}$ contains $2(n-z-1)$ elements, all of degree 1
in $\hat P_n$,

\item the set $\hat \Pi_{n,z}:= (\hat Z_{n,z}, \hat L_{n,z})$ is
quasi-independent,

\item $[\hat Z_{n,z}, \hat \Pi_{n,z}]=0$.
\end{enumerate}
\end{prop}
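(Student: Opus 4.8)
The plan is to mirror exactly the inductive construction of Proposition \ref{bamb}, transporting each classical step to the quantum level via the structural results of \cite{part2}. First I would observe that the sets $\hat Z_{n,z}$ and $\hat L_{n,z}$ are to be defined by the \emph{same} recursion as $Z_{n,z}$ and $L_{n,z}$: one replaces each classical polynomial $P_{ij}$ in the formulas by the operator $\hat P_{ij}$, and each quadratic form $P^2_{(i_1\dots i_m)}$ by its symmetrization $(P^2_{(i_1\dots i_m)})^{\rm sym}_{\hat P}$ with respect to the $\hat P$'s, which by the computation already carried out for $\hat P^2$ differs from the naive sum of squares of operators only by an additive constant. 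Since the $\hat P_{ij}$ satisfy the commutation relations (\ref{pcomm}) that are formally identical to the Poisson brackets (\ref{ppoiss}), and since the construction uses only polynomial operations in these variables plus the splitting of the index set, the recursion makes sense verbatim at the operator level; properties 1 and 2 (the cardinalities $\sharp\hat Z_{n,z}=z$ and $\sharp\hat L_{n,z}=2(n-z-1)$) are then immediate, because the number of elements produced at each step of the recursion does not depend on whether one works with functions or operators.

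Next I would prove property 4, $[\hat Z_{n,z},\hat\Pi_{n,z}]=0$, by induction on $n$, exactly paralleling the proof of property 4 in Proposition \ref{bamb}. The base case $z=1$ is already done: $\hat Z_{n,1}=\hat P_n^2$ and $[\hat P^2,\hat P_{ij}]=0$ by (\ref{rotp2}), i.e.\ by corollary 4.4 of \cite{part2} applied to the Casimir $P^2$ of $so(n)$; and $\hat Z_{2,1}=(\hat P_{12})$ trivially commutes with itself. For the inductive step one splits $N_m=I_1\cup I_2$ and forms $\hat P^{(k)}=(\hat P_{ij},\ i,j\in I_k,\ i<j)$, which generate commuting copies of $so(n_k)$; the sub-blocks $\hat Z_{n_k,z_k}(\hat P^{(k)})$, $\hat L_{n_k,z_k}(\hat P^{(k)})$ inherit property 4 by the inductive hypothesis, the cross-brackets between the two blocks vanish because $[\hat P^{(1)},\hat P^{(2)}]=0$, and $\hat P_m^2$ commutes with everything in $\hat\Pi_{m,z}$ because it is the Casimir of the full $so(m)$ and the $\hat P^{(k)}$-polynomials are polynomials in generators of $so(m)$. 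The only point requiring a word of care is that, unlike in the classical case, $(P^2)^{\rm sym}_{\hat P}$ is not literally $\sum\hat P_{ij}^2$ but differs by a scalar; since scalars are central this is harmless for the brackets, but it must be flagged — this is the same remark already made in the paragraph preceding (\ref{rotp2}) and encoded in proposition 4.1 (case b) of \cite{part2}, so I would simply invoke that.

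For property 3, the quasi-independence of $\hat\Pi_{n,z}=(\hat Z_{n,z},\hat L_{n,z})$, I would argue as in the scalar cases treated earlier in \S\ref{qcff}: the symbols of these operators are precisely the classical polynomials $\Pi_{n,z}=(Z_{n,z},L_{n,z})$, which are functionally independent by property 3 of Proposition \ref{bamb}; moreover each of these symbols is a homogeneous polynomial in the momenta $p$ (the $P_{ij}$ are linear in $p$, hence any polynomial in them is homogeneous in $p$ of its own degree), so each symbol coincides with its own main part. By the criterion relating quasi-independence of operators to functional independence of the main parts of their symbols — the same criterion used above to establish quasi-independence of $\hat\Pi=(\hat P^2,\hat L)$ — quasi-independence of $\hat\Pi_{n,z}$ follows. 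I expect the main obstacle, such as it is, to be purely bookkeeping: making sure that the recursion is stated unambiguously at the operator level (in particular specifying that $P^2$-type blocks are symmetrized, and that adding a scalar to $\hat Z$ does not disturb any of the four properties), and that the inductive hypothesis is invoked with the correct variable substitution $\hat P_{n_k}\to\hat P^{(k)}$; there is no genuine analytic difficulty, since every bracket identity needed is either (\ref{pcomm}), the vanishing of $[\hat P^{(1)},\hat P^{(2)}]$, or the Casimir property already imported from \cite{part2}.
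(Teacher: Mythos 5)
Your proof is correct and follows essentially the same route as the paper's: substitute $\hat P_{ij}$ for $P_{ij}$ in the classical recursion, prove point 4 by induction using $[\hat P^{(1)},\hat P^{(2)}]=0$ and corollary 4.4 of \cite{part2} for the Casimir $P^2_m$, and obtain quasi-independence from the functional independence of the homogeneous classical symbols. The only cosmetic difference is that the paper observes symmetrization with respect to $\hat P_n$ is literally unnecessary here (every degree-2 monomial is the square of a single generator, so $(P^2)^{\rm sym}_{\hat P}=\sum\hat P_{ij}^2$ exactly, with no additive constant), whereas you import the $(x,\hat p)$-symmetrization constant $n(n-1)/4$ from the earlier computation; since an additive scalar is central, this imprecision is harmless.
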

\begin{proof}
Let $\hat Z_{n,z}$ and $\hat L_{n,z}$ be the polynomials obtained
from the classical ones $Z_{n,z}$ and $L_{n,z}$ of proposition
\ref{bamb}, by simply replacing their arguments $P_n$ with the
quantized momenta $\hat P_n$. In this case symmetrization is
unnecessary, since in these polynomials all monomials of degree 2
are squares of elements of $\hat P_n$. Since all elements of the
set $\Pi_{n,z}= (Z_{n,z}, L_{n,z})$ are homogeneous polynomials in
$p$, they coincide with the symbol of the main parts of the
corresponding elements of $\hat \Pi_{n,z}$. Therefore the
quasi-independence of $\hat \Pi_{n,z}$ follows from the functional
independence of the classical set $\Pi_{n,z}$.

Point 4 can be proved just by repeating the proof of proposition
\ref{bamb}. When the set of indexes $N_m:=(1, \dots, m)$ is split
into two disjoint subsets $I_1$ and $I_2$, consider in fact the
two sets of operators $\hat P^{(k)}\subset \hat P_m$, $k=1,2$,
which are the standard quantization of the sets of momenta
$P^{(k)}$. From the isomorphism between the two Lie algebras
generated by the sets $P_n$ and $\hat P_n$ respectively, it
follows that $[\hat P^{(1)}, \hat P^{(2)}] = 0$. Moreover, since
$P^2_m$ is a Casimir function for the co-algebra $so(m)^*$, from
corollary 
4.4 of \cite{part2} it follows that $[\hat P^2_m, \hat
Z_{n_k,z_k}(\hat P^{(k)})] =[\hat P^2_m, \hat L_{n_k,z_k}(\hat
P^{(k)})]=0$ for $k=1,2$. Hence point 4 is obtained by induction
on $n$, as in the classical case.
\end{proof}

Let $\hat H$ be an arbitrary operator of class ${\cal O}$ on $K =
\rn^n \setminus \{0\}$, such that $[\hat H, \hat P]=0$. From
proposition \ref{cf2} it follows that $M H= g(p^2, r, P^2)$, where
the function $g$ is a homogeneous polynomial in the two variables
$(p^2, P^2)$, with coefficients dependent on $r$. If the
polynomial $g$ satisfies the condition $\partial g/\partial (p^2,
r) \neq 0$ almost everywhere, from proposition \ref{bamb2} it
follows that the sets of operators $\hat F_{n,z}:= (\hat H, \hat
Z_{n,z}; \hat L_{n,z})$ are quasi-integrable sets, with subset of
central elements $(\hat H, \hat Z_{n,z})$, for all $z=1, \dots,
n-1$. We have $\sharp \hat F_{n,z}= 2n-k$ and $k=z+1$. Hence the
number $k$ of central elements can take all values from $k=2$ to
$k=n$.

\section{Free rotation of an $n$-dimensional rigid body} \label{meth2}

The configuration space $K$ of this system is the group $SO(n)$ of
orthogonal transformations of the $n$-dimensional euclidean space
$K= SO(n)$. Its phase space is $M=T^*K= T^*SO(n)$. We shall
present a detailed analysis of the integrability of the classical
system, which distinguishes itself from other already existing
investigations \cite{mish,ratiu,adler}, in the fact that we apply
here the concept of noncommutative integrability
(see definition 3.16 of \cite{part1}), and analyze the dependence
of the number $k$ of central integrals on the properties of the
set of generalized moments of inertia, more precisely on the
presence in this set of subsets consisting of moments equal to
each other. The results will then be applied to the investigation
of the integrability of the corresponding quantum system. We begin
however with a preliminary subsection on some properties of the
group $SO(n)$, most of which are more or less well-known, but
which shall here be derived in a form and with a notation
convenient for our present purposes.

\subsection{Properties of left- and right-invariant vector fields
on $SO(n)$}\label{fr}

Any arbitrary Lie group $G$ acts on itself by left and right
shifts: for each element $g\in G$ these are defined by the
diffeomorphisms
\begin{align*}
&L_g:G \to G, \qquad L_gh=gh \,, \\
&R_g:G \to G, \qquad R_gh=hg \,.
\end{align*}
Let us denote by $T_g G$ the tangent space to $G$ at point $g$.
Then the Lie algebra $\galg$ associated with $G$ can be identified
with the tangent space $\galg= T_eG$ at the neutral element $e$ of
the group $G$. Let $(L_g)_*$ and $(R_g)_*$ respectively denote the
derivatives of the maps $L_g$ and $R_g$ at $e$. Each element $a\in
\galg$ defines two vector fields $V_a^L$ and $V_a^R$ on $G$. At
any point $g\in G$ these vector fields are respectively defined as
$V^L_a(g):= (L_g)_*a \in T_g G$ and $V^R_a(g):= (R_g)_*a \in T_g
G$. We have
\begin{align*}
(L_h)_* V^L_a (g) &= (L_h)_*(L_g)_*a = (L_{hg})_* a = V^L_a (hg)
\,, \\
(R_h)_* V^R_a (g) &= (R_h)_*(R_g)_*a = (R_{gh})_* a = V^R_a (gh)
\,.
\end{align*}
These relations mean that the vector field $V^L_a$ (respectively
$V^R_a$) is invariant with respect to the action of Lie group $G$
on itself by left (respectively right) shifts.
\begin{defn}
For the above reason, the fields $V^L_a$ and $V^R_a$ are
respectively called {\it left-invariant} and {\it right-invariant
vector field on $G$} associated with the element $a\in \galg$.
\end{defn}
The Lie brackets of left-invariant vector fields respect the
structure of the Lie algebra $\galg$: $[V^L_a, V^L_b] =
V^L_{[a,b]}$, where $a,b \in \galg$ and $[a,b]$ is their
commutator in $\galg$. An analogous result, although with a
reversed sign, is true for right-invariant vector fields: $[V^R_a,
V^R_b] = -V^R_{[a,b]}$. We have also $[V^L_a, V^R_b]=0$.

The Lie group $G=SO(n)$ can be identified with the group of
orthogonal matrices of size $n\times n$, and its associated Lie
algebra $\galg = so(n)$ with the Lie algebra of skew-symmetric
matrices of the same size. Then the vector field $V_A^L$
(respectively $V_A^R$) corresponding to the skew-symmetric matrix
$A$ is defined by the system of differential equations $\dot X=
XA$ (respectively $\dot X= AX$), where $X$ is an orthogonal
matrix. The commutator of the Lie algebra $\galg= so(n)$ is given
by the usual commutator of matrices. We can therefore rewrite the
above formulas for the Lie brackets between left- and
right-invariant vector fields as
\begin{equation} \label{commv}
[V^L_A,V^L_B]= V^L_{[A,B]}\,, \qquad [V^R_A,V^R_B]=
-V^R_{[A,B]}\,, \qquad [V^L_A,V^R_B]=0\,.
\end{equation}
We can also consider the classical impulses $P^L_A, P^R_A: T^*M
\to \rn$, with $M= SO(n)$, which are associated according to
proposition \ref{isom} with the vector fields $V^L_A$ and $V^R_A$,
for any $A\in \galg$:
\begin{equation} \label{pla}
P^L_A (m)= \langle p, V^L_{A}(X) \rangle\,, \qquad P^R_A (m)=
\langle p, V^R_{A}(X) \rangle\,,
\end{equation}
where $p=m \in T^*_X G$. These impulses form a Lie algebra with
respect to Poisson brackets, which is isomorphic to the Lie
algebra of the corresponding vector fields. Therefore, from
relations (\ref{commv}) we derive
\begin{equation} \label{commp}
\{P^L_A,P^L_B\}= P^L_{[A,B]}\,, \qquad \{P^R_A,P^R_B\}=
-P^R_{[A,B]}\,, \qquad \{P^L_A,P^R_B\}=0\,.
\end{equation}

Let us consider the vector fields $V_{ij}^L$, $V_{ij}^R$, which
are associated with the matrices $D^{ij}$ defined by formula
(\ref{aij}). We recall that the set $(D^{ij}, 1\leq i< j\leq n)$
forms a basis of the Lie algebra $\galg=so(n)= T_e G$. This
implies that $V_{ij}^L(g)$ and $V_{ij}^R(g)$ are two bases of
linear space $T_g G$, for any $g\in G$. From (\ref{acomm}) and
(\ref{commv}) we obtain that the Lie brackets between these vector
fields have the form
\begin{align}
[V_{ij}^L, V_{hk}^L] &= -\delta_{ih} V^L_{jk} - \delta_{jk}
V^L_{ih} +
\delta_{ik} V^L_{jh} + \delta_{jh} V^L_{ik} \,, \label{vlvl}\\
[V_{ij}^R, V_{hk}^R] &= \delta_{ih} V^R_{jk} + \delta_{jk}
V^R_{ih} -
\delta_{ik} V^R_{jh} - \delta_{jh} V^R_{ik} \,, \\
[V_{ij}^L, V_{hk}^R] &= 0 \,. \label{vlvr}
\end{align}

These vector fields are associated with the classical impulses
\begin{equation} \label{iplpr}
P^L_{ij}(m)= \langle p, V^L_{ij}(X) \rangle\,, \qquad P^R_{ij}(m)=
\langle p, V^R_{ij}(X) \rangle\,.
\end{equation}
According to (\ref{commp}), the Poisson brackets between these
functions have the same form as the Lie brackets
(\ref{vlvl})--(\ref{vlvr}) between the corresponding vector
fields:
\begin{align}
\{P_{ij}^L, P_{hk}^L\} &= -\delta_{ih} P^L_{jk} - \delta_{jk}
P^L_{ih} +
\delta_{ik} P^L_{jh} + \delta_{jh} P^L_{ik} \,, \label{plpl}\\
\{P_{ij}^R, P_{hk}^R\} &= \delta_{ih} P^R_{jk} + \delta_{jk}
P^R_{ih} -
\delta_{ik} P^R_{jh} - \delta_{jh} P^R_{ik} \,, \label{prpr}\\
\{P_{ij}^L, P_{hk}^R\} &= 0 \,. \label{prpl}
\end{align}

In the following we shall often indicate with $P^L=P^L(m)$ and
$P^R=P^R(m)$ the two skew-symmetric matrices having components
$P^L_{ij}(m)$ and $P^R_{ij}(m)$ respectively, with $1\leq i, j\leq
n$. Since for any $A\in \galg$ we have $A= \sum_{i<j}A_{ij}
D^{ij}$, from (\ref{pla}) we obtain $P^L_A= \sum_{i<j}
A_{ij}P^L_{ij}= -\frac 12 \sum_{i,j} P^L_{ij}A_{ji} = -\frac 12
\Tr(P^L A)$. Using the first of relations (\ref{commp}) we then
obtain
\begin{equation} \label{plu}
\{P^L_A,P^L_{hk}\}= -\frac 12 \Tr(P^L[A,D^{hk}])= -\frac 12
\Tr(D^{hk}[P^L,A]) =[P^L,A]_{hk}\,,
\end{equation}
where the last member represents the element at row $h$ and column
$k$ of the commutator of the two matrices $P^L$ and $A$. In a
similar way we obtain
\begin{equation} \label{pru}
\{P^R_A,P^R_{hk}\}= -[P^R,A]_{hk}\,.
\end{equation}
\begin{rem} \label{typ}
Since the $N$ vectors $V^L_{ij}(X)$, $1\leq i<j \leq n$, form a
linear basis of $T_X G$ for any $X\in G=SO(n)$, we see from
formula (\ref{iplpr}) that the correspondence $p \mapsto P^L(m)$
is an isomorphism between the linear spaces $T^*_X G$ and $so(n)$.
It follows that, for a generic $m\in M$, the matrix $P^L(m)$ is a
``typical'' $n \times n$ skew-symmetric matrix. The same fact is
obviously true also for $P^R(m)$.
\end{rem}

Let us represent the generic element $g$ of the group $G=SO(n)$ as
an orthogonal $n\times n$ real matrix $X$. If we indicate as
$\tilde X$ the transposed of the matrix $X$, so that $\tilde
X_{\beta\gamma} := X_{\gamma\beta}$, the orthogonality of $X$
implies $X^{-1} = \tilde X$. For any skew-symmetric matrix $A=
-\tilde A$, at any point $X\in SO(n)$ the tangent vector $\dot X=
AX$ can also be written as $\dot X= XB$, where $B=X^{-1}AX$ is
also a skew-symmetric matrix. Using this fact, it is easy to see
that the basic left- and right-invariant vector fields introduced
above are connected to each other by the relations
\begin{equation*}
V_{ij}^R(X)= \sum_{h,k=1}^n X_{ih} X_{jk} V_{hk}^L(X) \,,
\end{equation*}
or in matrix notation
\begin{equation}
V^R(X)= X V^L(X) \tilde X\,. \label{mpv}
\end{equation}
Similarly, we have for the corresponding impulses
\begin{equation} \label{py}
P_{ij}^R(m)= \sum_{h,k=1}^n X_{ih} X_{jk} P_{hk}^L(m)\,,
\end{equation}
or equivalently
\begin{equation}
P^R(m)= X P^L(m) \tilde X\,, \label{mpy}
\end{equation}
where $m\in T^*_X G$.

It is sometimes useful to indicate with $P^L$ and $P^R$ also the
two sets of $N=n(n-1)/2$ functions on $T^*G$ defined by formulas
(\ref{iplpr}):
\begin{equation} \label{splpr}
P^L = (P^L_{ij}, 1\leq i<j \leq n)\,, \qquad P^R = (P^R_{ij},
1\leq i<j \leq n)\,.
\end{equation}
In this way the elements of the sets $P^L$ and $P^R$, evaluated at
a point $m\in M$, just coincide with the independent elements of
the two skew-symmetric matrices $P^L(m)$ and $P^R(m)$
respectively, which were introduced before formula (\ref{plu}). In
the following, it should be clear from the context whether $P^L$
(or $P^R$) indicates a skew-symmetric matrix or the corresponding
set of $N$ independent elements.

Let $F=(F_1, \dots, F_r)$ be a set of $r$ real functions $F_i:
M\to \rn$, $i=1, \dots, r$, on a manifold $M$. As usual, we denote
as $\rank F$ at $m\in M$ the dimension of the image $F_* (T_m M)$
of tangent space $T_m M$ with respect to the derivative $F_*$ of
the map $F:M \to \rn^r$ defined by the set $F$. Let $dF=(dF_1,
\dots, dF_r)$ denote the set of the differentials of the elements
of $F$ at the point $m$. It is well known that $\rank F= \dim{\rm
Span}\,dF$, where ${\rm Span}\,dF$ denotes the linear subspace of
$T^*_m M$ spanned by the elements of $dF$.

Let $M=M^{2N}$ be a symplectic manifold. Let us denote with $\Pi$
the antisymmetric bilinear functional $\Pi:T^*_m M \times T^*_m
M\to \rn$ such that the Poisson bracket of any two functions $f,h:
M\to \rn$ is given at $m$ by the relation $\{f,h\}= \Pi(df, dh)$.
For any linear subspace $L \subseteq T^*_m M$, let us denote with
$L^\angle$ the subspace skew-orthogonal to $L$ with respect to
$\Pi$, i.e., $L^\angle= \{u\in T^*_m M: \Pi(u,w)=0\,\forall \,w\in
L\}$. Since $\Pi$ is nondegenerate, we have $\dim L+ \dim
L^\angle= 2N$. From these facts we deduce the following
\begin{lem} \label{rskew}
Let $F$ be a set of functions on a symplectic manifold $M^{2N}$.
Then
\begin{equation}
\dim \left({\rm Span}\,dF\right)^\angle= 2N- \rank F\,.
\end{equation}
\end{lem}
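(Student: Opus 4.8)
The statement is essentially a dimension count combining two facts already recalled in the excerpt: first, that $\rank F = \dim(\sspan\, dF)$ for any set of functions $F$ on a manifold; second, that on a symplectic manifold $M^{2N}$ the Poisson tensor induces a nondegenerate antisymmetric bilinear functional $\Pi$ on each cotangent space $T^*_m M$, and for any linear subspace $L \subseteq T^*_m M$ one has $\dim L + \dim L^\angle = 2N$. The plan is simply to apply the second fact to the subspace $L = \sspan\, dF \subseteq T^*_m M$ and then rewrite $\dim L$ using the first fact.

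First I would fix an arbitrary point $m \in M$ and set $L := \sspan\, dF \subseteq T^*_m M$, the linear span of the differentials $dF_1, \dots, dF_r$ evaluated at $m$. Since $M = M^{2N}$ is symplectic, the bilinear form $\Pi = \Pi_m$ on $T^*_m M$ defined by $\{f,h\} = \Pi(df, dh)$ is nondegenerate, so by the remark preceding the lemma we have $\dim L + \dim L^\angle = 2N$, i.e. $\dim L^\angle = 2N - \dim L$. This uses nothing beyond the standard linear-algebra fact that a nondegenerate bilinear form on a finite-dimensional space $V$ satisfies $\dim W + \dim W^\perp = \dim V$ for every subspace $W$, applied here with $V = T^*_m M$, $\dim V = 2N$, and $W = L$.

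It then only remains to substitute $\dim L = \dim(\sspan\, dF) = \rank F$ at $m$, which is exactly the well-known identity recalled just above the statement of the lemma. Combining the two displayed equalities gives
\begin{equation*}
\dim \left(\sspan\, dF\right)^\angle = 2N - \rank F
\end{equation*}
at the point $m$. Since $m$ was arbitrary, this holds pointwise on $M$, which is the assertion.

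There is no real obstacle here: the lemma is a direct restatement of the nondegeneracy of the symplectic form together with the standard characterization of the rank of a map by the dimension of the span of its differentials, both of which are taken as given in the text preceding the statement. The only thing to be careful about is the bookkeeping of which space the skew-orthogonal complement $L^\angle$ lives in — it is a subspace of the cotangent space $T^*_m M$, not of the tangent space — but this is already fixed by the definition of $L^\angle$ given in the excerpt, so the argument is a one-line composition of the two quoted facts.
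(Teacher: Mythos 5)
Your proof is correct and follows exactly the route the paper intends: the lemma is stated as an immediate consequence ("From these facts we deduce the following") of the identity $\rank F=\dim\,{\rm Span}\,dF$ and the nondegeneracy relation $\dim L+\dim L^\angle=2N$, which is precisely the two-step composition you carry out. No gaps.
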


\begin{prop}\label{r0}
Let us consider the two set of functions $P^L$ and $P^R$ defined
by formulas (\ref{splpr}) and (\ref{iplpr}). On all $M=T^*G$,
where $G=SO(n)$, we have
\begin{equation}
\rank P^L = \rank P^R = N\,.  \label{rank}
\end{equation}
In addition, the two subspaces ${\rm Span}\,dP^L$ and ${\rm
Span}\,dP^R$ of $T^*_m M$ are related to each other by the
equalities
\begin{equation} \label{pskew}
({\rm Span}\,dP^L)^\angle = {\rm Span}\,dP^R\,, \qquad {\rm
(Span}\,dP^R)^\angle ={\rm Span}\,dP^L\,.
\end{equation}
\end{prop}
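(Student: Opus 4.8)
The plan is to establish the rank statement first and then derive the skew-orthogonality from it, using Remark~\ref{typ} and the commutation relations \eqref{plpl}--\eqref{prpl}.

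First I would prove \eqref{rank}. Fix a point $m\in T^*_X G$. By Remark~\ref{typ}, the correspondence $p\mapsto P^L(m)$ is a linear isomorphism between $T^*_X G$ and $so(n)$; in particular the $N$ functions $P^L_{ij}$ restricted to the fibre $T^*_X G$ already have independent differentials there, so $\dim\,{\rm Span}\,dP^L\geq N$ at every $m$. On the other hand, each $P^L_{ij}$ is linear in $p$, hence $\rank P^L\leq N+\dim\,{\rm Span}\{d(\text{fibre-constant parts})\}$; more directly, the fields $V^L_{ij}$ are the generators of the right $SO(n)$-action on $G$, and the momentum map $P^L:T^*G\to so(n)^*$ associated with a transitive group action has rank equal to $\dim SO(n)=N$ everywhere on $T^*G$. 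Either way one gets $\rank P^L=N$ on all of $M$, and the identical argument with $V^R_{ij}$ (generators of the left action) gives $\rank P^R=N$.

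Next I would prove the skew-orthogonality \eqref{pskew}. The inclusion ${\rm Span}\,dP^R\subseteq ({\rm Span}\,dP^L)^\angle$ is immediate from \eqref{prpl}: since $\{P^L_{ij},P^R_{hk}\}=0$ for all $i<j$, $h<k$, we have $\Pi(dP^R_{hk},dP^L_{ij})=0$, so every element of ${\rm Span}\,dP^R$ is skew-orthogonal to every element of ${\rm Span}\,dP^L$. It remains to check equality of dimensions. By Lemma~\ref{rskew} applied to $F=P^L$, we have $\dim ({\rm Span}\,dP^L)^\angle = 2N-\rank P^L = 2N-N = N$, using \eqref{rank}. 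But $\dim\,{\rm Span}\,dP^R=\rank P^R=N$ as well. Hence the inclusion ${\rm Span}\,dP^R\subseteq({\rm Span}\,dP^L)^\angle$ is an equality of $N$-dimensional subspaces, which is the first equation of \eqref{pskew}. Taking the skew-orthogonal complement of both sides and using that $(L^\angle)^\angle=L$ for a nondegenerate $\Pi$ gives the second equation, $({\rm Span}\,dP^R)^\angle={\rm Span}\,dP^L$.

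The only place requiring care — and the step I expect to be the main obstacle — is the claim $\rank P^L=N$ at \emph{every} point of $M$, not merely generically. The subtle point is that $P^L$ is linear, hence vanishes on the zero section, yet we still want full rank $N$ there; this works because, although the $P^L_{ij}$ themselves vanish on the zero section, their differentials $dP^L_{ij}$ do not (they pick up the independent covectors $\langle\,\cdot\,,V^L_{ij}(X)\rangle$ in the fibre directions), and the $V^L_{ij}(X)$ form a basis of $T_XG$ by the remark preceding Remark~\ref{typ}. Making this explicit in local coordinates $(x,p)$ on $T^*G$ — writing $P^L_{ij}=\sum_k (V^L_{ij})^k(x)\,p_k$ and observing that the $p$-derivatives of these $N$ functions are the rows of an invertible matrix — pins down the rank everywhere and completes the argument.
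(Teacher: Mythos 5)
Your proposal is correct and follows essentially the same route as the paper: the rank is obtained from the linear independence of the fields $V^L_{ij}(X)$ via the $p$-derivatives of the fibre-linear functions $P^L_{ij}$ (which is exactly the coordinate computation you describe in your last paragraph), and \eqref{pskew} is obtained from the inclusion ${\rm Span}\,dP^R\subseteq({\rm Span}\,dP^L)^\angle$ given by \eqref{prpl} together with the dimension count from Lemma~\ref{rskew}. The only cosmetic difference is that the upper bound $\rank P^L\leq N$ needs no argument at all (there are only $N$ functions in the set), so the clause about ``fibre-constant parts'' and the appeal to momentum-map transitivity are unnecessary detours.
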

\begin{proof}
Let us fix a system of coordinates $x$ in a neighborhood of $g\in
G$, and let $(x, p)$ be the local coordinates induced from $x$ on
$T^*G$. According to (\ref{iplpr}), we have $V^L_{ij} =\partial
P_{ij}^L/
\partial p$. Since at any point $g$ of $G$ the $N$ vector fields
$V^L_{ij}$, $1\leq i<j \leq n$, are linearly independent, we have
that the differentials $d_p P^L_{ij}\in T_g G$ with respect to
variables $p$ of the $N$ functions $P^L_{ij}$ are linearly
independent. This implies in particular the independence of their
differentials $dP^L_{ij}\in T_m^*M$ with respect to variables
$(x,p)$, so that $\rank P^L= N$. We obtain in a similar way that
$\rank P^R= N$. Equalities (\ref{rank}) are thus proved.

Formula (\ref{prpl}) implies ${\rm Span}\,dP^R \subseteq ({\rm
Span}\,dP^L)^\angle$. Furthermore, formula (\ref{rank}) is
equivalent to $\dim {\rm Span}\,dP^L= \dim {\rm Span}\,dP^R =N$.
Therefore, applying lemma \ref{rskew} to the set $P^L$ we obtain
$\dim ({\rm Span}\,dP^L)^\angle = N =\dim {\rm Span}\,dP^R$. These
facts imply the former of equalities (\ref{pskew}). The latter is
obtained in a similar way.
\end{proof}

Let $B$ denote the set $B:= (P^L, P^R)$ of $2N=n(n-1)$ functions
on $T^*G$.

\begin{cor}
On all $M=T^*G$ we have
\begin{equation} \label{spandba}
({\rm Span}\,dB)^\angle= {\rm Span}\,dP^L \cap {\rm Span}\,dP^R\,.
\end{equation}
\end{cor}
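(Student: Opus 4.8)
The plan is to reduce the statement to a purely linear-algebraic identity about the skew-orthogonal complement, combined with Proposition \ref{r0}. First I would observe that, by the general fact $\rank F = \dim \sspan\, dF$ recalled before Lemma \ref{rskew}, the span of the differentials of a union of two sets of functions is the sum of the individual spans: since $B = (P^L, P^R)$ consists of the elements of $P^L$ together with the elements of $P^R$, we have
\begin{equation*}
\sspan\, dB = \sspan\, dP^L + \sspan\, dP^R \,,
\end{equation*}
the sum being taken as linear subspaces of $T^*_m M$.

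Next I would establish the elementary identity $(L_1 + L_2)^\angle = L_1^\angle \cap L_2^\angle$ for any two linear subspaces $L_1, L_2 \subseteq T^*_m M$. This follows directly from the definition $L^\angle = \{u : \Pi(u,w) = 0\ \forall\, w \in L\}$ together with the bilinearity of $\Pi$: a covector $u$ skew-annihilates every element of $L_1 + L_2$ if and only if it skew-annihilates every element of $L_1$ and every element of $L_2$ (using that $\Pi(u, w_1 + w_2) = \Pi(u,w_1) + \Pi(u,w_2)$ and that any element of $L_1$ or of $L_2$ lies in $L_1 + L_2$). No nondegeneracy of $\Pi$ is needed for this step.

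Combining these two observations and then invoking the equalities (\ref{pskew}) of Proposition \ref{r0}, namely $(\sspan\, dP^L)^\angle = \sspan\, dP^R$ and $(\sspan\, dP^R)^\angle = \sspan\, dP^L$, I would conclude
\begin{equation*}
(\sspan\, dB)^\angle = (\sspan\, dP^L + \sspan\, dP^R)^\angle = (\sspan\, dP^L)^\angle \cap (\sspan\, dP^R)^\angle = \sspan\, dP^R \cap \sspan\, dP^L \,,
\end{equation*}
which is the claimed equality (\ref{spandba}), valid at every $m \in M = T^*G$. There is no real obstacle here: the only point requiring any care is making explicit that $\sspan\, dB$ is the \emph{sum} (not merely the union) of the two spans and that the $\angle$-operation turns sums into intersections; everything substantive has already been proved in Proposition \ref{r0}.
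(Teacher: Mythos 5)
Your proof is correct and is essentially identical to the paper's: the paper likewise writes $({\rm Span}\,dB)^\angle =({\rm Span}\,dP^R)^\angle \cap ({\rm Span}\,dP^L)^\angle$ from $B = P^L \cup P^R$ and then applies the equalities (\ref{pskew}). You have merely made explicit the elementary facts (span of a union is the sum of spans, and $\angle$ turns sums into intersections) that the paper leaves implicit.
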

\begin{proof}
Recalling equalities (\ref{pskew}), from the relation $B = P^L
\cup P^R$ we obtain $({\rm Span}\,dB)^\angle =({\rm
Span}\,dP^R)^\angle \cap ({\rm Span}\,dP^L)^\angle ={\rm
Span}\,dP^L\cap {\rm Span}\,dP^R$.
\end{proof}

In order to establish the linear dimension of subspace
(\ref{spandba}), it is first necessary to recall a basic property
of skew-symmetric matrices.

\begin{lem} \label{skew}
Let us consider a skew-symmetric operator in an $n$-dimensional
euclidean space $L$. Then there is a cartesian system of
coordinates in which the operator is defined by a matrix which has
the ``normal block-diagonal'' form:
\begin{equation} \label{block}
\bar A= \left(\begin{array}{cc|cc|c|cc|c} 0 & \alpha_1 & 0 &0 & \cdots & 0&0&0 \\
-\alpha_1 & 0 & 0 & 0 &\cdots & 0&0&0\\
\hline 0 & 0& 0 & \alpha_2 & \cdots & 0&0&0\\
0& 0& -\alpha_2 & 0& \cdots & 0&0&0\\
\hline \vdots & \vdots & \vdots &\vdots & \ddots & \vdots&\vdots&\vdots\\
\hline 0 & 0 & 0 & 0 & \cdots &0&\alpha_s& 0 \\
0 & 0 & 0 & 0 & \cdots &-\alpha_s& 0&0 \\
\hline 0 & 0 & 0 & 0 & \cdots &0&0& 0
\end{array} \right) \,.
\end{equation}
where $s=\left[\frac n2 \right]$ is the integer part of $\frac
n2$, $\alpha_k \geq0$ for $k=1, \dots, s$, and the last vanishing
row and column are present only for odd $n$. Let us suppose that
all the eigenvalues $\pm i\alpha_1, \dots, \pm i\alpha_s, (0)$ of
$\bar A$ (where $i= \sqrt{-1}$) are pairwise different. (This
means that $\alpha_k>0$, $\alpha_h\neq \alpha_k\,\forall\, h,k=1,
\dots, s,\ h\neq k$). Then a skew-symmetric matrix $B$ commutes
with $\bar A$, i.e., $[\bar A,B]=0$, if and only if $B$ has the
same block-diagonal form, without any condition on its
eigenvalues.

An equivalent intrinsic formulation is the following. For any
skew-symmetric operator $A$ with pairwise different eigenvalues on
the euclidean space $L$, there exists a unique decomposition $L =
L_1^2 \oplus L^2_2 \oplus \dots \oplus L^2_s \oplus L^1_0$ of $L$
into $s$ 2-dimensional invariant subspaces $L^2_i$ and (for odd
$n$) a 1-dimensional subspace $L^1_0$. Any skew-symmetric operator
$B$ commutes with $A$ if and only if all subspaces $L^2_i$ and
$L^1_0$ of this decomposition are invariant under the action of
$B$.
\end{lem}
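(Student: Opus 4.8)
The plan is to prove the normal form first and then read off the commutation statement from it. I would begin with the classical spectral theorem for skew-symmetric operators: the operator $A$ on the real euclidean space $L$ is normal (since $A\tilde A = -A^2 = \tilde A A$), so by the real spectral theorem $L$ decomposes orthogonally into $A$-invariant subspaces on which $A$ acts either as zero or as a rotation-type block $\begin{pmatrix}0&\alpha\\-\alpha&0\end{pmatrix}$ with $\alpha>0$. Concretely, I would complexify, diagonalize over $\cn$ with eigenvalues $\pm i\alpha_k$ and $0$, and then pair each eigenvector for $i\alpha_k$ with its conjugate (the eigenvector for $-i\alpha_k$) to produce a real orthonormal pair spanning a 2-dimensional invariant subspace $L_k^2$; the kernel of $A$ supplies $L_0$. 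Choosing an orthonormal basis adapted to this decomposition gives the block-diagonal matrix $\bar A$ of (\ref{block}), where one sign choice in each pair can be fixed so that $\alpha_k\ge 0$. For odd $n$ the kernel is at least one-dimensional (a skew matrix of odd size is singular), giving the final vanishing row and column; under the hypothesis that all eigenvalues $\pm i\alpha_1,\dots,\pm i\alpha_s,(0)$ are pairwise distinct, each $\alpha_k>0$, the $\alpha_k$ are distinct, and the kernel is exactly one-dimensional (for odd $n$) or zero (for even $n$).

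Next I would prove the commutation characterization, which is where the distinctness hypothesis does the work. One direction is trivial: if $B$ is block-diagonal with blocks conforming to those of $\bar A$, then since $2\times 2$ skew matrices all commute (they are scalar multiples of $\begin{pmatrix}0&1\\-1&0\end{pmatrix}$) and the $1\times 1$ block is zero, we get $[\bar A, B]=0$ blockwise. For the converse, suppose $[\bar A, B]=0$. Working over $\cn$, each eigenspace of $\bar A$ is then $B$-invariant; since the eigenspaces for $i\alpha_k$ are one-dimensional (distinct eigenvalues), $B$ acts as a scalar on each, and similarly on the $\ker$ (one-dimensional for odd $n$). Translating back to the real picture: each $L_k^2$ (which is the real part of the sum of the $i\alpha_k$- and $-i\alpha_k$-eigenspaces) is $B$-invariant, and $L_0$ is $B$-invariant. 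That is precisely the intrinsic statement. To get the matrix statement, note that a real skew-symmetric $B$ preserving each $L_k^2$ and $L_0$ must, in the adapted basis, be block-diagonal, with the restriction to each $L_k^2$ an arbitrary $2\times 2$ skew matrix and the restriction to $L_0$ zero (a $1\times 1$ skew matrix vanishes); no relation among the resulting parameters is forced, which is the "without any condition on its eigenvalues" clause.

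Finally I would remark that the intrinsic and matrix formulations are equivalent: given the abstract decomposition, picking an orthonormal basis of each $L_k^2$ and of $L_0$ reproduces the normal form, and the invariance conditions become the block-diagonality conditions; conversely the block structure of $\bar A$ defines the subspaces. The main obstacle is not any single computation but making sure the distinctness hypothesis is used in exactly the right place: it is what forces the complex eigenspaces of $\bar A$ to be one-dimensional, hence what makes a commuting $B$ decompose blockwise rather than merely preserving the sum of equal-eigenvalue blocks (in which case $B$ could mix the corresponding $2$-planes and would not be block-diagonal). I would also be careful that the claim about $B$ is that it has "the same block-diagonal form" in the sense of the same block sizes and positions, not the same entries — the wording in the statement already signals this, but the proof should make it explicit by exhibiting the general commuting $B$ as block-diagonal with independent $2\times 2$ skew blocks.
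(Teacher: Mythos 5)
Your proof is correct. Note that the paper itself offers no proof of this lemma: it introduces it with the phrase ``it is first necessary to recall a basic property of skew-symmetric matrices'' and simply states the result, so there is nothing to compare against. Your argument is the standard one --- normality of $A$ gives the real spectral decomposition into $2$-dimensional rotation blocks plus the kernel, and the distinctness of the eigenvalues is used exactly where it must be, namely to make the complex eigenspaces one-dimensional so that any commuting $B$ preserves each $L_k^2$ and $L^1_0$ rather than merely their sum. The only point worth making explicit is the uniqueness clause of the intrinsic formulation, but this already follows from your identification of $L_k^2$ as the real form of the sum of the $\pm i\alpha_k$-eigenspaces, which are canonically determined by $A$.
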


We shall denote with $\gtyp\subset \galg = so(n)$ the set of all
skew-symmetric matrices whose eigenvalues are pairwise different.
Obviously, ``almost all'' elements of $\galg$ also belongs to
$\gtyp$. More exactly, $\galg\setminus\gtyp$ is a closed subset of
$\galg$ having vanishing measure. It follows that, if some
statement is true in $\gtyp$, then it is true (at least) almost
everywhere in $\galg$.

Let us fix an element $a$ of the Lie algebra $\galg$, and consider
the linear operator ${\rm ad}_a: \galg \to \galg$ in this algebra,
${\rm ad}_a: b \mapsto [a,b]$. The kernel of ${\rm ad}_a$ is the
subalgebra of all elements of $\galg$ which commute with $a$,
i.e., ${\rm Ker}\, {\rm ad}_a =\{b \in \galg : [a,b]=0\}$.

\begin{cor} \label{n/2}
If $a\in \gtyp$, then
\begin{equation}\label{kera}
\dim{\rm Ker}\, {\rm ad}_a= \ipn2\,.
\end{equation}
Let ${\rm ad}_a(\galg)\subset \galg$ denote the image of $\galg$
with respect to the operator ${\rm ad}_a$. Then
\begin{equation}\label{adg}
\dim {\rm ad}_a(\galg)= N-\ipn2\,.
\end{equation}
Let $a$ be represented by a matrix $\bar A$ having the normal
block-diagonal form (\ref{block}), with $\alpha_k>0$,
$\alpha_h\neq \alpha_k\,\forall\, h,k=1, \dots, s,\ h\neq k$. Then
${\rm ad}_a(\galg)$ is the linear space of the matrices $B\in
\galg$ such that $B_{12}=B_{34}=\dots =B_{2s-1,2s}=0$, with
$s=[n/2]$, i.e.,
\begin{equation}\label{ada}
{\rm ad}_{\bar A}(\galg)= \{B\in \galg:B_{2i-1,2i}=0 \ \forall\,
i=1, \dots, [n/2]\}\,.
\end{equation}
\end{cor}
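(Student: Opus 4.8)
The plan is to reduce everything to the structure theory of the skew-symmetric operator ${\rm ad}_a$, which is controlled by Lemma~\ref{skew}. First I would fix the representative $\bar A$ of $a$ in normal block-diagonal form (\ref{block}); since $a\in\gtyp$ the eigenvalues $\pm i\alpha_1,\dots,\pm i\alpha_s,(0)$ are pairwise distinct. By the intrinsic part of Lemma~\ref{skew}, a skew-symmetric $B$ lies in ${\rm Ker}\,{\rm ad}_a$ exactly when each of the invariant subspaces $L^2_1,\dots,L^2_s$ and (for odd $n$) $L^1_0$ is $B$-invariant. A skew-symmetric operator that preserves an orthogonal direct-sum decomposition is block-diagonal with respect to it, so in the basis giving $\bar A$ its block form, $B$ must itself be block-diagonal with $2\times 2$ skew blocks $\left(\begin{smallmatrix}0&\beta_k\\-\beta_k&0\end{smallmatrix}\right)$ on the diagonal and (when $n$ is odd) a final vanishing $1\times1$ block. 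The space of such matrices has dimension $s=[n/2]$, one parameter $\beta_k$ per $2\times2$ block, which proves (\ref{kera}).

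Next, (\ref{adg}) is pure rank--nullity: ${\rm ad}_a:\galg\to\galg$ is a linear operator on the $N=n(n-1)/2$-dimensional space $\galg$, so $\dim{\rm ad}_a(\galg)=N-\dim{\rm Ker}\,{\rm ad}_a=N-[n/2]$ by (\ref{kera}). No further computation is needed here.

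For the explicit description (\ref{ada}) I would identify the image rather than compute it elementwise. Since ${\rm ad}_a$ is skew-symmetric with respect to the trace form $\langle B,C\rangle=-\tfrac12\Tr(BC)$ on $\galg$ — indeed $\langle[\bar A,B],C\rangle=-\langle B,[\bar A,C]\rangle$ — its image is the orthogonal complement of its kernel: ${\rm ad}_{\bar A}(\galg)=({\rm Ker}\,{\rm ad}_{\bar A})^{\perp}$. The kernel, as just established, consists of the block-diagonal skew matrices supported on the $2\times2$ diagonal blocks, i.e.\ spanned by the matrices $D^{2i-1,2i}$ for $i=1,\dots,[n/2]$. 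With respect to the trace form the basis $\{D^{hk}:h<k\}$ is orthogonal, so the orthogonal complement of $\sspan\{D^{2i-1,2i}\}$ is precisely the set of skew matrices $B$ with $B_{2i-1,2i}=0$ for all $i=1,\dots,[n/2]$, which is exactly (\ref{ada}). As a consistency check, this space has dimension $N-[n/2]$, matching (\ref{adg}).

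The main obstacle, such as it is, is the passage from ``$B$ preserves each $L^2_i$ and $L^1_0$'' to ``$B$ is block-diagonal''; one must use that $B$ is skew-symmetric and the decomposition is \emph{orthogonal}, so that invariance of each summand forces the off-diagonal blocks to vanish (a subspace and its orthogonal complement are simultaneously invariant under a normal — in particular skew-symmetric — operator). Everything else is either a direct appeal to Lemma~\ref{skew}, the rank--nullity theorem, or the elementary fact that ${\rm ad}_a$ is skew-adjoint for the trace form, so that image and kernel are orthogonal complements.
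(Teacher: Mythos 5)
Your proof is correct, and for the first two claims it coincides with the paper's: (\ref{kera}) comes from Lemma \ref{skew} together with the observation that the skew-symmetric matrices of the block form (\ref{block}) form a space of dimension $s=[n/2]$, and (\ref{adg}) is rank--nullity. (A minor remark on your derivation of (\ref{kera}): once every summand $L^2_i$, $L^1_0$ of the decomposition is $B$-invariant, $B$ is block-diagonal by the very definition of invariance; you do not need the ``normal operator preserves orthogonal complements'' argument there, since Lemma \ref{skew} already hands you invariance of all summands rather than of just one.) For (\ref{ada}) you take a genuinely different route. The paper checks by direct computation that $[\bar A,C]_{2i-1,2i}=0$ for every $C\in\galg$, deduces that ${\rm ad}_{\bar A}(\galg)$ is contained in the right-hand side of (\ref{ada}), and then upgrades the inclusion to an equality by comparing dimensions via (\ref{adg}). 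You instead use that ${\rm ad}_{\bar A}$ is skew-adjoint for the invariant form $(B,C)=\tfrac12\Tr(\tilde BC)$, so that its image is the orthogonal complement of its kernel; since the kernel is $\sspan\{D^{2i-1,2i}\}$ and the basis $\{D^{hk}\}_{h<k}$ is orthonormal for this form (a fact the paper records only later, in the rigid-body subsection, but which is elementary), that complement is exactly $\{B\in\galg: B_{2i-1,2i}=0\}$. Your version obtains both inclusions simultaneously and explains the paper's computation conceptually, since $[\bar A,C]_{2i-1,2i}=([\bar A,C],D^{2i-1,2i})=-(C,[\bar A,D^{2i-1,2i}])=0$ precisely because $D^{2i-1,2i}\in{\rm Ker}\,{\rm ad}_{\bar A}$; the paper's version avoids appealing to invariance of the trace form but needs the explicit dimension count. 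Both arguments are complete.
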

\begin{proof}
Equality (\ref{kera}) follows from lemma \ref{skew}, and from the
observation that the linear space of the matrices of the form
(\ref{block}) has dimension $s= \ipn2$. Equality (\ref{adg})
follows directly from (\ref{kera}), since $\dim \galg= N= n(n-
1)/2$. A direct computation shows that, for any $C\in \galg$, one
has $B_{2i-1,2i}=0 \ \forall\, i=1, \dots, \ipn2$, where $B:=[\bar
A, C]$. Therefore the linear space on the right-hand side of
(\ref{ada}) contains ${\rm ad}_{\bar A}(\galg)$. Furthermore, it
is obvious that the right-hand side of (\ref{ada}) has linear
dimension $N-\ipn2$. Taking into account (\ref{adg}), these facts
imply (\ref{ada}).
\end{proof}

For a given point $m\in M=T^*SO(n)$, let us consider the two
skew-symmetric matrices $P^L(m)$ and $P^R(m)$ whose elements are
defined by formula (\ref{iplpr}). Let us denote with $\mtyp\subset
M$ the set of all points $m\in M$ such that $P^L(m)\in \gtyp$. We
see from formula (\ref{mpy}) that $P^L(m)\in \gtyp$ if and only if
$P^R(m)\in \gtyp$. We also know (see remark \ref{typ}) that, for a
generic $m\in M$, the matrix $P^L(m)$ is a typical $n \times n$
skew-symmetric matrix. Hence, the eigenvalues of $P^L(m)$ are
pairwise different for almost all $m\in M$. It means that almost
all elements of $M$ belong to $\mtyp$, or equivalently that
$M\setminus\mtyp$ is a closed subset of $M$ having vanishing
measure.

Let $\sigma= \sigma(m)$ be the linear dimension of the kernel of
the linear operator ${\rm ad}_{P^L}$:
\begin{equation}\label{sigma}
\sigma:=\dim {\rm Ker}\, {\rm ad}_{P^L}\,.
\end{equation}
In other words, $\sigma$ is the linear dimension of the subalgebra
of all matrices $A\in so(n)$ such that $[P^L,A]=0$.

\begin{lem}\label{sigman}
For all $m\in M$, where $M=T^*SO(n)$, we have
\begin{equation}\label{sigmar}
\dim {\rm Ker}\, {\rm ad}_{P^R}= \dim {\rm Ker}\, {\rm ad}_{P^L}=
\sigma \,.
\end{equation}
In addition, for all $m\in \mtyp$ (hence, for almost all $m\in M$)
we have
\begin{equation}\label{sign2}
\sigma= \ipn2\,.
\end{equation}
\end{lem}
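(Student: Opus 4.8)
The plan is to obtain the first equality (\ref{sigmar}) from the conjugacy relation (\ref{mpy}) between $P^L$ and $P^R$, and the second equality (\ref{sign2}) as an immediate consequence of Corollary \ref{n/2}. Nothing here should require a genuinely difficult argument: the two needed ingredients — the intertwining of ${\rm ad}_{P^L}$ and ${\rm ad}_{P^R}$ by an orthogonal conjugation, and the dimension count $\dim\ker{\rm ad}_a=\ipn2$ for a regular skew-symmetric $a$ — are already available.

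First I would fix $m\in T^*_X G$ and consider the linear map $\phi_X:\galg\to\galg$ defined by $\phi_X(A)=XA\tilde X$. Since $X$ is orthogonal, $\widetilde{XA\tilde X}=X\tilde A\tilde X=-XA\tilde X$, so $\phi_X$ indeed maps $so(n)$ into itself, and it is invertible with inverse $B\mapsto\tilde X BX$. Using (\ref{mpy}) and the fact that conjugation by a fixed matrix is a homomorphism for the matrix commutator, one gets, for every $A\in\galg$,
\[
[P^R(m),\phi_X(A)]=[XP^L(m)\tilde X,\,XA\tilde X]=X[P^L(m),A]\tilde X=\phi_X\big([P^L(m),A]\big)\,,
\]
that is, ${\rm ad}_{P^R}\circ\phi_X=\phi_X\circ{\rm ad}_{P^L}$. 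Hence $\phi_X$ restricts to a linear isomorphism from $\ker{\rm ad}_{P^L}$ onto $\ker{\rm ad}_{P^R}$, which gives $\dim\ker{\rm ad}_{P^R}=\dim\ker{\rm ad}_{P^L}=\sigma$, i.e. (\ref{sigmar}), at every point $m\in M$.

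Then, for $m\in\mtyp$, the matrix $P^L(m)$ lies in $\gtyp$ by the very definition of $\mtyp$, i.e. it is a skew-symmetric matrix with pairwise distinct eigenvalues. Applying Corollary \ref{n/2}, equation (\ref{kera}), with $a=P^L(m)$ yields $\dim\ker{\rm ad}_{P^L(m)}=\ipn2$, that is $\sigma=\ipn2$, which is (\ref{sign2}); and, as already observed before the statement of the lemma, $M\setminus\mtyp$ is closed with vanishing measure, so this holds for almost all $m\in M$.

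The only points that call for (routine) care are checking that $\phi_X$ preserves skew-symmetry and is invertible, and keeping in mind that (\ref{mpy}) applies because $P^R(m)$ and $P^L(m)$ are the right- and left-invariant momentum matrices evaluated at the same point $m\in T^*_X G$. There is no real obstacle.
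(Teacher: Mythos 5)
Your proof is correct and follows essentially the same route as the paper: your map $\phi_X(A)=XA\tilde X$ is precisely the adjoint action ${\rm Ad}_X$, and the intertwining ${\rm ad}_{P^R}\circ\phi_X=\phi_X\circ{\rm ad}_{P^L}$ deduced from (\ref{mpy}) is exactly the conjugation identity the paper uses to match the two kernels, while (\ref{sign2}) is obtained in both cases by applying formula (\ref{kera}) of Corollary \ref{n/2} to $P^L(m)\in\gtyp$. No issues.
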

\begin{proof}
According to formula (\ref{mpy}), $\forall\,A\in \galg= so(n)$ we
have
\[
[P^L,A]=[X^{-1}P^R X,A]=X^{-1}[P^R,{\rm Ad}_X (A)]X\,,
\]
where ${\rm Ad}_X (A)=XAX^{-1}$ is the image of $A$ with respect
to adjoint action ${\rm Ad}_X: \galg \to \galg$ of $X$ on $\galg$.
Hence, $A\in {\rm Ker}\, {\rm ad}_{P^L}$ if and only if ${\rm
Ad}_X (A)\in {\rm Ker}\, {\rm ad}_{P^R}$. This implies
(\ref{sigmar}), since the adjoint action ${\rm Ad}_X$ is an
algebra automorphism of $\galg$. Equality (\ref{sign2}) follows
from formula (\ref{kera}) of corollary \ref{n/2}.
\end{proof}

\begin{prop}\label{br}
Let us consider the set $B=(P^L, P^R)$ of $2N$ functions defined
by formula (\ref{iplpr}). At any point $m\in M=T^*G$, where
$G=SO(n)$, we have
\begin{align}
(\sspan dB)^\angle&= \Big\{a\in T^*_m M:a=\sum_{i<j} A_{ij}
dP^L_{ij}\,, \, \tilde A=-A\,, \,
[P^L,A]=0\Big\} \nonumber\\
&= \Big\{a\in T^*_m M:a=\sum_{i<j} A_{ij} dP^R_{ij}\,, \, \tilde
A=-A\,,\, [P^R,A]=0\Big\} \label{kpr}
\end{align}
and
\begin{equation} \label{rankb}
\rank B = 2N-\sigma \,,
\end{equation}
where $\sigma$ is defined by formula (\ref{sigma}).
\end{prop}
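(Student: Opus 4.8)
The plan is to identify the skew-orthogonal complement $(\sspan dB)^\angle$ explicitly and then read off its dimension. By the corollary preceding Lemma \ref{skew}, we already know that $(\sspan dB)^\angle = \sspan dP^L \cap \sspan dP^R$, so the first task is to describe this intersection. An element $a$ of $\sspan dP^L$ can be written $a=\sum_{i<j}A_{ij}\,dP^L_{ij}$ for some skew-symmetric matrix $A$ (the coefficients $A_{ij}$ for $i<j$ are uniquely determined since, by Proposition \ref{r0}, the differentials $dP^L_{ij}$ are linearly independent). The key computational step is to test when such an $a$ is skew-orthogonal to all of $\sspan dP^L$ itself: using $\{f,h\}=\Pi(df,dh)$ and the bracket relations (\ref{plpl}), together with formula (\ref{plu}), one gets $\Pi\big(a, dP^L_{hk}\big) = \sum_{i<j}A_{ij}\{P^L_{ij},P^L_{hk}\} = \{P^L_A, P^L_{hk}\} = [P^L,A]_{hk}$. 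Hence $a\in(\sspan dP^L)^\angle$ precisely when $[P^L,A]=0$.

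Now I invoke Proposition \ref{r0}: $(\sspan dP^L)^\angle = \sspan dP^R$, so the condition $a\in\sspan dP^R$ is equivalent to $a\in\sspan dP^L$ together with $[P^L,A]=0$. This gives the first description in (\ref{kpr}). For the second description, observe that an element $a=\sum_{i<j}A_{ij}\,dP^L_{ij}$ satisfying $[P^L,A]=0$ lies in $\sspan dP^R$, so it can also be written as $\sum_{i<j}B_{ij}\,dP^R_{ij}$ for a unique skew-symmetric $B$; by the symmetric version of the computation above (using (\ref{prpr}) and (\ref{pru})) one has $\Pi\big(a,dP^R_{hk}\big)=-[P^R,B]_{hk}$, and since $a\in\sspan dP^R=(\sspan dP^L)^\angle$ implies $a\in(\sspan dP^R)^\angle$ by the second equality in (\ref{pskew}), we get $[P^R,B]=0$. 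Conversely any such $\sum B_{ij}dP^R_{ij}$ with $[P^R,B]=0$ lies in $\sspan dP^L$ by the same reasoning, so the two sets coincide. (One should also note the coefficient matrices $A$ and $B$ are related via the conjugation (\ref{mpy}), consistent with Lemma \ref{sigman}, but this relation is not needed for the dimension count.)

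Finally, the dimension formula (\ref{rankb}) follows by combining Lemma \ref{rskew} with the description just obtained. The map $A\mapsto \sum_{i<j}A_{ij}\,dP^L_{ij}$ from skew-symmetric matrices to $\sspan dP^L$ is a linear isomorphism (again by the independence of the $dP^L_{ij}$), and it carries the subalgebra $\{A:[P^L,A]=0\}=\ker\,\mathrm{ad}_{P^L}$ isomorphically onto $(\sspan dB)^\angle$. Therefore $\dim(\sspan dB)^\angle=\dim\ker\,\mathrm{ad}_{P^L}=\sigma$. Lemma \ref{rskew} then gives $\rank B = 2N - \dim(\sspan dB)^\angle = 2N-\sigma$, as claimed. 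I expect the only delicate point to be the careful bookkeeping in the bracket computation $\Pi(a,dP^L_{hk})=[P^L,A]_{hk}$ — in particular being consistent about the factor $-\tfrac12\Tr$ identification in (\ref{plu}) and about which differentials form a basis — but this is exactly the content of Proposition \ref{r0} and the relations (\ref{plpl})–(\ref{pru}) already established, so no genuinely new idea is required.
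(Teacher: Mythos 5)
Your proposal is correct and follows essentially the same route as the paper's own proof: reduce to $\sspan dP^L \cap (\sspan dP^L)^\angle$ via (\ref{pskew})--(\ref{spandba}), compute $\Pi(a,dP^L_{hk})=[P^L,A]_{hk}$ using (\ref{plu}), and then identify $(\sspan dB)^\angle$ with $\ker\,{\rm ad}_{P^L}$ to get the dimension count via Lemma \ref{rskew}. No gaps.
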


\begin{proof}
Equalities (\ref{pskew}) and (\ref{spandba}) imply that
\[
({\rm Span}\,dB)^\angle = {\rm Span}\,dP^L \cap ({\rm Span}\,
dP^L)^\angle \,.
\]
Therefore $({\rm Span}\,dB)^\angle$ is the set of all those
elements $a \in {\rm Span}\,dP^L$ such that $\Pi(a,dP^L_{hk})=0\
\forall\, h,k$ with $1\leq h<k \leq n$. According to (\ref{rank}),
$dP^L$ is a set of linearly independent elements of $T^*_m G$.
Hence any covector $a\in {\rm Span}\,dP^L$ can be expressed as $a=
\sum_{i<j} A_{ij} dP^L_{ij}$, where $A_{ij}$ are elements of a
univocally determined skew-symmetric matrix $A$. We have
\begin{equation}\label{pipl}
\Pi(a,dP^L_{hk})= \sum_{i<j} A_{ij} \Pi(dP^L_{ij}, dP^L_{hk})
=\{P^L_A, P^L_{hk}\}= [P^L,A]_{hk}\,,
\end{equation}
where for the last equality use has been made of formula
(\ref{plu}). Therefore $a\in ({\rm Span}\, dP^L)^\angle$ if and
only if $[P^L,A]=0$. This implies the first equality of formula
(\ref{kpr}); the second one can be obtained in a similar way.

Formula (\ref{kpr}) shows that $({\rm Span}\,dB)^\angle$ is in
one-to-one correspondence with the subalgebra of matrices $A\in
so(n)$ such that $[P^L,A]=0$. Hence
\begin{equation} \label{spandbo}
\dim ({\rm Span}\,dB)^\angle= \dim \ker {\rm ad}_{P_L}= \sigma\,,
\end{equation}
so that (\ref{rankb}) is obtained by applying lemma \ref{rskew} to
the set $B$.
\end{proof}

From formulas (\ref{rankb}) and (\ref{sign2}) we immediately
obtain the following:
\begin{cor}\label{br1}
At all points of $\mtyp$ (hence, almost everywhere on $M=T^*G$) we
have
\begin{equation} \label{rank2}
\rank B = 2N-\left[\frac n2\right]\,.
\end{equation}
\end{cor}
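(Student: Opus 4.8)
The plan is to obtain the statement by directly combining the two substantive results already in hand: the identity $\rank B = 2N - \sigma$ from Proposition \ref{br}, and the evaluation $\sigma = \ipn2$ on $\mtyp$ from Lemma \ref{sigman}.

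First I would recall that, by the very definition of $\mtyp$, a point $m$ lies in $\mtyp$ precisely when the skew-symmetric matrix $P^L(m)$ belongs to $\gtyp$, i.e.\ has pairwise distinct eigenvalues. For such an $m$, equality (\ref{sign2}) of Lemma \ref{sigman} gives $\sigma(m) = \dim \ker {\rm ad}_{P^L(m)} = \ipn2$; this in turn rests on Corollary \ref{n/2}, where the dimension of the centralizer of a typical skew-symmetric matrix was computed via the block-diagonal normal form of Lemma \ref{skew}.

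Then I would substitute this value of $\sigma$ into formula (\ref{rankb}) of Proposition \ref{br}, which is valid at every point $m \in M$, to conclude that $\rank B = 2N - \ipn2$ at all points of $\mtyp$. The final remark — that $\mtyp$ has full measure, so that the equality holds almost everywhere on $M = T^*G$ — has already been recorded in the text preceding Lemma \ref{sigman}, where it is noted that $M \setminus \mtyp$ is a closed set of vanishing measure, since the same is true of $\galg \setminus \gtyp$.

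Honestly, there is no real obstacle at this stage: the corollary is a one-line consequence of earlier work, and all the genuine content has been front-loaded. The delicate steps were the skew-orthogonality relations $(\sspan dP^L)^\angle = \sspan dP^R$ established in Proposition \ref{r0}, their propagation to the identity $\dim (\sspan dB)^\angle = \sigma$ in Proposition \ref{br}, and the linear-algebra computation of $\dim \ker {\rm ad}_a$ for typical $a$ in Corollary \ref{n/2}. Once those are secured, only the arithmetic of inserting one formula into another remains.
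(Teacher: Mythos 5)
Your proof is correct and is exactly the paper's argument: the corollary is obtained by substituting the value $\sigma=\left[\frac n2\right]$ on $\mtyp$ from Lemma \ref{sigman} into the identity $\rank B = 2N-\sigma$ of Proposition \ref{br}. Nothing further is needed.
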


The previous corollary, together with proposition \ref{strind2},
implies that there exist at most $\ipn2$ functionally independent
functions on $M$ which are in involutions with the whole set $B$.
In order to explicitly construct a set of $\ipn2$ such functions,
it is useful to extend to Lie algebras the notion of Casimir
function which has already been introduced for Lie co-algebras in
\cite{part2}.

\begin{defn} \label{defcas}
A {\it Casimir function} on a Lie algebra $\galg$ is an invariant
of the adjoint action of the local Lie group $G$, i.e., a function
which is constant on the orbits of this action. In other words, a
Casimir function $K:\galg \to \rn$ is a first integral which is
common to all differential equations in $\galg$ of type $\dot b =
[a, b]$, where $b\in \galg$ and $a$ is a fixed element of algebra
$\galg$.

Let us consider the case $\galg=so(n)$, i.e., $\galg$ is the
algebra of skew-symmetric matrices. Fix $A\in \galg$ and consider
the characteristic polynomial $\det(A-\lambda E)$, where $E$ is
the identity matrix. The relation $\tilde A=-A$ implies
$\det(A+\lambda E)= \det(\tilde A+\lambda E)= (-1)^n (A-\lambda
E)$, so that the characteristic polynomial has the form
$\det(\lambda E-A)= \lambda^n+ C_{1}(A) \lambda^{n-2}+ C_{2}(A)
\lambda^{n-4} + \cdots$. The coefficients $C_1(A), \dots, C_s(A)$
of this polynomial, where $s= \left[\frac n2 \right]$, are clearly
polynomial functions of the $N=n(n-1)/2$ independent elements
$A_{ij}$, $1\leq i<j \leq n$, of the skew-symmetric matrix $A$.
Such elements are the coefficients of $A$ with respect to the
basis $D^{ij}$ of $\galg$ given by formula (\ref{aij}). We will
call $C=(C_1, \dots, C_s)$ the {\it standard set of Casimir
functions} on $so(n)$.
\end{defn}

\begin{lem} \label{stcas}
The functions of the standard set $C$ of Casimir functions on
$\galg= so(n)$ are really Casimir functions in the sense of
definition \ref{defcas}. Moreover, the functions of set $C$ form a
basis in the space of all Casimir functions in the following
``functional'' sense. The differentials of the functions of set
$C$ are linearly independent at all points of $\gtyp$: therefore,
they are linearly independent at almost all points of $\galg$.
Moreover, every Casimir function $K$ on $\galg$ can be locally
expressed as a function of the elements of this set: $K=K(C)$.
\end{lem}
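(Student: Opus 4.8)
The plan is to prove Lemma \ref{stcas} in three parts, corresponding to the three assertions in the statement: (i) the $C_i$ are genuinely Casimir functions; (ii) their differentials are independent on $\gtyp$; (iii) every Casimir function is locally a function of $C=(C_1,\dots,C_s)$.

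For part (i), I would observe that the characteristic polynomial of a matrix is invariant under conjugation: if $b = X a X^{-1}$ for $X\in SO(n)$, then $\det(\lambda E - b) = \det(\lambda E - a)$, so each coefficient $C_i$ is constant on adjoint orbits. Differentiating the one-parameter subgroup $X(t) = \exp(ta)$ gives precisely that $C_i$ is a first integral of every equation $\dot b = [a,b]$, which is the infinitesimal form in Definition \ref{defcas}. This is routine and is the easy part.

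For part (ii), the key is the normal block-diagonal form of Lemma \ref{skew}. For $A\in\gtyp$ with blocks determined by parameters $\alpha_1,\dots,\alpha_s$, the eigenvalues of $A$ are $\pm i\alpha_1,\dots,\pm i\alpha_s$ (and $0$ for odd $n$), all distinct, so the characteristic polynomial is $\prod_{k=1}^s(\lambda^2+\alpha_k^2)$ times $\lambda$ (for odd $n$). Hence $C_i(A)$ is, up to sign, the $i$-th elementary symmetric polynomial in $\alpha_1^2,\dots,\alpha_s^2$. Restricting to the $s$-dimensional subspace of block-diagonal matrices (coordinatized by $\alpha_1,\dots,\alpha_s$), the map $(\alpha_k)\mapsto(C_i)$ has Jacobian equal (up to a nonzero constant) to the Vandermonde-type determinant in $\alpha_1^2,\dots,\alpha_s^2$, which is nonzero precisely because the $\alpha_k^2$ are pairwise distinct on $\gtyp$. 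This shows the differentials $dC_i$, restricted already to this slice, are independent at $A$; a fortiori they are independent in $T^*_A\galg$. Since $\galg\setminus\gtyp$ has vanishing measure, independence holds almost everywhere. The main obstacle here is bookkeeping the sign conventions relating $C_i$ to the elementary symmetric functions and confirming that the restriction-to-the-slice argument suffices — but since any orbit meets the slice transversally along directions on which $C_i$ is by construction constant, this is clean.

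For part (iii), let $K$ be any Casimir function near a point of $\gtyp$. Since $K$ is constant on adjoint orbits, and near a typical point the orbits are the level sets of the submersion $C=(C_1,\dots,C_s)$ (by part (ii), $C$ has rank $s$, and the orbit through $A\in\gtyp$ has dimension $N - \dim\ker\,\mathrm{ad}_A = N - s$ by Corollary \ref{n/2}, matching the fibers of $C$), the function $K$ must be constant on each fiber of $C$. By the standard fact that a function constant on the fibers of a submersion factors locally through it, there exists a local function $\varphi$ with $K=\varphi(C_1,\dots,C_s)$. I would want to state the dimension count explicitly: the orbit dimension equals $N-s$, which equals the fiber dimension $N-s$ of the submersion $C:\galg\to\rn^s$, so fibers and orbits coincide locally near $\gtyp$. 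I expect this dimension-matching step — asserting that the connected components of $C$-fibers are exactly adjoint orbits near $\gtyp$ — to be the subtle point, since it requires that the orbit is closed (or at least locally closed) inside the fiber; this follows because both are smooth submanifolds of the same dimension and one is contained in the other. Then $K=K(C)$ follows immediately.
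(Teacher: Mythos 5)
Your proposal is correct and follows essentially the same route as the paper: conjugation-invariance of the characteristic polynomial for part (i), reduction to the normal block-diagonal form and the Jacobian of the elementary symmetric functions of $\alpha_1^2,\dots,\alpha_s^2$ for part (ii), and the codimension-$s$ orbit count from Corollary \ref{n/2} for part (iii). The only point worth flagging is that the Jacobian on your slice is $2^s\alpha_1\cdots\alpha_s\prod_{i<j}(\alpha_i^2-\alpha_j^2)$, so its nonvanishing uses $\alpha_k>0$ in addition to the distinctness of the $\alpha_k^2$ --- both of which are indeed guaranteed on $\gtyp$.
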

\begin{proof}
These are actually well-known facts, but we give the proof for
completeness. The adjoint action of the group $G=SO(n)$ on the Lie
algebra $\galg =so(n)$ takes in the matrix representation the form
$A \mapsto {\rm Ad}_X(A)=XAX^{-1}$, where $X\in G$, $A\in \galg$.
Therefore $\det(A -\lambda E)= \det({\rm Ad}_X(A) -\lambda E)$,
i.e., the polynomial $\det(A -\lambda E)$ in the variable
$\lambda$ is invariant under the adjoint action of $G$. This means
that the coefficients of this polynomial, i.e., the functions of
set $C$, are also invariant, and are thus Casimir functions in the
sense of definition \ref{defcas}.

Let us now examine the functional independence of the $s$
functions of set $C$, where $s=\ipn2$, at a given point $\bar B\in
\galg$. To this end, we shall evaluate the rank of the $s\times N$
matrix
\[
J= J(\bar B):= \left(\left.\frac{\partial C_h(B)} {\partial
B_{ij}}\right|_{B= \bar B} \,, \ 1\leq h \leq s\,, \ 1\leq i<j
\leq n\right)\,,
\]
whose rows are labelled by the index $h$ and the columns by the
double index $ij$. In the above formula, we have denoted with
$\partial C_h(B)/ \partial B_{ij}$ the partial derivatives of the
Casimir function $C_h$ with respect to the independent elements
$B_{ij}$ of the skew symmetric matrix $B$. We know from lemma
\ref{skew} that there exists $X= X(\bar B)\in G$ such that ${\rm
Ad}_X(\bar B) =\bar A$, where $\bar A$ is a skew-symmetric matrix
in the normal form (\ref{block}), such that $\alpha_k\geq 0$ for
$k=1, \dots, s$, and $\pm i\alpha_1, \dots, \pm i\alpha_s, (0)$
are the eigenvalues of $\bar B$. Since $C$ is a set of Casimir
functions, we have that $C(\bar B)=C(\bar A)$. Taking into account
that ${\rm Ad}_X$ is an invertible linear operator in $\galg$, it
follows that $\rank J(\bar B) =\rank J(\bar A)$.

Let $P(B, \lambda):= \det(\lambda E-B)$ denote the characteristic
polynomial for the matrix $B\in \galg$. For any $\lambda\in \rn$
it is easy to see that
\begin{equation} \label{offd}
\left. \frac{\partial P(B, \lambda)} {\partial B_{ij}}
\right|_{B=\bar A} =0 \quad \forall \,i,j=1, \dots, n \mbox{ such
that }j>i+1\,.
\end{equation}
Recalling the definition of the set $C$, equality (\ref{offd})
implies that all the columns of matrix $J(\bar A)$, corresponding
to indexes $ij$ with $j>i+1$, are zero. We can thus write $\rank
J(\bar A)= \rank \tilde J(\bar A)$, where we have introduced the
$s\times s$ matrix
\begin{equation} \label{kkb}
\tilde J= \tilde J(\bar A):= \left(\left.\frac{\partial C_h(B)}
{\partial B_{2i-1, 2i}} \right|_{B=\bar A}\,, \ 1\leq h \leq s\,,
\ 1\leq i \leq s\right)\,.
\end{equation}

It is easy to see that, for a matrix $\bar A$ of the form
(\ref{block}), the characteristic polynomial has the form
$\det(\lambda E -\bar A)= (\lambda) \prod_{i=1}^s(\lambda^2+
\alpha_i^2) =\lambda^n +\sum_i \alpha_i^2 \lambda^{n-2}+
\sum_{i<j} \alpha_j^2 \alpha_j^2\lambda^{n-4}+ \dots$. Therefore
the standard set of Casimir functions can be written as $C(\bar
A)= D(\beta(\alpha))$, where $\alpha=(\alpha_1, \dots, \alpha_s)$,
$\beta(\alpha)= (\beta_1(\alpha), \dots, \beta_s(\alpha)):=
(\alpha_1^2, \dots, \alpha_s^2)$, $D(\beta)= (D_1(\beta), \dots,
D_s(\beta))$, and
\[
D_h(\beta):= \sum_{i_1 <i_2 <\cdots <i_h} \beta_{i_1} \beta_{i_2}
\cdots \beta_{i_h}  \quad \mbox{for} \quad h=1, \dots, s\,.
\]
Note that in particular $D_1(\beta)= \sum_{i=1}^s \beta_i$ and
$D_s(\beta) = \beta_1 \beta_2 \cdots \beta_s$. Since $\bar
A_{2i-1, 2i}= \alpha_i$, we have
\[
\left.\frac{\partial C_h(B)} {\partial B_{2i-1, 2i}} \right|_{B
=\bar A} =2\alpha_i \left.\frac{\partial D_h(\beta)} {\partial
\beta_i}\right|_{\beta=\beta(\alpha)}\,.
\]
Therefore, recalling (\ref{kkb}), we have $\det \tilde J(\bar A)=
2^s \alpha_1 \alpha_2 \cdots \alpha_s K_s(\beta(\alpha))$, where
$K_s(\beta)$ is the determinant of the $s\times s$ jacobian matrix
$\frac{\partial D}{\partial \beta}(\beta)$. It is easy to see that
$K_s(\beta)$ is a symmetric polynomial of order $s(s-1)/2$ in the
variables $\beta$, which vanishes whenever $\beta_i= \beta_j$ for
some $i\neq j$. Therefore we can write $K_s(\beta) = c_s
\prod_{i<j} (\beta_i - \beta_j)$. In order to evaluate the
constant coefficient $c_s$, we note that, when $\beta_s=0$, one
has $\partial D_s/
\partial \beta_i =0$ for $i=1, \dots, s-1$, and $\partial D_s/
\partial \beta_s =\beta_1 \beta_2 \cdots \beta_{s-1}$. It easily
follows that $K_s(\beta)= \beta_1 \beta_2 \cdots
\beta_{s-1}K_{s-1}(\beta')$ for $\beta_s=0$, where $\beta'= (\beta
_1, \dots, \beta_{s-1})$. Using this fact, it can be easily shown,
by induction with respect to $s$, that $c_s=1\,\forall\, s\in
\nn$. We thus conclude that
\begin{equation} \label{tildej}
\det \tilde J(\bar A)= 2^s \alpha_1 \alpha_2 \cdots \alpha_s
\prod_{i<j} (\alpha^2_i - \alpha^2_j)\,.
\end{equation}

When all eigenvalues of matrix $\bar B$ are pairwise different, we
have $\alpha_k>0$, $\alpha_h\neq \alpha_k\,\forall\, h,k=1, \dots,
s,\ h\neq k$. In such cases we see therefore from (\ref{tildej})
that $\det \tilde J(\bar A)\neq 0$, so that $\rank J(\bar B)=
\rank \tilde J(\bar A)=s$. This means that the differentials of
the $s$ Casimir functions of set $C$ are linearly independent at
$\bar B$.

In order to complete the proof of the lemma, we note that,
according to corollary \ref{n/2}, at almost each point $B\in
\galg$ the set of vectors $[A,B]$, where $A$ varies on all
$\galg$, forms a subspace of codimension $s=\left[\frac n2\right]$
in $\galg$. Since the differential of any Casimir function must be
zero when acting on this subspace, we deduce that the
differentials of any $r$ Casimir functions, where $r\geq s$, are
linearly dependent at any point $B\in \galg$. Therefore the
differential of any Casimir function $K$ at any point of algebra
$so(n)$ is a linear combination of the differentials of the
functions of the standard set, i.e., we have locally $K=K(C)$.
\end{proof}

Note that the set of all Casimir functions on $\galg$ defines the
orbits ${\cal O}$ of the adjoint representation of the
corresponding local Lie group $G$ by their common level surfaces:
$C^{-1}(c)= {\cal O}$, where $C^{-1}(c)$ is the pre-image of a
point $c \in \rn^s$, and $s=\left[\frac n2\right]$. More exactly,
this is true in the domain of regularity of the map $C:\galg \to
\rn^s$, defined by the set $C$. From this it is easy to obtain
that the typical orbit ${\cal O}$ has codimension $s$ in $\galg$,
i.e., $\dim {\cal O} = \dim \galg -s$.

For $\galg= so(n)$, we can associate with every element $b\in
\galg^*$ a unique skew symmetric matrix $B$ such that $(A,b)=
\frac 12 \Tr (\tilde A B)= \sum_{i<j} A_{ij} B_{ij}\ \forall \,
A\in \galg$. Clearly, the elements $B_{ij}$, $1\leq i<j\leq n$, of
matrix $B$ are just the coefficients of $b$ with respect to the
dual basis of the basis $D^{ij}$ of $so(n)$ defined by formula
(\ref{aij}). According to this one-to-one correspondence, in the
following we will often identify $\galg^*$ with the co-algebra of
skew-symmetric matrices, and we shall denote with $\gtyp^*$ the
set of all elements of $\galg^*$ whose eigenvalues are pairwise
different. Obviously, almost all elements of $\galg^*$ also
belongs to $\gtyp^*$. It is easy to see that, in the matrix
representation, the co-adjoint action of the group $G=SO(n)$ on an
element $A\in \galg^*$ has the same form as the adjoint action on
the corresponding element of $\galg$, i.e., $A\mapsto X A X^{-1}$,
with $X\in G$. It follows that the standard set of Casimir
functions on the Lie algebra $so(n)$, introduced in definition
\ref{defcas}, also represents a standard set of Casimir functions
on the space of skew-symmetric matrices, when the latter is
identified with the co-algebra $so(n)^*$. This can be formally
stated as follows.

\begin{defn} 
Let $\galg^*$ be the dual co-algebra of $\galg=so(n)$. Fix a
skew-symmetric matrix $A\in \galg^*$ and consider the
characteristic polynomial $\det(\lambda E-A)= \lambda^n+ C_{1}(A)
\lambda^{n-2}+ C_{2}(A) \lambda^{n-4} + \cdots$. The coefficients
$C_1(A), \dots,$ $C_s(A)$ of this polynomial, where $s= \left[
\frac n2 \right]$ is the integer part of $\frac n2$, are
polynomial functions of the set $\alpha$ of the matrix elements of
$A$. We will call $C=(C_1, \dots, C_s)$ the {\it standard set of
Casimir functions} on $so(n)^*$.
\end{defn}

\begin{lem} \label{stcas2}
The functions of the standard set $C$ of Casimir functions on the
coalgebra $\galg^*$, where $\galg=so(n)$, are really Casimir
functions in the sense of the definition 
given in \cite{part2}. Moreover, the functions of set $C$ form a
basis in the space of all Casimir functions in the following
``functional'' sense. The differentials of the functions of set
$C$ are linearly independent at all points of $\gtyp^*$.
Therefore, they are linearly independent at almost each point of
$\galg^*$. Moreover, every Casimir function $K$ on $\galg^*$ can
be locally expressed as a function of the elements of this set:
$K=K(C)$.
\end{lem}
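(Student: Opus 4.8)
The plan is to deduce this lemma directly from Lemma \ref{stcas}, using the identification of $\galg^*$ with the space of skew-symmetric matrices described in the paragraph preceding the definition of the standard set $C$ on $so(n)^*$. Under this identification an element $b\in\galg^*$ corresponds to the skew-symmetric matrix $B$ characterized by $(A,b)=\frac12\Tr(\tilde A B)=\sum_{i<j}A_{ij}B_{ij}$ for all $A\in\galg$, and — as already observed in the text — the co-adjoint action of $X\in SO(n)$ on $\galg^*$ then takes the matrix form $B\mapsto XBX^{-1}$, exactly as the adjoint action on the corresponding element of $\galg=so(n)$. Consequently a function $K:\galg^*\to\rn$ is a Casimir function in the sense of the definition given in \cite{part2} (i.e.\ an invariant of the co-adjoint action) if and only if the same function, regarded on the space of skew-symmetric matrices identified with $\galg$, is a Casimir function in the sense of Definition \ref{defcas}. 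Moreover the standard set $C=(C_1,\dots,C_s)$ on $so(n)^*$ consists of exactly the same polynomials in the matrix entries as the standard set $C$ on $so(n)$, namely the coefficients of $\det(\lambda E-A)$.

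Granting this, all three assertions of the lemma become restatements of the corresponding assertions of Lemma \ref{stcas}. First, since $\det(\lambda E-XAX^{-1})=\det(\lambda E-A)$, the polynomials $C_1,\dots,C_s$ are invariant under the co-adjoint action, hence are genuine Casimir functions on $\galg^*$. Second, the linear independence of the differentials $dC_1,\dots,dC_s$ at every point of $\gtyp^*$ follows from exactly the same computation as in the proof of Lemma \ref{stcas}: the vanishing of the off-diagonal columns of the Jacobian via (\ref{offd}), the reduction $\rank J(\bar B)=\rank\tilde J(\bar A)$ (which uses only that ${\rm Ad}_X$ is an invertible linear operator on $\galg$), and the explicit evaluation $\det\tilde J(\bar A)=2^s\alpha_1\cdots\alpha_s\prod_{i<j}(\alpha_i^2-\alpha_j^2)$ in (\ref{tildej}) all go through verbatim, and this determinant is nonzero precisely when all eigenvalues of $\bar B$ are pairwise distinct. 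Since $\galg^*\setminus\gtyp^*$ has vanishing measure, linear independence almost everywhere follows.

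Third, for the local representability $K=K(C)$ of an arbitrary Casimir function, I would argue as in the proof of Lemma \ref{stcas} via Corollary \ref{n/2}: at a point $b\in\galg^*$ corresponding to $B$, the tangent space to the co-adjoint orbit is $\{[A,B]:A\in\galg\}={\rm ad}_B(\galg)$, which by (\ref{adg}) has codimension $s=\ipn2$ for $B\in\gtyp^*$; the differential of any Casimir function annihilates this subspace, so any $r\geq s$ Casimir differentials are linearly dependent at every point of $\galg^*$, whence $dK$ is pointwise a linear combination of $dC_1,\dots,dC_s$ and $K=K(C)$ holds locally. There is essentially no obstacle here: the only point that genuinely needs checking is that the co-adjoint action is given in matrix form by $B\mapsto XBX^{-1}$, and this has already been dispatched in the discussion preceding the definition of the standard set on $so(n)^*$; once it is granted, Lemma \ref{stcas2} is just Lemma \ref{stcas} transported along the identification $\galg^*\cong\galg$.
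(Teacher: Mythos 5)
Your proof is correct and follows exactly the route the paper intends: the paper gives no separate proof of Lemma \ref{stcas2}, treating it as an immediate consequence of Lemma \ref{stcas} together with the observation (made just before the definition of the standard set on $so(n)^*$) that under the trace-form identification the co-adjoint action takes the matrix form $B\mapsto XBX^{-1}$, identical to the adjoint action. Your write-up simply makes that transport explicit, including the correct identification of the co-adjoint orbit tangent space with ${\rm ad}_B(\galg)$ for the final assertion, so there is nothing to add.
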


\begin{prop}\label{coin}
Let $C$ be the standard set of Casimir functions on the Lie
algebra $so(n)$. Then $C(P^L)= C(P^R)$ on all symplectic manifold
$M=T^*G$.
\end{prop}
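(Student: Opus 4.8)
The statement is that the two Casimir functions $C(P^L)$ and $C(P^R)$ coincide as functions on $T^*G$, i.e., $C_h(P^L(m)) = C_h(P^R(m))$ for every $m\in M$ and every $h=1,\dots,s$. The plan is to reduce this to the algebraic fact that the Casimir functions $C_h$ are conjugation-invariant on $so(n)^*$, together with the relation $(\ref{mpy})$, $P^R(m) = X P^L(m) \tilde X$, where $X\in SO(n)$ is the base point of $m$ (so that $m\in T^*_X G$) and $\tilde X = X^{-1}$.

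First I would recall, from the discussion preceding Definition of the standard set on $so(n)^*$, that the co-adjoint action of $SO(n)$ on $\galg^*$ identified with skew-symmetric matrices is $A\mapsto XAX^{-1}$, and that by Lemma \ref{stcas2} (or already Lemma \ref{stcas}) each $C_h$ is invariant under this action. Concretely, $C_h$ is built from the coefficients of the characteristic polynomial $\det(\lambda E - A)$, and $\det(\lambda E - XAX^{-1}) = \det\bigl(X(\lambda E - A)X^{-1}\bigr) = \det(\lambda E - A)$; hence each coefficient, and therefore each $C_h$, satisfies $C_h(XAX^{-1}) = C_h(A)$ for all $X\in SO(n)$.

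Then I would simply combine the two ingredients. Fix an arbitrary point $m\in M = T^*G$, and let $X\in SO(n)$ be such that $m\in T^*_X G$. By $(\ref{mpy})$ we have $P^R(m) = X P^L(m) X^{-1}$. Applying the invariance of each $C_h$ with the matrix $X$ and the skew-symmetric matrix $A = P^L(m)$ gives $C_h(P^R(m)) = C_h\bigl(X P^L(m) X^{-1}\bigr) = C_h(P^L(m))$. Since this holds for every $h$ and every $m$, we conclude $C(P^L) = C(P^R)$ identically on $M$.

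There is essentially no main obstacle here: the proposition is a direct consequence of conjugation-invariance of the characteristic polynomial plus the already-established matrix relation $(\ref{mpy})$ between $P^L$ and $P^R$. The only point worth stating carefully is that the identification of $\galg^*$ with skew-symmetric matrices is the one under which the co-adjoint action is conjugation (noted in the text just before the second Definition), so that $(\ref{mpy})$ is literally a co-adjoint orbit relation; once that is acknowledged the argument is immediate.
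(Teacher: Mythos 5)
Your proof is correct and follows essentially the same route as the paper: both use the relation $P^R(m)=XP^L(m)\tilde X$ with $\tilde X=X^{-1}$ to conclude that the characteristic polynomials $\det(\lambda E-P^L)$ and $\det(\lambda E-P^R)$ coincide, and then read off equality of the Casimir functions as the coefficients of that polynomial. Nothing is missing.
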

\begin{proof}
From relation (\ref{mpy}), taking into account that $\tilde X=
X^{-1}\,\forall\, X\in SO(n)$, we obtain that $\det(\lambda
E-P^R)=\det(\lambda E-P^L)$ on all $M$ for any $\lambda \in \rn$.
The thesis then follows from definition \ref{stcas} of the set
$C$.
\end{proof}
In the following we shall denote with $\bar C$ the set $\bar
C:=C(P^L)= C(P^R)$ of $s=\ipn2$ real functions on $M$.

\begin{prop}\label{r1}
On all $M=T^*G$, where $G=SO(n)$, we have
\begin{equation} \label{cb}
\{\bar C, B\}_M =0  \,.
\end{equation}
Moreover, on all $\mtyp$ (hence, almost everywhere on $M$) we have
\begin{align}
\rank \bar C &= \left[ \frac n2 \right] \,, \label{n2}\\
{\rm Span}\,d\bar C &= ({\rm Span}\,dB)^\angle \,. \label{dbskew}
\end{align}
\end{prop}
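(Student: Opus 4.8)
The plan is to prove the three assertions in the order (\ref{cb}), (\ref{n2}), (\ref{dbskew}); the involution relation carries the real content, and the two rank/span statements then follow by a dimension count, just as (\ref{pskew}) was deduced in Proposition \ref{r0}.

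First I would prove (\ref{cb}). Write $\bar C_j = C_j(P^L)$. Since every $P^R_{hk}$ is in involution with every $P^L_{il}$ by (\ref{prpl}), and $\bar C_j$ is a function of $P^L$ alone, $\bar C_j$ is automatically in involution with each $P^R_{hk}$. For the brackets with the $P^L_{hk}$, let $A$ be the skew-symmetric matrix with entries $A_{il} = \partial C_j/\partial P^L_{il}$; then the Leibniz rule together with the identity $P^L_A = \sum_{i<l} A_{il} P^L_{il}$ and formula (\ref{plu}) gives $\{\bar C_j, P^L_{hk}\} = \{P^L_A, P^L_{hk}\} = [P^L,A]_{hk}$. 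Now $C_j$ is a Casimir function on $so(n)$, hence constant on the adjoint orbits, so its differential at $P^L$ annihilates the tangent space ${\rm ad}_{P^L}(\galg) = \{[D,P^L] : D\in\galg\}$ of the orbit through $P^L$ (cf.\ the proof of Lemma \ref{stcas}); because the trace form on $so(n)$ is nondegenerate and ${\rm ad}$-invariant, this is precisely the statement $[P^L,A]=0$. Hence $\{\bar C_j, P^L_{hk}\}=0$ for all $h,k$, and (\ref{cb}) holds on all of $M$.

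Next I would prove (\ref{n2}). By Proposition \ref{r0}, $\rank P^L = N = \dim so(n)$ at every point of $M$, so the map $P^L:M\to so(n)$ (regarded, via Remark \ref{typ}, as taking values among skew-symmetric matrices) is a submersion, and therefore $\rank \bar C$ at $m$ equals $\rank C$ evaluated at the point $P^L(m)$. If $m\in\mtyp$ then $P^L(m)\in\gtyp$, and by Lemma \ref{stcas} the differentials of $C_1,\dots,C_s$ are linearly independent there, i.e.\ $\rank C = s = \ipn2$. Hence $\rank \bar C = \ipn2$ on $\mtyp$.

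Finally, (\ref{dbskew}): by (\ref{cb}), $\Pi(d\bar C_j, dB_i) = \{\bar C_j, B_i\} = 0$ for all $i,j$, so $\sspan d\bar C \subseteq (\sspan dB)^\angle$. On $\mtyp$, Corollary \ref{br1} gives $\rank B = 2N - \ipn2$, so Lemma \ref{rskew} yields $\dim(\sspan dB)^\angle = 2N - \rank B = \ipn2$, while (\ref{n2}) gives $\dim \sspan d\bar C = \rank \bar C = \ipn2$. An inclusion of subspaces of equal finite dimension is an equality, which is (\ref{dbskew}). The one genuinely substantive step is the passage, in the proof of (\ref{cb}), from the adjoint-invariance of $C_j$ to the pointwise identity $[P^L,A]=0$ relating $P^L$ to the gradient matrix of $C_j$; once that is in hand, everything else is bookkeeping with the rank results already established.
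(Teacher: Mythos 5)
Your proof is correct, and it agrees with the paper on everything except the one step that carries the real content, namely $\{\bar C,P^L\}=0$. There you argue intrinsically: you form the gradient matrix $A$ of the Casimir $C_j$, use the chain rule together with (\ref{plu}) to write $\{\bar C_j,P^L_{hk}\}=[P^L,A]_{hk}$, and then invoke the adjoint-invariance of $C_j$ (its differential annihilates the orbit tangent space ${\rm ad}_{P^L}(\galg)$) plus the nondegeneracy of the trace form on $so(n)$ to conclude $[P^L,A]=0$. The paper instead sidesteps the Casimir property entirely at this point: it uses proposition \ref{coin} to rewrite $\bar C=C(P^L)$ as $C(P^R)$ and then gets $\{\bar C,P^L\}=0$ for free from (\ref{prpl}), exactly as $\{\bar C,P^R\}=0$ was obtained. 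The paper's route is shorter but leans on the global conjugation identity (\ref{mpy}) behind proposition \ref{coin}; yours is self-contained at the level of the Lie algebra and has the side benefit of making explicit the mechanism ($[P^L,A]=0$) that reappears in proposition \ref{br} and remark \ref{r1rem}. Your treatments of (\ref{n2}) (via the submersivity of $P^L$ from proposition \ref{r0} and lemma \ref{stcas}) and of (\ref{dbskew}) (inclusion from (\ref{cb}) plus the dimension count $\dim(\sspan dB)^\angle=2N-\rank B=\ipn2$ from corollary \ref{br1} and lemma \ref{rskew}) coincide with the paper's.
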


\begin{proof}
Since $\bar C= C(P^L)$, formula (\ref{prpl}) implies $\{\bar C,
P^R\} = 0$. On the other hand, according to proposition \ref{coin}
we also have $\bar C= C(P^R)$, so that using again (\ref{prpl}) we
obtain $\{\bar C, P^L\} = 0$. Equality (\ref{cb}) is thus proved.

Since both $\galg$ and $\galg^*$ have been identified with the set
of $n\times n$ skew-symmetric matrices, we have $\mtyp= \{m\in M :
P^L(m) \in \gtyp^*\}$. Lemma \ref{stcas} implies that $\rank C=
\left[\frac n2 \right]$ at all points of $\gtyp$. Moreover,
according to (\ref{rank}), $P^L:M \to \galg$ is a regular map.
Therefore, it follows from the definition $\bar C= C\circ P^L$
that equality (\ref{n2}) holds at all points $m\in \mtyp$.

Formula (\ref{cb}) implies that ${\rm Span}\,d\bar C \subseteq
({\rm Span}\,dB)^\angle$. Moreover, from (\ref{n2}),
(\ref{spandbo}), and (\ref{sign2}), we obtain that $\dim {\rm
Span}\,d\bar C= \left[ \frac n2 \right]= \dim ({\rm
Span}\,dB)^\angle$ at all points of $\mtyp$. These relations imply
(\ref{dbskew}).
\end{proof}

Taking into account (\ref{pskew}) and (\ref{spandba}), equality
(\ref{dbskew}) also implies
\begin{equation} \label{spanc}
{\rm Span}\,d\bar C = {\rm Span}\,dP^L \cap {\rm Span}\, dP^R
={\rm Span}\,dP^L \cap ({\rm Span}\, dP^L)^\angle \,.
\end{equation}

\begin{rem}\label{r1rem}
For any function $f:M\to \rn$, the condition $\{f,B\}=0$ is
equivalent to $df \in ({\rm Span}\,dB)^\angle$. According to
formula (\ref{dbskew}), the latter condition implies that locally
$f=g(\bar C)$, where $g:\rn^{[n/2]} \to \rn$ is an arbitrary
function. Recalling the definition of the set $\bar C$, this means
that $f=g(C(P^L)) = c(P^L)$, where $C$ is the standard set of
Casimir functions, and therefore $c:=g\circ C$ is also a Casimir
function. Taking into account proposition \ref{coin}, we also have
$f=c(P^R)$. We conclude that the functions on $M$ which are in
involution with the whole set $B$ are all and only the Casimir
functions of the set $P^L$ (or equivalently $P^R$).
\end{rem}

\subsection{Free rotation of a classical rigid body}

Let us consider the system with hamiltonian function of the form
\begin{equation} \label{hamrb}
H= \frac 1 2 \langle P^L, JP^L \rangle \,,
\end{equation}
where $J$ is an operator in the $N$-dimensional linear space
$\rn^N$ of impulses $P^L$, with $N=n(n-1)/2$. The map $P^L: T^*G
\to \rn^N= \galg$ is linear with respect to impulses $p$, see
formula (\ref{iplpr}). Therefore the function $H=H(P^L(X))$,
$H:T^*G \to \rn$, at any point $X\in G$ defines a quadratic form
on cotangent space $T^*_X G$ to $G$ at $X$, and this quadratic
form $H|_X :=H|_{T^*_X G}$ is invariant with respect to the action
of $G$ on itself by left shifts. This fact is the motivation of
the following definition.
\begin{defn}
We say that a system with hamiltonian function (\ref{hamrb})
describes the {\it free motion of a point on a Lie group $G$
provided with left-invariant metric}.
\end{defn}

Note that if the restriction $H|_{X=e}$ of $H$ on $T^*_e G$ is
positively defined, then $H$ defines on $G$ a left-invariant
riemannian metric, and the system with hamiltonian $H$ defines a
geodesic flow on cotangent bundle $T^*G$ which corresponds to this
metric. ($H$ can be considered as the kinetic energy $T$ of the
point.) From a mathematical point of view, the positivity of the
2-form $H|_{X=e}$, which is transferred on all $G$ by left shifts,
is not important (see \cite{Arnold}).

\begin{defn}
Let the operator $J$ in formula (\ref{hamrb}) be diagonal, and let
its element be such that
\begin{equation} \label{hamlam}
H= H^{\lambda}= \frac 12 \sum_{i<j} \frac
{\left(P^L_{ij}\right)^2}{\lambda_i +\lambda_j} \,,
\end{equation}
with $\lambda_i >0$ for $i=1, \dots, n$. In this particular case
we also say that the system with hamiltonian function
(\ref{hamrb}) describes the rotation of a {\it rigid body in
$n$-dimensional space, with generalized moments of inertia
$\lambda_1, \dots, \lambda_n$}. In the following of this section
we will always consider systems of this type.
\end{defn}

\begin{lem} \label{hpr}
If $H=H(P^L)$, then $\{H,P^R\}=0$.
\end{lem}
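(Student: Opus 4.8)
The plan is to exploit the fact that the Hamiltonian $H=H(P^L)$ is a function of the left-invariant momenta only, and that by relations \eqref{prpl} the left-invariant momenta Poisson-commute with the right-invariant ones: $\{P^L_{ij},P^R_{hk}\}=0$ for all indices. First I would observe that, as a function on $T^*G$, $H$ depends on the point $m$ only through the components $P^L_{ij}(m)$, so its differential $dH$ at any point lies in $\mathrm{Span}\,dP^L$; equivalently $H$ is a function of the $P^L_{ij}$'s alone. Then for any component $P^R_{hk}$ of $P^R$ I would compute the Poisson bracket by the Leibniz rule (chain rule for Poisson brackets):
\[
\{H,P^R_{hk}\}=\sum_{i<j}\frac{\partial H}{\partial P^L_{ij}}\,\{P^L_{ij},P^R_{hk}\}=0,
\]
where the vanishing of each term is exactly relation \eqref{prpl}. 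Since this holds for every $h<k$, we get $\{H,P^R\}=0$ as claimed.

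A cleaner variant, avoiding any appeal to partial derivatives with respect to the $P^L_{ij}$ (which requires knowing they form, together with suitable coordinates on the base, a coordinate-like system), is to use the skew-orthogonality already established. By Remark \ref{r1rem} and the surrounding discussion, or more directly from \eqref{prpl} and Proposition \ref{r0}, the subspace $\mathrm{Span}\,dP^R$ is skew-orthogonal to $\mathrm{Span}\,dP^L$, i.e. $\mathrm{Span}\,dP^R\subseteq(\mathrm{Span}\,dP^L)^\angle$. Now $H=H(P^L)$ forces $dH\in\mathrm{Span}\,dP^L$ at every point, hence $\{H,P^R_{hk}\}=\Pi(dH,dP^R_{hk})=0$ for each $h,k$, using that $dP^R_{hk}\in(\mathrm{Span}\,dP^L)^\angle$ means precisely $\Pi(w,dP^R_{hk})=0$ for all $w\in\mathrm{Span}\,dP^L$. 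This gives $\{H,P^R\}=0$ on all of $M=T^*G$.

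I do not expect any serious obstacle here: the statement is essentially a restatement of the ``bi-invariance'' structure $\{P^L,P^R\}=0$ combined with $H$ being left-invariant (so a function of $P^L$ only). The one point to be slightly careful about is the first step in the chain-rule argument — that $\{H,\cdot\}$ can be expanded via $\partial H/\partial P^L_{ij}$ — which is legitimate because $H$ factors through the map $m\mapsto(P^L_{ij}(m))_{i<j}$; the skew-orthogonality version sidesteps even this. Either way the proof is a couple of lines, invoking only \eqref{prpl} (and, in the second version, Proposition \ref{r0}).
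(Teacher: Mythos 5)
Your proof is correct and coincides with the paper's argument, which simply notes that the result follows from relation \eqref{prpl} via the Leibniz rule for Poisson brackets — exactly your first computation, spelled out. The skew-orthogonality variant is a valid alternative packaging of the same fact, but no different idea is involved.
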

This result follows from relation (\ref{prpl}).

\begin{lem} \label{hpl}
Let $H=H^{\lambda}$ have the form (\ref{hamlam}), and suppose that
$\lambda_i=\lambda_j$ for some $i,j$. Then $\{H,P_{ij}^L\}=0$.
\end{lem}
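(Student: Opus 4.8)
The plan is to compute $\{H, P^L_{ij}\}$ directly from the Poisson-bracket relations (\ref{plpl}) and to observe that the resulting expression vanishes identically as soon as $\lambda_i = \lambda_j$.

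First I would introduce the skew-symmetric matrix $\Omega$ with entries $\Omega_{hk} := P^L_{hk}/(\lambda_h + \lambda_k)$ (extended antisymmetrically, $\Omega_{kh} = -\Omega_{hk}$), so that by (\ref{hamlam}) one has $H = \tfrac12 \sum_{h<k}\Omega_{hk}P^L_{hk}$. Applying the Leibniz rule for Poisson brackets, and then using that both $\Omega_{hk}$ and $\{P^L_{hk},P^L_{ij}\}$ are antisymmetric under $h\leftrightarrow k$ (with the skew extension of $P^L$), we get
\[
\{H, P^L_{ij}\} = \sum_{h<k}\frac{P^L_{hk}}{\lambda_h+\lambda_k}\,\{P^L_{hk}, P^L_{ij}\} = \frac12\sum_{h,k}\Omega_{hk}\,\{P^L_{hk}, P^L_{ij}\}\,.
\]
Equivalently, by formula (\ref{plu}) with $A = D^{ij}$ (so that $P^L_{D^{ij}} = P^L_{ij}$), each bracket is an entry of a matrix commutator, $\{P^L_{hk}, P^L_{ij}\} = [D^{ij}, P^L]_{hk}$, and the right-hand side above is proportional to $\Tr\big([D^{ij},\Omega]\,P^L\big)$.

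Then I would evaluate this contraction using the explicit form (\ref{aij}) of $D^{ij}$, which has a single nonzero entry $+1$ in position $(i,j)$ and $-1$ in position $(j,i)$: only the $i$-th and $j$-th rows and columns of $\Omega$ and of $P^L$ enter, the four terms coming from (\ref{plpl}) collapse pairwise by skew-symmetry, and one is left with
\[
\{H, P^L_{ij}\} = \sum_{h=1}^n \left(\frac{1}{\lambda_i+\lambda_h} - \frac{1}{\lambda_j+\lambda_h}\right) P^L_{ih}\,P^L_{jh}\,.
\]
Under the hypothesis $\lambda_i = \lambda_j$ we have $\lambda_i + \lambda_h = \lambda_j + \lambda_h$ for every $h = 1,\dots, n$, so the coefficient in front of each summand vanishes, whence $\{H, P^L_{ij}\} = 0$. (The same fact admits a coordinate-free reading: by (\ref{plu}) the Hamiltonian flow of $P^L_{ij}$ acts on the matrix $P^L$ as infinitesimal conjugation by $D^{ij}$, i.e. as the rotation of the $(i,j)$-coordinate plane; decomposing $P^L$ into blocks relative to $\{i,j\}$ and its complement, each summand of the Hamiltonian (\ref{hamlam}) is either invariant under this rotation or of the form $(\lambda_i+\lambda_h)^{-1}\big((P^L_{ih})^2 + (P^L_{jh})^2\big)$, which is manifestly rotation-invariant precisely when $\lambda_i = \lambda_j$.)

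The computation itself is routine; the only point demanding care is the bookkeeping of signs and of the factors of $\tfrac12$ produced by the antisymmetric extension $P^L_{kh} = -P^L_{hk}$ and by the four-term structure constants (\ref{plpl}). No genuine obstacle arises.
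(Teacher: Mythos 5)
Your proof is correct: the direct computation of $\{H,P^L_{ij}\}$ from (\ref{plpl}) yields $\sum_h\bigl(\tfrac{1}{\lambda_i+\lambda_h}-\tfrac{1}{\lambda_j+\lambda_h}\bigr)P^L_{ih}P^L_{jh}$, which is exactly the paper's expression $(\lambda_i-\lambda_j)\sum_k P^L_{ik}P^L_{kj}/\bigl((\lambda_i+\lambda_k)(\lambda_k+\lambda_j)\bigr)$ rewritten via the antisymmetry $P^L_{jh}=-P^L_{hj}$, and it manifestly vanishes when $\lambda_i=\lambda_j$. This is the same argument the paper gives (the authors note the formula is just the Euler equations); your coordinate-free remark is a pleasant extra but not a different route.
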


\begin{proof}
From (\ref{plpl}) and (\ref{hamlam}) we get for all $i,j$, $1\leq
i<j \leq n$,
\[
\{H,P_{ij}^L\}= (\lambda_i -\lambda_j) \sum_{k=1}^n \frac
{P^L_{ik} P^L_{kj}} {(\lambda_i +\lambda_k) (\lambda_k
+\lambda_j)} \,.
\]
The right-hand side obviously vanishes when $\lambda_i=
\lambda_j$. Note that the above formula is equivalent to the
well-known Euler equations for the free rotation of a rigid body.
\end{proof}

Denoting $\lambda=(\lambda_1, \dots, \lambda_n)$, let us group
together the elements of the set $\lambda$ which are equal to each
other. More precisely, we suppose that there exist $u$ distinct
positive real numbers $\mu_1, \dots, \mu_u$, with $1\leq u \leq n$
and $\mu_h \neq \mu_k$ for $h\neq k$, such that
\begin{equation}\label{mu}
\begin{split}
\lambda_1 = \lambda_2= \dots=\lambda_{p_1}&= \mu_1 \,,\\
\lambda_{p_1+1}= \lambda_{p_1+2} = \dots = \lambda_{p_2} &= \mu_2 \,,\\
& \ \,\vdots \\
\lambda_{p_{u-1}+1} =\lambda_{p_{u-1}+2} =\dots =\lambda_{p_u}&=
\mu_u \,, \end{split}
\end{equation}
with $p_u=n$. Let us consider the set $q= q(\lambda) =(q_1, \dots,
q_u)$, where $q_1=p_1$, $q_2=p_2-p_1$, $\dots$, $q_n=p_n-p_{n-1}$.
The set of all possible $\lambda$ can be divided into classes
$l(q)$ each characterized by a given set $q= (q_1, \dots, q_u)$,
with $q_j \in \nn$ for $j=1, \dots, u$, and $\sum_{j=1}^u q_j =n$.
The order of the different generalized moments of inertia $\mu_h$
is not important, so one may always arrange them in such a way
that $1\leq q_1\leq q_2 \leq\dots \leq q_u$.

For a given $\lambda$, consider the decomposition of $\rn^n$ which
corresponds to this $\lambda$:
\begin{equation}\label{dec}
\rn^n = L_1 \oplus L_2\oplus\dots \oplus L_u \,,
\end{equation}
where $\dim L_j = q_j$, $\lambda\in l(q)$, $q=(q_1, \dots, q_u)$.
Let us consider the subalgebra $\galg(\lambda) \subseteq
\galg=so(n)$ of all skew-symmetric matrices $A$ such that the
subspaces $L_j$ are invariant with respect to the action of the
operators defined by these matrices for all $j=1, \dots, q$:
\[
\galg^\lambda= \{A\in \galg : A(L_j) \subseteq L_j\ \forall\, j=1,
\dots, u\} \,.
\]
Let us consider a cartesian basis in $\rn^n$ corresponding to the
decomposition (\ref{dec}), and the set of skew-symmetric matrices
which represent the elements of Lie algebra $\galg$ in this basis.
A basis of $\galg$ is then given by the set of $N$ matrices
$D^{ij}$ of the form (\ref{aij}), with $1\leq i<j\leq n$. This
basis is orthonormal with respect to the euclidean structure
induced in $\galg$ by the bilinear form $(A, B):= {\rm Tr}\,
(\tilde A B)/2$, for $A, B \in \galg$. Let $I^\lambda$ denote the
set of the pairs of indexes which correspond to equal generalized
moments of inertia. Recalling equalities (\ref{mu}), we have
$I^\lambda=(I^1, \dots, I^u)$, where
\begin{equation} \label{ilam}
I^k := \{(i,j)\in \nn \times\nn : p_{k-1}< i<j\leq p_k\}\quad
\text{for} \quad k=1, \dots, u\,.
\end{equation}
Then the set $D^\lambda:= \{D^{ij}: (i,j)\in I^\lambda\}$ of
matrices of the form (\ref{aij}) forms a basis of $\galg^\lambda$.
Let us consider the set $P^{L\lambda} := \{P^{L}_{ij} :(i,j)\in
I^\lambda\} \subseteq P^L$ of impulses of the form (\ref{iplpr}).
The following proposition is an obvious consequence of lemmas
\ref{hpr} and \ref{hpl}.
\begin{prop}
If $H=H^{\lambda}$ has the form (\ref{hamlam}), then
$\{H,B^\lambda\}=0$, where $B^\lambda :=(P^{L\lambda},P^R)$.
\end{prop}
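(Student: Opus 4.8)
The plan is to note that $B^\lambda=(P^{L\lambda},P^R)$ is a union of two families, so that $\{H,B^\lambda\}=0$ is equivalent to the two separate assertions $\{H,P^R\}=0$ and $\{H,P^{L\lambda}\}=0$, i.e.\ $\{H,P^R_{hk}\}=0$ for all $1\le h<k\le n$ together with $\{H,P^L_{ij}\}=0$ for all $(i,j)\in I^\lambda$. Each of these has essentially already been proved, so the argument is a matter of assembling Lemmas \ref{hpr} and \ref{hpl}.

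For the right momenta, I would simply invoke Lemma \ref{hpr}: since $H=H^\lambda$ given by (\ref{hamlam}) is a function of the left momenta $P^L$ alone, and left and right momenta Poisson-commute by (\ref{prpl}), we get $\{H,P^R\}=0$. For the distinguished left momenta, I would unwind the definition of the index set: by (\ref{ilam}), $(i,j)\in I^\lambda$ means $p_{k-1}<i<j\le p_k$ for some $k\in\{1,\dots,u\}$, which by (\ref{mu}) forces $\lambda_i=\lambda_j=\mu_k$; Lemma \ref{hpl} then yields $\{H,P^L_{ij}\}=0$ for each such pair, hence $\{H,P^{L\lambda}\}=0$. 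Combining the two gives $\{H,B^\lambda\}=0$.

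There is no genuine obstacle here; the one point worth making explicit is that the subset $P^{L\lambda}\subseteq P^L$ has been defined (via $I^\lambda$) to consist of precisely those generators $P^L_{ij}$ for which the hypothesis $\lambda_i=\lambda_j$ of Lemma \ref{hpl} is satisfied, so that lemma applies to every element of $P^{L\lambda}$ without exception. This is exactly why the proposition is, as stated, an immediate consequence of the two preceding lemmas.
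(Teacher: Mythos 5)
Your proof is correct and matches the paper's own argument, which simply observes that the proposition is an immediate consequence of Lemmas \ref{hpr} and \ref{hpl}. Your explicit check that every pair $(i,j)\in I^\lambda$ satisfies $\lambda_i=\lambda_j$ by construction is exactly the (implicit) content of the paper's one-line justification.
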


In other words, $B^\lambda$ contains the elements of $B$ which are
in involution with the hamiltonian $H^\lambda$. We are now going
to investigate whether it is possible to construct an integrable
set using suitable functions of the set $B^\lambda$ and possibly
$H^\lambda$. We will find that this depends on the properties of
the set $\lambda$ or, more exactly, of the set of integers
$q(\lambda)$.

Let us consider a set of functions $F=(F_1, \dots, F_p)$ on the
$2N$-dimensional symplectic manifold $M^{2N}$. We suppose that
$\rank F$ is the same almost everywhere. Let $k=k(F)$ be the
maximal number of functionally independent functions, defined on
all $M^{2N}$, which are functions of $F$ and are in involution
with all functions of set $F$. More exactly, there exists a set
$Z=(Z_1, \dots, Z_k)$, $Z_j=Z_j(F)$, such that these functions are
functionally independent almost everywhere on $M$ and $\{Z,F\}=0$
on $M$, and $k$ is the maximum integer for which such a set $Z$
exists.

\begin{defn}\label{kr}
We will say that the number $k$ is the {\it centrality} of the set
$F$, while $r=r(F):= 2N- \rank F - k(F)$ is the {\it defect of
integrability} of the set $F$.
\end{defn}

\begin{lem} \label{kf}
For any set $F$ (such that $\rank F$ is the same almost
everywhere) we have $k(F)\leq \dim W$, where $W:= \sspan dF \cap
(\sspan dF)^\angle$ at a typical point $m\in M$. Moreover,
$r(F)\geq 0$.
\end{lem}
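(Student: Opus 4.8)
The plan is to relate the centrality $k(F)$ to the geometry of the subspace $W = \sspan dF \cap (\sspan dF)^\angle$ at a typical point $m\in M$. The key observation is that if $Z=(Z_1,\dots,Z_k)$ is a set witnessing the centrality, then at a typical point each $dZ_j$ lies in $\sspan dF$ (because $Z_j=Z_j(F)$, so $dZ_j$ is a linear combination of the $dF_i$), and moreover each $dZ_j$ lies in $(\sspan dF)^\angle$, since $\{Z_j,F_i\}=\Pi(dZ_j,dF_i)=0$ for all $i$ means $\Pi(dZ_j,w)=0$ for every $w\in\sspan dF$. Hence $dZ_j\in W$ for every $j$. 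First I would make this explicit, taking care that these statements hold at a typical point: $\rank F$ is constant almost everywhere by hypothesis, the functions $Z_j$ are functionally independent almost everywhere, and the set where $dZ=(dZ_1,\dots,dZ_k)$ fails to be linearly independent is closed with empty interior; so there is a typical point $m$ at which $dZ_1,\dots,dZ_k$ are linearly independent and all lie in $W$. This yields $k=\dim\sspan dZ \le \dim W$, which is the first assertion.

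Next I would derive the bound $r(F)\ge 0$. By Definition \ref{kr}, $r(F)=2N-\rank F-k(F)$, so the claim is $k(F)\le 2N-\rank F$. Using Lemma \ref{rskew}, $\dim(\sspan dF)^\angle = 2N-\rank F$, so it suffices to note that $W=\sspan dF\cap(\sspan dF)^\angle \subseteq (\sspan dF)^\angle$, whence $\dim W\le 2N-\rank F$, and then combine with the first part: $k(F)\le\dim W\le 2N-\rank F$, i.e. $r(F)\ge 0$. This is the easy direction once the first inequality is in hand.

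The main subtlety — and the step I would be most careful about — is the selection of a genuinely ``typical'' point at which all the relevant nondegeneracy conditions hold simultaneously. One must check that $\rank F$ is constant, that the $Z_j$ are functionally independent, and that their differentials are independent, all on a common open dense (full-measure) subset of $M$; the intersection of finitely many such sets is again open dense, so such a point exists. A secondary point to verify is that at such a point the differential of a function $Z_j=Z_j(F)$ really does lie in $\sspan dF$: this is just the chain rule, $dZ_j=\sum_i(\partial Z_j/\partial F_i)\,dF_i$, valid wherever $Z_j$ is smoothly expressible through $F$, which holds near a typical point. With these matters settled, the argument is short; I do not expect any serious obstacle beyond bookkeeping of genericity.
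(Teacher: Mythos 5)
Your argument for $k(F)\leq \dim W$ is exactly the paper's: the proof there is a one-liner observing that for any $Z=Z(F)$ with $\{Z,F\}=0$ one has $dZ\in W$, which is precisely your chain-rule plus skew-orthogonality observation, and your attention to choosing a common typical point only makes explicit what the paper leaves implicit. For $r(F)\geq 0$ the paper invokes Proposition \ref{strind2} directly, whereas you deduce it by chaining the first inequality with $\dim W\leq \dim(\sspan dF)^\angle = 2N-\rank F$ from Lemma \ref{rskew}; both routes are correct and essentially equivalent pointwise linear algebra.
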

\begin{proof}
For any function $Z=Z(F)$, such that $\{Z,F\}=0$ at $m$, we must
obviously have $z\in W$. From this it easily follows that
$k(F)\leq \dim W$. The inequality $r(F)\geq 0$ follows instead
from proposition \ref{strind2}.
\end{proof}

\begin{prop} \label{kb}
The centrality and defect of integrability of the set of functions
$B=(P^L, P^R)$ are respectively
\begin{align}
k(B)&=\ipn2 \,, \label{kbn2}\\
r(B)&=0 \,. \label{rb}
\end{align}
\end{prop}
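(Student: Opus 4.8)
The idea is to read off $k(B)$ and $r(B)$ from the structural results already established for the set $B=(P^L,P^R)$, together with the identification (Remark \ref{r1rem}) of the functions in involution with $B$ as exactly the Casimir functions of $P^L$. First I would compute $\rank B$: by Corollary \ref{br1}, at every point of $\mtyp$ — hence almost everywhere on $M$ — we have $\rank B = 2N - \ipn2$, and since $\mtyp$ is the complement of a closed null set, $\rank B$ is indeed constant (equal to $2N-\ipn2$) almost everywhere, so Definition \ref{kr} applies to $B$.

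Next I would pin down $k(B)$. By Lemma \ref{kf}, $k(B)\leq \dim W$ where $W=\sspan dB\cap(\sspan dB)^\angle$ at a typical $m$; but Proposition \ref{r1}, equation (\ref{spanc}), gives $\sspan d\bar C = \sspan dP^L\cap(\sspan dP^L)^\angle$, and together with (\ref{dbskew}) and the corollary (\ref{spandba}) one sees $W = \sspan d\bar C$, so by (\ref{n2}) $\dim W = \ipn2$ at all points of $\mtyp$. Hence $k(B)\leq\ipn2$. For the reverse inequality I would exhibit the witnessing set $Z$: take $Z=\bar C$, the $s=\ipn2$ functions $C(P^L)=C(P^R)$ on $M$. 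By Proposition \ref{r1}, $\{\bar C,B\}=0$ on all of $M$ (equation (\ref{cb})), and $\rank\bar C=\ipn2$ almost everywhere (equation (\ref{n2})), i.e. the elements of $\bar C$ are functionally independent almost everywhere; and each element of $\bar C$ is manifestly a function of $B$. Thus $\bar C$ is an admissible set in the sense of Definition \ref{kr} with $\ipn2$ elements, so $k(B)\geq\ipn2$, and combining the two bounds gives (\ref{kbn2}).

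Finally, $r(B)$ follows by arithmetic from Definition \ref{kr}: $r(B)=2N-\rank B - k(B) = 2N-(2N-\ipn2)-\ipn2 = 0$, which is (\ref{rb}). (Consistency with Lemma \ref{kf}, which asserts $r(F)\geq 0$ in general, is automatic.)

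**Expected main obstacle.** The only real subtlety is the identification $W=\sspan d\bar C$ at a typical point, i.e. checking that the subspace $\sspan dB\cap(\sspan dB)^\angle$ really coincides with $\sspan d\bar C$ rather than merely containing it; this requires combining (\ref{spandba}), (\ref{pskew}), (\ref{dbskew}) and the dimension count (\ref{n2})–(\ref{sign2}) carefully. Everything else — the rank computation and the exhibition of $\bar C$ as the witnessing set — is a direct invocation of Corollary \ref{br1} and Proposition \ref{r1}.
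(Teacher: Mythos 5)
Your proof is correct and follows essentially the same route as the paper: the paper's (very terse) proof likewise obtains $k(B)=\left[\frac n2\right]$ by combining proposition \ref{r1} (which exhibits $\bar C$ as a functionally independent set of $\left[\frac n2\right]$ functions of $B$ in involution with $B$, and identifies ${\rm Span}\,d\bar C$ with $({\rm Span}\,dB)^\angle$) with remark \ref{r1rem} for the upper bound, and then deduces $r(B)=0$ from the rank formula (\ref{rank2}). Your version merely makes the dimension count $W={\rm Span}\,d\bar C$ and the two inequalities explicit, which is a faithful expansion of the same argument.
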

\begin{proof}
Equality (\ref{kbn2}) follows from proposition \ref{r1} and remark
\ref{r1rem}. Then equality (\ref{rb}) follows from (\ref{rank2}).
\end{proof}

Let $\pop^\lambda: \galg \to \galg^\lambda$ denote the projector
onto the subalgebra $\galg^\lambda$ with respect to the euclidean
structure in $\galg$ introduced above. For any linear operator
$L:\galg \to \galg$, we denote by $L|_{\galg^\lambda}$ its
restriction to the subalgebra $\galg^\lambda$. For any $m\in M$ we
can then consider the operators
\begin{align*}
\pop^\lambda \circ {\rm ad}_{P^L} &: \galg \to \galg^\lambda\,,\\
\pop^\lambda \circ {\rm ad}_{P^L}|_{\galg^\lambda} &:
\galg^\lambda \to
\galg^\lambda\,, \\
{\rm ad}_{P^L}|_{\galg^\lambda} &: \galg^\lambda \to \galg\,,
\end{align*}
where $\circ$ denotes the composition of linear operators, and
$P^L=P^L(m)$. We define the three integers $\sigma_1^\lambda$,
$\sigma_2^\lambda$, and $\sigma_3^\lambda$, dependent on the point
$m$, as the linear dimension of the kernels of the three above
operators:
\begin{align}
\sigma_1^\lambda &:= \dim {\rm Ker} \left(\pop^\lambda \circ {\rm
ad}_{P^L}\right)
\,, \label{sigma1}\\
\sigma_2^\lambda &:= \dim {\rm Ker} \left(\pop^\lambda \circ
{\rm ad}_{P^L}|_{\galg^\lambda}\right) \,, \label{sigma2}\\
\sigma_3^\lambda &:= \dim {\rm Ker} \left ({\rm
ad}_{P^L}|_{\galg^\lambda} \right) \,. \label{sigma3}
\end{align}
From these definitions and from (\ref{sigma}) it is immediate to
see that $\sigma_1^\lambda \geq \sigma\geq \sigma_3^\lambda$ and
$\sigma_1^\lambda \geq \sigma_2^\lambda\geq \sigma_3^\lambda$.

\begin{prop} \label{defect0}
The rank of the set $B^\lambda=(P^{L\lambda},P^R)$ at any point
$m\in M$ is given by
\begin{equation}
\rank B^\lambda = 2N- \sigma_1^\lambda\,. \label{rblam}
\end{equation}
In addition we have
\begin{equation}\label{dimw2}
\dim W= \sigma+ \sigma_2^\lambda- \sigma_3^\lambda\,,
\end{equation}
where
\begin{equation}\label{wb}
W:=\sspan dB^\lambda \cap (\sspan dB^\lambda)^\angle\,.
\end{equation}
\end{prop}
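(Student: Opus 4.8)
The plan is to reduce both statements to linear algebra in the cotangent fibre $T^*_mM$, via the linear isomorphism $\phi:\sspan dP^L\to\galg=so(n)$ that sends $a=\sum_{i<j}A_{ij}\,dP^L_{ij}$ to the skew-symmetric matrix $A$. This is well defined because $\rank P^L=N$ by (\ref{rank}), and $\phi$ carries $\sspan dP^{L\lambda}$ onto $\galg^\lambda$ (whose basis is $\{D^{ij}:(i,j)\in I^\lambda\}$). The computational input is (\ref{pipl}), $\Pi(a,dP^L_{hk})=[P^L,A]_{hk}$; and, since $\{D^{ij}\}$ is orthonormal for $(A,B)=\Tr(\tilde AB)/2$, the projector $\pop^\lambda$ merely retains the matrix entries indexed by $I^\lambda$, so $\pop^\lambda(C)=0$ iff $C_{hk}=0$ for all $(h,k)\in I^\lambda$.

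For (\ref{rblam}): write $B^\lambda=P^{L\lambda}\cup P^R$, so that $(\sspan dB^\lambda)^\angle=(\sspan dP^{L\lambda})^\angle\cap(\sspan dP^R)^\angle$, and $(\sspan dP^R)^\angle=\sspan dP^L$ by (\ref{pskew}). For $a=\sum A_{ij}\,dP^L_{ij}\in\sspan dP^L$ the condition $a\in(\sspan dP^{L\lambda})^\angle$ reads $[P^L,A]_{hk}=0$ for all $(h,k)\in I^\lambda$, i.e., $\pop^\lambda([P^L,A])=0$, i.e., $A\in\ker(\pop^\lambda\circ{\rm ad}_{P^L})$. Hence $\phi$ restricts to an isomorphism of $(\sspan dB^\lambda)^\angle$ with $\ker(\pop^\lambda\circ{\rm ad}_{P^L})$, so $\dim(\sspan dB^\lambda)^\angle=\sigma_1^\lambda$, and (\ref{rblam}) follows from Lemma \ref{rskew}.

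For (\ref{dimw2}): since $\sspan dB^\lambda=\sspan dP^{L\lambda}+\sspan dP^R$ and, by the previous step, $(\sspan dB^\lambda)^\angle=(\sspan dP^{L\lambda})^\angle\cap\sspan dP^L$, we have $W=\big[(\sspan dP^{L\lambda}+\sspan dP^R)\cap\sspan dP^L\big]\cap(\sspan dP^{L\lambda})^\angle$. Because $\sspan dP^{L\lambda}\subseteq\sspan dP^L$, the modular law ($(X+Y)\cap Z=X+(Y\cap Z)$ whenever $X\subseteq Z$) turns the inner bracket into $\sspan dP^{L\lambda}+(\sspan dP^L\cap\sspan dP^R)$, which lies inside $\sspan dP^L$. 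Applying $\phi$, and using that $\phi(\sspan dP^L\cap\sspan dP^R)=\ker{\rm ad}_{P^L}$ has dimension $\sigma$ (by (\ref{kpr}), (\ref{spandba}) and the definition (\ref{sigma})) while $\phi\big((\sspan dP^{L\lambda})^\angle\cap\sspan dP^L\big)=\ker(\pop^\lambda\circ{\rm ad}_{P^L})$, one finds that $W$ corresponds to $(\galg^\lambda+\ker{\rm ad}_{P^L})\cap\ker(\pop^\lambda\circ{\rm ad}_{P^L})$. Since $\ker{\rm ad}_{P^L}\subseteq\ker(\pop^\lambda\circ{\rm ad}_{P^L})$, the modular law applies once more and rewrites this as $\ker(\pop^\lambda\circ{\rm ad}_{P^L}|_{\galg^\lambda})+\ker{\rm ad}_{P^L}$, a sum of subspaces of dimensions $\sigma_2^\lambda$ and $\sigma$ whose intersection is $\galg^\lambda\cap\ker{\rm ad}_{P^L}=\ker({\rm ad}_{P^L}|_{\galg^\lambda})$, of dimension $\sigma_3^\lambda$ (again using the inclusion above). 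Therefore $\dim W=\sigma_2^\lambda+\sigma-\sigma_3^\lambda$, which is (\ref{dimw2}).

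The argument is essentially bookkeeping once the dictionary ``covectors in $\sspan dP^L\leftrightarrow$ skew matrices, skew-orthogonality $\leftrightarrow$ commutator with $P^L$'' is set up. The only delicate points are the legitimacy of the two modular-law steps --- which holds precisely because of the genuine inclusions $\sspan dP^{L\lambda}\subseteq\sspan dP^L$ and $\ker{\rm ad}_{P^L}\subseteq\ker(\pop^\lambda\circ{\rm ad}_{P^L})$ --- and checking that every ingredient used, namely (\ref{pskew}), $\rank P^L=N$, $\dim(\sspan dP^L\cap\sspan dP^R)=\sigma$, and (\ref{pipl}), is valid at every $m\in M$, so that the proposition holds on all of $M$ with no need to restrict to $\mtyp$.
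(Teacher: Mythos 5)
Your proof is correct, and valid at every point of $M$ as required. The argument for $\rank B^\lambda = 2N-\sigma_1^\lambda$ is the same as the paper's: both compute $(\sspan dB^\lambda)^\angle = \sspan dP^L \cap (\sspan dP^{L\lambda})^\angle$ from (\ref{pskew}), translate the skew-orthogonality condition into $\pop^\lambda([P^L,A])=0$ via (\ref{plu}), and invoke Lemma \ref{rskew}. For $\dim W$ the destination is also the same --- both proofs identify $W$, under the dictionary $a\mapsto A$, with $S+S^\lambda$ where $S=\ker {\rm ad}_{P^L}$ and $S^\lambda = \ker\bigl(\pop^\lambda\circ{\rm ad}_{P^L}|_{\galg^\lambda}\bigr)$, and then count $\dim S+\dim S^\lambda-\dim(S\cap S^\lambda)$ --- but the mechanism differs. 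The paper takes an arbitrary $w\in W$, splits it as $w=z_1+z_2$ with $z_1\in\sspan dP^R$ and $z_2\in\sspan dP^{L\lambda}$, and uses the cross-annihilation relations (\ref{zlr}) to push $z_1$ into $(\sspan dB)^\angle$ and $z_2$ into $\sspan dP^{L\lambda}\cap(\sspan dP^{L\lambda})^\angle$ separately; strictly speaking this only exhibits the inclusion $W\subseteq S+S^\lambda$ and leaves the (easy) reverse inclusion implicit. Your two applications of the modular law, justified by the genuine inclusions $\sspan dP^{L\lambda}\subseteq\sspan dP^L$ and $\ker{\rm ad}_{P^L}\subseteq\ker(\pop^\lambda\circ{\rm ad}_{P^L})$, produce the identification as an equality in one stroke and avoid any element-chasing; the trade-off is that the paper's decomposition makes visible \emph{which} covectors contribute the $S$ part (those coming from $P^R$) and which the $S^\lambda$ part (those coming from $P^{L\lambda}$), which is reused in spirit in the proof of Proposition \ref{zlamb}.
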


\begin{proof}
Since $B^\lambda = P^{L\lambda}\cup P^R$, we have
\begin{align} \label{spandbl}
(\sspan dB^\lambda)^\angle &= (\sspan dP^R)^\angle \cap (\sspan
dP^{L\lambda})^\angle \nonumber \\
&= \sspan dP^L\cap (\sspan dP^{L\lambda})^\angle\,,
\end{align}
where the last equality follows from (\ref{pskew}). Therefore
$({\rm Span}\,dB^\lambda)^\angle$ is the set of all those elements
$a \in {\rm Span}\,dP^L$ such that $\Pi(a,dP^L_{hk})=0\ \forall
\,(h,k)\in I^\lambda$. Any covector $a\in {\rm Span}\,dP^L$ can be
expressed as $a= \sum_{i<j} A_{ij} dP^L_{ij}$, where $A_{ij}$ are
elements of a univocally determined skew-symmetric matrix $A$.
Hence $a=dP^L_A$ and
\begin{equation} \label{piapl}
\Pi(a,dP^L_{hk})= \{P^L_A, P^L_{hk}\}= [P^L,A]_{hk}\,,
\end{equation}
where for the last equality use has been made of formula
(\ref{plu}). The matrix elements on the right-hand side of the
above formula, for $(h,k)\in I^\lambda$, are just the components,
with respect to the basis $D^\lambda$, of the projection
$\pop^\lambda([P^L,A])$ of the element $[P^L,A] \in \galg$ onto
the subalgebra $\galg^\lambda$. We see therefore that $({\rm
Span}\,dB^\lambda)^\angle$ is in one-to-one correspondence with
the linear space of matrices $A\in so(n)$ such that
$\pop^\lambda([P^L,A])=0$. Hence
\begin{equation} \label{spandblo}
\dim ({\rm Span}\,dB^\lambda)^\angle= \dim {\rm Ker}
\left(\pop^\lambda \circ {\rm ad}_{P^L}\right)=
\sigma_1^\lambda\,,
\end{equation}
so that (\ref{rblam}) is obtained by applying lemma \ref{rskew} to
the set $B^\lambda$.

From (\ref{spandbl}) and (\ref{wb}) it follows that
\begin{equation}
W= \sspan dB^\lambda \cap (\sspan dP^R)^\angle \cap (\sspan
dP^{L\lambda})^\angle\,.
\end{equation}
Let us then consider an arbitrary element $w\in W$. The condition
$w\in \sspan dB^\lambda$ implies that there exist $z_1\in \sspan
dP^{R}$ and $z_2\in \sspan dP^{L\lambda}$ such that
\[
w= z_1 +z_2\,.
\]
Since $P^{L\lambda}\subseteq P^{L}$, from (\ref{pskew}) it follows
that
\begin{equation} \label{zlr}
\Pi(z_1, dP^{L\lambda})= \Pi(z_2, dP^R)= 0\,.
\end{equation}
Hence the condition $w\in (\sspan dP^R)^\angle$ implies $0=
\Pi(w,dP^R) =\Pi(z_1,dP^R)$. Therefore, recalling (\ref{pskew})
and (\ref{spandba}), we have
\[
z_1\in \sspan dP^R \cap (\sspan dP^R)^\angle= (\sspan
dB)^\angle\,.
\]
Using proposition \ref{br} we thus conclude that $z_1=
\sum_{i<j}C_{ij}dP^L_{ij}$, where $C$ is a skew-symmetric matrix
such that $[P^L,C] =0$, i.e., $C\in {\rm Ker}\, {\rm ad}_{P^L}$.
Finally, the condition $w\in (\sspan dP^{L\lambda})^\angle$, using
again (\ref{zlr}), implies $0= \Pi(w,dP^{L\lambda})
=\Pi(z_2,dP^{L\lambda})$. Therefore
\[
z_2\in \sspan dP^{L\lambda} \cap (\sspan dP^{L\lambda})^\angle\,.
\]
Using formula (\ref{piapl}), we see that a covector $a\in {\rm
Span}\,dP^L$ belongs to $\sspan dP^{L\lambda} \cap (\sspan
dP^{L\lambda})^\angle$ if and only if $a= \sum_{i<j} A_{ij}
dP^L_{ij}$, with $A\in \galg^\lambda$ and $\pop^\lambda
([P^L,A])=0$. These two conditions on $A$ can be simultaneously
expressed as $A\in S^\lambda$, where $S^\lambda:= {\rm Ker}
\left(\pop^\lambda \circ {\rm ad}_{P^L}|_{\galg^\lambda}\right)$.

We have thus shown that $w=\sum_{i<j} (C_{ij}+ A_{ij}) dP^L_{ij}$,
where $C\in S:={\rm Ker}\, {\rm ad}_{P^L}$ and $A\in S^\lambda$.
It follows that
\begin{equation}\label{w}
W =\Big\{w\in T^*_m M:w=\sum_{i<j} K_{ij} dP^L_{ij}\,,\, K\in
\sspan(S, S^\lambda)\Big\}\,.
\end{equation}
Since covectors $dP^L_{ij}$, $1\leq i<j\leq n$, are linearly
independent (see proposition \ref{r0}), according to (\ref{w})
there exists a linear isomorphism between $W$ and $\sspan(S,
S^\lambda)$. Therefore
\begin{equation}\label{dimw}
\dim W= \dim \sspan(S, S^\lambda) =\dim S +\dim S^\lambda -\dim(S
\cap S^\lambda)\,.
\end{equation}
Recalling (\ref{sigma}) and (\ref{sigma2}), we have $\dim S=
\sigma$ and $\dim S^\lambda= \sigma_2^\lambda$. Furthermore, it is
easy to see that $S \cap S^\lambda= \ker {\rm ad}_{P^L} \cap
\galg^\lambda = \ker {\rm ad}_{P^L}|_{\galg^\lambda}$, so that
recalling (\ref{sigma3}) we can write $\dim(S \cap S^\lambda)=
\sigma_3^\lambda$. Hence (\ref{dimw}) is equivalent to
(\ref{dimw2}).
\end{proof}

\begin{cor}\label{defect}
At a typical point $m\in M$, then the centrality $k(B^\lambda)$
and the defect of integrability $r(B^\lambda)$ (see definition
\ref{kr}) of the set $B^\lambda$ satisfy the inequalities
\begin{align}
k(B^\lambda) &\leq\sigma+ \sigma_2^\lambda-
\sigma_3^\lambda\,, \label{kblam}\\
r(B^\lambda) &\geq \sigma_1^\lambda+ \sigma_3^\lambda- \sigma -
\sigma_2^\lambda \label{dblam}\,.
\end{align}
\end{cor}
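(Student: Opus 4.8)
The plan is to obtain both inequalities directly from Lemma \ref{kf} and Proposition \ref{defect0}, with no new computation. The hypothesis that $m$ be a typical point plays only an auxiliary role: the integers $\sigma$, $\sigma_1^\lambda$, $\sigma_2^\lambda$, $\sigma_3^\lambda$ depend on $m$ solely through the conjugacy class of the skew-symmetric matrix $P^L(m)$, hence they take their generic (constant) values on $\mtyp$, and so by (\ref{rblam}) the rank of $B^\lambda$ is constant on the full-measure set $\mtyp$; this ensures that the centrality and the defect of integrability of $B^\lambda$ are well-defined in the sense of Definition \ref{kr} and that the hypothesis of Lemma \ref{kf} is satisfied.

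I would first prove (\ref{kblam}). Applying Lemma \ref{kf} to the set $F=B^\lambda$ gives, at a typical point $m$, the bound $k(B^\lambda)\leq \dim W$, where $W=\sspan dB^\lambda \cap (\sspan dB^\lambda)^\angle$ is exactly the subspace introduced in (\ref{wb}). By formula (\ref{dimw2}) of Proposition \ref{defect0} one has $\dim W=\sigma+\sigma_2^\lambda-\sigma_3^\lambda$, and substituting this into the previous inequality yields (\ref{kblam}).

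Then I would deduce (\ref{dblam}) from (\ref{kblam}). By Definition \ref{kr}, $r(B^\lambda)=2N-\rank B^\lambda-k(B^\lambda)$, and by (\ref{rblam}) we have $\rank B^\lambda=2N-\sigma_1^\lambda$, so that $r(B^\lambda)=\sigma_1^\lambda-k(B^\lambda)$. Combining this identity with the upper bound on $k(B^\lambda)$ just obtained gives $r(B^\lambda)\geq \sigma_1^\lambda-(\sigma+\sigma_2^\lambda-\sigma_3^\lambda)$, which is precisely (\ref{dblam}). There is no real obstacle here: the corollary is a purely formal consequence of the two quoted results, the only delicate point being the constancy-of-rank remark in the first paragraph, which is needed to invoke Lemma \ref{kf}.
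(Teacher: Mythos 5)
Your proposal is correct and follows exactly the paper's own argument: Lemma \ref{kf} together with formula (\ref{dimw2}) gives (\ref{kblam}), and then (\ref{rblam}) combined with Definition \ref{kr} and (\ref{kblam}) gives (\ref{dblam}). The extra remark on the constancy of $\rank B^\lambda$ on $\mtyp$ is a sensible clarification but does not change the route.
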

\begin{proof}
Relation (\ref{kblam}) follows from lemma \ref{kf} and formula
(\ref{dimw2}). Then relation (\ref{dblam}) follows from
(\ref{rblam}) and (\ref{kblam}).
\end{proof}

Note that formula (\ref{wb}) implies $\dim W \leq \dim (\sspan
dB^\lambda)^\angle$. Therefore using (\ref{spandblo}) and
(\ref{dimw2}) we obtain the inequality $\sigma_1^\lambda \geq
\sigma+ \sigma_2^\lambda- \sigma_3^\lambda$.

Let us consider again the set $q= (q_1, \dots, q_u)$ introduced
after formula (\ref{mu}). If $u=1$, i.e., $q=(n)$, $0<\lambda_1=
\lambda_2 = \dots =\lambda_n$, then $\galg^\lambda =\galg$, so
that $\sigma_1^\lambda =\sigma_2^\lambda= \sigma_3^\lambda=
\sigma$ and $B^\lambda=B$. We recall that, according to lemma
\ref{sigman}, we have $\sigma= \left[\frac n2\right]$ for a
typical $m\in M$. Therefore, in this particular case formulas
(\ref{rblam}) and (\ref{kblam})--(\ref{dblam}) agree with the
results $\rank B= 2N- \ipn2$, $k(B)=\ipn2$, and $r(B)=0$, see
formula (\ref{rank2}) and proposition \ref{kb}.

\begin{lem} \label{q1b}
Suppose that $u>1$, i.e., there exist at least two generalized
moments of inertia which are different from each other. Then at a
typical point $m\in M$ we have
\begin{align}
\sigma_1^\lambda &= \sum_{i<j} q_i q_j =\frac 12 \left(
n^2- \sum_{i=1}^u q_i^2 \right)\,, \label{sig1}\\
\sigma_2^\lambda &= \sum_{i=1}^u \left[\frac {q_i}2\right]=
\frac {n-d(q)}2 \,, \label{sig2}\\
\sigma_3^\lambda &= 0\,. \label{sig3}
\end{align}
On the right-hand side of formula (\ref{sig2}), the function
$d(q)$ is defined as the number of odd $q_i$, $1\leq i\leq u$ or,
equivalently:
\[
d(q)= \frac 12 \left(u- \sum_{i=1}^u (-1)^{q_i}\right)\,.
\]
\end{lem}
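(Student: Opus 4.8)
The plan is to reduce all three formulas to the single assertion $\sigma_3^\lambda=0$ and then prove that by an eigenvalue–matching argument. Throughout I work at a point $m$ such that $P^L:=P^L(m)$ has pairwise distinct eigenvalues (so $m\in\mtyp$) and, moreover, each diagonal block of $P^L$ with respect to the decomposition (\ref{dec}) again has pairwise distinct eigenvalues; since $m\mapsto P^L(m)$ restricted to a cotangent fibre is a linear isomorphism onto $so(n)$ (Remark \ref{typ}) and $\pop^\lambda$ is a linear surjection onto $\galg^\lambda$, this is a dense open full–measure condition, i.e.\ it holds at a typical point. I first record the relevant algebraic structure. Write $\galg=\galg^\lambda\oplus\mathfrak{m}$, where $\mathfrak{m}:=(\galg^\lambda)^\perp$ is the orthogonal complement for the form $(A,B)=\Tr(\tilde AB)/2$; in block form $\mathfrak{m}=\bigoplus_{i<j}\mathfrak{m}_{ij}$, with $\mathfrak{m}_{ij}$ the skew matrices supported in the $(i,j)$ and $(j,i)$ blocks, so $\dim\mathfrak{m}=\sum_{i<j}q_iq_j$ and $\dim\galg^\lambda=N-\sum_{i<j}q_iq_j$. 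Since $\galg^\lambda$ is a subalgebra and the form is invariant, one checks $[\galg^\lambda,\galg^\lambda]\subseteq\galg^\lambda$, $[\galg^\lambda,\mathfrak{m}]\subseteq\mathfrak{m}$, and that each $\mathfrak{m}_{ij}$ is ${\rm ad}_{\galg^\lambda}$–invariant. Writing $P^L=P_0+Q$ with $P_0:=\pop^\lambda(P^L)\in\galg^\lambda$ and $Q\in\mathfrak{m}$, it follows that for $A\in\galg^\lambda$ one has $\pop^\lambda([P^L,A])=[P_0,A]$, and that $[P^L,A]=0$ if and only if $[P_0,A]=0$ and $[Q,A]=0$.

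Given this, $\sigma_2^\lambda$ is immediate: by the preceding, $\pop^\lambda\circ{\rm ad}_{P^L}|_{\galg^\lambda}={\rm ad}_{P_0}|_{\galg^\lambda}$, and since $\galg^\lambda=\bigoplus_i so(q_i)$, $P_0=\bigoplus_i P_0^{(i)}$ with each $P_0^{(i)}$ of distinct eigenvalues, formula (\ref{kera}) of Corollary \ref{n/2} applied inside each $so(q_i)$ gives $\dim\ker({\rm ad}_{P_0^{(i)}}|_{so(q_i)})=[q_i/2]$, whence $\sigma_2^\lambda=\sum_i[q_i/2]$; the elementary identities $\sum_i[q_i/2]=(n-d(q))/2$ and $d(q)=(u-\sum_i(-1)^{q_i})/2$ then give (\ref{sig2}). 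Next I reduce (\ref{sig1}) to (\ref{sig3}). The operator ${\rm ad}_{P^L}$ is skew–adjoint for $(\cdot,\cdot)$, so $\mathrm{im}\,{\rm ad}_{P^L}=(\ker{\rm ad}_{P^L})^\perp$; because $\pop^\lambda$ is the orthogonal projection onto $\galg^\lambda$, one gets $\dim\pop^\lambda(\mathrm{im}\,{\rm ad}_{P^L})=\dim\galg^\lambda-\dim(\galg^\lambda\cap\ker{\rm ad}_{P^L})=\dim\galg^\lambda-\sigma_3^\lambda$. By rank–nullity, $\sigma_1^\lambda=\dim\galg-\dim\pop^\lambda(\mathrm{im}\,{\rm ad}_{P^L})=(N-\dim\galg^\lambda)+\sigma_3^\lambda=\sum_{i<j}q_iq_j+\sigma_3^\lambda=\tfrac12(n^2-\sum_iq_i^2)+\sigma_3^\lambda$. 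Thus (\ref{sig1}) and (\ref{sig3}) are both equivalent to $\sigma_3^\lambda=0$.

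It remains to prove $\sigma_3^\lambda=0$ at a typical point, and this is the heart of the matter. Since $m\mapsto\sigma_3^\lambda(m)=\dim(\galg^\lambda\cap\ker{\rm ad}_{P^L(m)})$ is upper semicontinuous and nonnegative, it suffices to exhibit a single point at which it vanishes (and at such a point the genericity assumptions above can be arranged simultaneously). By the decoupling above, $\galg^\lambda\cap\ker{\rm ad}_{P^L}=T^\lambda\cap\ker({\rm ad}_Q|_{\galg^\lambda})$, where $T^\lambda:=\ker({\rm ad}_{P_0}|_{\galg^\lambda})=\bigoplus_iT_i$ and $T_i\subset so(q_i)$ is the maximal torus spanned by the rotation generators in the mutually orthogonal, $P_0^{(i)}$–fixed eigen–$2$–planes inside $L_i$. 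Fix $P_0$ generic as above and then $Q=\sum_{i<j}Q_{ij}$ with each block $Q_{ij}\in\mathfrak{m}_{ij}$ generic. For $A=\bigoplus_iA_i\in T^\lambda$, the condition $[Q,A]=0$ decouples, by ${\rm ad}_{\galg^\lambda}$–invariance of the $\mathfrak{m}_{ij}$, into the intertwining relations $A_iQ_{ij}=Q_{ij}A_j$ for all $i<j$. Fixing $i$ and any $j\neq i$ (which exists since $u>1$), I complexify, take $v$ spanning one of the complex eigenlines of $T_j$ (with an obvious one–dimensional substitute when $q_j=1$), and set $w:=Q_{ij}v$; for generic $Q_{ij}$ the fixed vector $w$ has nonzero component along every complex eigenline of $T_i$, yet $A_iw=A_iQ_{ij}v=Q_{ij}A_jv$ must be a scalar multiple of $w$. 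Decomposing $w$ over the $T_i$–eigenlines forces every coefficient of $A_i$ to vanish, i.e.\ $A_i=0$; hence $A=0$ and $\sigma_3^\lambda(m)=0$. (When $\galg^\lambda=0$, i.e.\ all $q_i\le 1$, there is nothing to prove.)

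The main obstacle is the last step, specifically the verification that the intertwining system $A_iQ_{ij}=Q_{ij}A_j$, with $A_i,A_j$ ranging over the fixed maximal tori $T_i,T_j$ and the $Q_{ij}$ generic off–diagonal blocks, has only the trivial solution; this is also where the hypothesis $u>1$ enters (it makes $\mathfrak{m}\neq 0$, so the extra conditions $[Q,A]=0$ are genuinely present and cut $T^\lambda$ down to $0$). Care is needed with low–rank degeneracies of the $A_i$ and with the edge case $q_j=1$, but upper semicontinuity of $\sigma_3^\lambda$ reduces everything to the single explicit generic choice above.
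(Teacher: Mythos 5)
Your proof is correct, and it reproduces the paper's overall skeleton — the block decomposition of $P^L$ induced by (\ref{dec}) into a diagonal part $P_0=\pop^\lambda(P^L)$ and an off-diagonal part $Q$, the identification of $\sigma_2^\lambda$ with $\sum_i\dim\ker{\rm ad}_{P_0^{(i)}}$ via Corollary \ref{n/2}, and a genericity argument on the off-diagonal blocks for $\sigma_3^\lambda$ — but it differs in two worthwhile ways. First, for $\sigma_1^\lambda$ the paper explicitly solves the linear system $[A_1,V_1]=T_1$, $[A_2,V_2]=T_2$ and counts free parameters using the description (\ref{ada}) of ${\rm ad}_{\bar A}(\galg)$; you instead exploit skew-adjointness of ${\rm ad}_{P^L}$ and self-adjointness of $\pop^\lambda$ to get the exact identity $\sigma_1^\lambda=\sum_{i<j}q_iq_j+\sigma_3^\lambda$ valid at every point, which reduces (\ref{sig1}) to (\ref{sig3}) with no counting and no appeal to (\ref{ada}). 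Second, where the paper merely asserts that for generic $B$ the condition $V_1B-BV_2=0$ forces the eigenvalues of $V_1,V_2$ to vanish, you actually prove it: the intertwining relation $A_iQ_{ij}=Q_{ij}A_j$ propagates a single eigenvalue of $A_j$ to a vector $w=Q_{ij}v$ meeting every eigenline of the torus $T_i$, and the $\pm$ pairing of torus eigenvalues then kills $A_i$. This also handles general $u$ directly rather than by the paper's ``the generalization is obvious.'' The only point to state a bit more carefully is the passage from ``one point where $\sigma_3^\lambda=0$'' to ``typical point'': upper semicontinuity alone gives openness of the zero locus, and you should invoke the polynomial dependence of the rank of ${\rm ad}_{P^L}|_{\galg^\lambda}$ on $m$ (so that the degeneracy locus is a proper closed set of measure zero) to get genericity — but this is exactly the level of rigor the paper itself uses.
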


\begin{proof}
We shall outline the scheme of the proof for the case $u=2$, i.e.,
$q=(q_1, q_2)$, with $q_1+q_2=n$. The generalization to the case
of arbitrary $u$ should be obvious.

The decomposition (\ref{dec}) of $\rn^n$ for $u=2$ becomes $\rn^n
= L_1 \oplus L_2$, with $\dim L_1= q_1$, $\dim L_2= q_2$.
According to this decomposition, the matrix $P^L(m)\in so(n)$ can
be represented in a blockwise form as
\begin{equation*}
P^L=\begin{pmatrix} A_1 & B \\
-\tilde B & A_2
\end{pmatrix}\,,
\end{equation*}
where $A_1\in so(q_1)$, $A_2\in so(q_2)$, whereas $B$ is a generic
$q_1 \times q_2$ matrix. According to remark \ref{typ}, for a
generic $m\in M$ both matrices $A_1$ and $A_2$ will have pairwise
different eigenvalues. By applying lemma \ref{skew} to the two
subspaces $L_1$ and $L_2$, it is possible to find a basis in each
of them such that both matrices $A_1$ and $A_2$ have the normal
block-diagonal form (\ref{block}). In these coordinates, any other
arbitrary matrix $Z\in so(n)$ can be represented as
\begin{equation} \label{zn}
Z=\begin{pmatrix}V_1 & U \\
-\tilde U & V_2
\end{pmatrix} \,,
\end{equation}
where $V_1\in so(q_1)$, $V_2\in so(q_2)$, whereas $U$ is a generic
$q_1 \times q_2$ matrix. We have $Z\in \galg^\lambda$ if and only
if $U=0$. For the commutator of $P^L$ and $Z$ we obtain
\begin{equation} \label{pzn}
[P^L,Z]=\begin{pmatrix}C_1 & D \\
-\tilde D & C_2
\end{pmatrix} \,,
\end{equation}
where
\begin{align}
C_1 &=[A_1 ,V_1]+U\tilde B -B \tilde U \,,\label{c1}\\
C_2 &=[A_2 ,V_2]+\tilde U B -\tilde B U \,,\label{c2}\\
D&= A_1U -UA_2 -V_1 B +BV_2 \,. \label{d}
\end{align}
We will have $\pop^\lambda([P^L,Z]) =0$ if and only if $C_1=0$ and
$C_2= 0$.

Let us now consider the number $\sigma_2^\lambda$ defined by
formula (\ref{sigma2}). The kernel of the operator $\pop^\lambda
\circ {\rm ad}_{P^L}|_{\galg^\lambda}$ is made by the matrices $Z$
of the form (\ref{zn}), with $U=0$, such that $C_1=0$ and $C_2= 0$
in (\ref{pzn}). Using formulas (\ref{c1})--(\ref{c2}), we find
that $V_1$ and $V_2$ must satisfy the conditions $[A_1 ,V_1]=0$
and $[A_2 ,V_2]=0$ respectively. According to formula (\ref{kera})
we have that $\dim{\rm Ker}\, {\rm ad}_{A_1}= [q_1/2]$ and
$\dim{\rm Ker}\, {\rm ad}_{A_2}= [q_2/2]$, whence
\[
\sigma_2^\lambda= \left[\frac {q_1}2\right] +\left[\frac
{q_2}2\right]\,,
\]
which corresponds to (\ref{sig2}).

In a similar way, in order to evaluate the number
$\sigma_3^\lambda$ defined by formula (\ref{sigma3}), we observe
that the kernel of the operator ${\rm ad}_{P^L}|_{\galg^\lambda}$
is made by the matrices $Z$ of the form (\ref{zn}), with $U=0$,
such that $[P^L,Z]=0$. Using formulas (\ref{pzn})--(\ref{d}), we
find that $V_1$ and $V_2$ must simultaneously satisfy the
conditions
\begin{equation}
[A_1 ,V_1]=0\,, \qquad [A_2 ,V_2]=0\,, \label{a1v}
\end{equation}
and
\begin{equation}
V_1 B- BV_2=0\,. \label{v1b}
\end{equation}
We recall that $A_1$ and $A_2$ are skew-symmetric matrices in the
normal block-diagonal form (\ref{block}), with pairwise different
eigenvalues. Hence, according to lemma \ref{skew}, conditions
(\ref{a1v}) imply that also $V_1$ and $V_2$ must have the normal
block-diagonal form (\ref{block}), with arbitrary eigenvalues. But
then it is easy to verify that, for a generic $B$, condition
(\ref{v1b}) necessarily implies that all eigenvalues of $V_1$ and
$V_2$ must be zero. From this we conclude that $Z=0$, so that
(\ref{sig3}) is proved.

Finally, in order to evaluate the number $\sigma_1^\lambda$
defined by formula (\ref{sigma1}), we note that the kernel of the
operator $\pop^\lambda \circ {\rm ad}_{P^L}$ is made by the
matrices $Z$ of the form (\ref{zn}), such that $C_1=0$ and $C_2=
0$ in (\ref{pzn}). Introducing the matrices $T_1:= B \tilde U
-U\tilde B$ and $T_2:= \tilde B U- \tilde U B$, for such $Z$
formulas (\ref{c1})--(\ref{c2}) can be rewritten as
\begin{equation} \label{avt}
[A_1,V_1]= T_1\,, \qquad [A_2,V_2]= T_2\,.
\end{equation}
Hence $U$ must be such that
\begin{equation} \label{st0}
T_1\in{\rm ad}_{A_1}(\galg_1)\,, \qquad T_2\in {\rm
ad}_{A_2}(\galg_2)\,,
\end{equation}
where $\galg_1= so(q_1)$ and $\galg_2= so(q_2)$. According to
formula (\ref{ada}), this is equivalent to
\begin{equation}\label{st}
\begin{split}
(T_1)_{12}=(T_1)_{34}=\dots =(T_1)_{2s_1-1,2s_1}&=0 \,, \\
(T_2)_{12}=(T_2)_{34}=\dots =(T_2)_{2s_2-1,2s_2}&=0 \,,
\end{split}
\end{equation}
where $s_1=[q_1/2]$, $s_2=[q_2/2]$. It is easy to check that, for
a typical matrix $B$, the linear system of equations (\ref{st})
can be solved with respect to $s_1+ s_2$ appropriately chosen
elements of the matrix $U$. Therefore matrices $U$ satisfying
conditions (\ref{st0}) form a linear space having dimension $q_1
q_2- s_1 -s_2$. After choosing $U$ in this space, in order to
obtain an element $Z\in \ker \left(\pop^\lambda \circ {\rm
ad}_{P^L}\right)$ one has to take $V_1$ and $V_2$ satisfying
equations (\ref{avt}). These are equivalent to
\[
V_1\in {\rm ad}_{A_1}^{-1}(T_1)\,, \qquad V_2\in {\rm
ad}_{A_2}^{-1}(T_2)\,.
\]
According to corollary \ref{n/2}, we have
\begin{align*}
\dim {\rm ad}_{A_1}^{-1}(T_1)= \dim\ker {\rm ad}_{A_1}&= s_1\,, \\
\dim {\rm ad}_{A_2}^{-1}(T_2)= \dim\ker {\rm ad}_{A_2}&= s_2\,.
\end{align*}
More exactly, equations (\ref{avt}) determine all elements of
$V_1$ and $V_2$ except $(V_1)_{12}$, $(V_1)_{34}$, $\dots$,
$(V_1)_{2s_1-1,2s_1}$, and $(V_2)_{12}$, $(V_2)_{34}$, $\dots$,
$(V_2)_{2s_2-1,2s_2}$. These are $s_1+s_2$ elements which can be
arbitrarily chosen, in addition to $q_1 q_2- s_1 -s_2$ elements of
the matrix $U$ previously considered. We thus conclude that
\[
\sigma_1^\lambda= q_1q_2 \,,
\]
which corresponds to (\ref{sig1}).
\end{proof}

Let $C^j$ denote the standard set of Casimir functions for the
adjoint action of the group $SO(q_j)$ of orthogonal
transformations on euclidean subspace $L_j \subseteq \rn^n$, $j=1,
\dots, u$, see definition \ref{defcas}. This set contains
$\left[q_j/2\right]$ functions on the algebra $so(q_j)$ which
corresponds to the subspace $L_j\subseteq \rn^n$. We can consider
these functions as functions on the algebra $\galg =so(n)$. Let us
consider the set $C^\lambda =(C^1, \dots, C^u)$ of functions on
$\galg$, obtained by collection of sets $C^j$. Clearly $C^\lambda$
contains
\begin{equation}\label{slam}
s^\lambda:= \sum_{j=1}^u \left[\frac {q_j}2 \right]= \frac
{n-d(q)}2
\end{equation}
elements. Let $C^{L\lambda}= (C^{L1}, \dots, C^{Lu})$ denote the
set $C^{L\lambda} :=C^\lambda \circ P^L$ of functions on $T^*G$,
obtained by making the composition of the functions of set
$C^\lambda$ with the map $P^L$. We have already introduced the set
$\bar C= C\circ P^L= C\circ P^R$ (see proposition \ref{coin}).
Clearly $C^{L\lambda}= \bar C$ if $u=1$. We will denote with
$Z^\lambda$ the set of functions $Z^\lambda:= (\bar C,
C^{L\lambda})$, if $u>1$, or $Z^\lambda:= \bar C$, if $u=1$. The
set $Z^\lambda$ contains $z^\lambda$ elements, where
\begin{equation}\label{zlam}
z^\lambda= \begin{cases} [n/2] \qquad &\text{if } u=1 \,, \\
[n/2] +s^\lambda.
\end{cases}
\end{equation}

\begin{prop} \label{zlamb}
For any $\lambda= (\lambda_1, \dots, \lambda_n)$, each function of
the set $Z^\lambda$ is in involution with each function of the set
$B^\lambda$:
\begin{equation} \label{zb}
\{Z^\lambda, B^\lambda\}=0 \,.
\end{equation}

Furthermore, almost everywhere in $M=T^* G$, where $G=SO(n)$, the
set $Z^\lambda$ is functionally independent, i.e., $\rank
Z^\lambda =z^\lambda$.
\end{prop}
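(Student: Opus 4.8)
The plan is to prove the two assertions of the proposition separately: the involution relation (\ref{zb}) is quick, while the functional independence will be reduced to the geometry already set up in the proof of Proposition \ref{defect0}.

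\textbf{Involution.} Since $\bar C=C(P^R)$ and $P^R\subseteq B^\lambda\subseteq B$, formula (\ref{cb}) of Proposition \ref{r1} gives at once $\{\bar C,B^\lambda\}=0$; this already settles the case $u=1$, where $Z^\lambda=\bar C$. For $u>1$ it remains to show $\{C^{L\lambda},B^\lambda\}=0$. On the one hand $C^{L\lambda}=C^\lambda(P^L)$ is a function of $P^L$ alone, so $\{C^{L\lambda},P^R\}=0$ follows from (\ref{prpl}). On the other hand, for each $j$ the function $C^{Lj}=C^j(P^L)$ depends only on those entries $P^L_{il}$ with $i,l$ in the block $L_j$ of the decomposition (\ref{dec}); by the Poisson relations (\ref{plpl}) these entries (for fixed $j$) close into a Lie subalgebra isomorphic to $so(q_j)$, and entries belonging to distinct blocks Poisson-commute, so the Poisson algebra generated by $P^{L\lambda}$ is the direct sum $\bigoplus_{j=1}^u so(q_j)$. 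As $C^j$ is a Casimir of $so(q_j)$ (Lemma \ref{stcas}), we get $\{C^{Lj},P^{L\lambda}\}=0$ for all $j$, hence $\{C^{L\lambda},P^{L\lambda}\}=0$, and together with the previous remark $\{C^{L\lambda},B^\lambda\}=0$. This proves (\ref{zb}).

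\textbf{Reduction of the functional independence.} I would first observe that, at a typical $m\in M$, $z^\lambda=\dim W$ with $W=\sspan dB^\lambda\cap(\sspan dB^\lambda)^\angle$ as in (\ref{wb}): for $u>1$ one has $z^\lambda=\ipn2+s^\lambda$ by (\ref{slam})--(\ref{zlam}), while $\ipn2=\sigma$ (Lemma \ref{sigman}), $s^\lambda=\sigma_2^\lambda$, $\sigma_3^\lambda=0$ (Lemma \ref{q1b}), and $\dim W=\sigma+\sigma_2^\lambda-\sigma_3^\lambda$ by (\ref{dimw2}); for $u=1$, $B^\lambda=B$ and $W=(\sspan dB)^\angle$ has dimension $\sigma=\ipn2=z^\lambda$ (this case is in any event just (\ref{n2})). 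Next, since $\bar C=C(P^R)$ and $C^{L\lambda}$ is a function of $P^{L\lambda}$, we have $\sspan dZ^\lambda\subseteq\sspan dB^\lambda$; and since $\{Z^\lambda,B^\lambda\}=0$ we have $\sspan dZ^\lambda\subseteq(\sspan dB^\lambda)^\angle$. Hence $\sspan dZ^\lambda\subseteq W$, and it suffices to prove the reverse inclusion $W\subseteq\sspan dZ^\lambda$, which then forces $\rank Z^\lambda=\dim W=z^\lambda$. To prove it I would use the explicit description $W=\{\sum_{i<j}K_{ij}\,dP^L_{ij}:K\in\sspan(S,S^\lambda)\}$ established in the proof of Proposition \ref{defect0} (formula (\ref{w})), with $S=\ker{\rm ad}_{P^L}$ and $S^\lambda=\ker(\pop^\lambda\circ{\rm ad}_{P^L}|_{\galg^\lambda})$; by linearity it is enough to treat $K\in S$ and $K\in S^\lambda$ separately. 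If $K\in S$, then $\sum K_{ij}\,dP^L_{ij}\in(\sspan dB)^\angle$ by the first equality in (\ref{kpr}), and $(\sspan dB)^\angle=\sspan d\bar C$ at a typical $m$ by (\ref{dbskew}), so this covector lies in $\sspan dZ^\lambda$. If $K\in S^\lambda$, then as in the proof of Lemma \ref{q1b} $K$ is block-diagonal, $K={\rm diag}(V_1,\dots,V_u)$ with $V_j\in\ker{\rm ad}_{A_j}$, where $A_j=A_j(m)$ is the $j$-th diagonal block of $P^L(m)$; by the chain rule the differentials of the functions of $C^{Lj}$ are represented, in the frame $\{dP^L_{il}:i<l\text{ in }L_j\}$, by the span of the standard-Casimir differentials of $so(q_j)$ at $A_j$, and at a typical $m$ that block has pairwise distinct eigenvalues, so (combining Lemma \ref{stcas2}, Corollary \ref{n/2}, and the trace-form duality $(\im{\rm ad}_{A_j})^\perp=\ker{\rm ad}_{A_j}$) this span equals $\ker{\rm ad}_{A_j}$, of dimension $[q_j/2]$. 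Hence the contribution of $V_j$ lies in $\sspan dC^{Lj}\subseteq\sspan dZ^\lambda$, and summing over $j$, $\sum K_{ij}\,dP^L_{ij}\in\sspan dZ^\lambda$. This gives $W\subseteq\sspan dZ^\lambda$ and concludes the argument.

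\textbf{Main obstacle.} The delicate step is the one just indicated: identifying the span of the standard-Casimir differentials of $so(q_j)$ at a generic skew-symmetric matrix with the full centralizer $\ker{\rm ad}_{A_j}$, and then verifying that all the genericity qualifiers appearing in the cited results (namely $m\in\mtyp$, each diagonal block $A_j(m)$ with pairwise distinct eigenvalues, and the ranks in (\ref{dbskew}), (\ref{dimw2}) and Lemma \ref{q1b} all attaining their generic values) can be imposed simultaneously on one and the same open dense full-measure subset of $M$; once this bookkeeping is done the two inclusions combine to give $\rank Z^\lambda=z^\lambda$ there.
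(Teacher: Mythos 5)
Your proof is correct, and the involution half coincides with the paper's argument (Casimir property of $C^j$ on each block $so(q_j)$, commutativity of distinct blocks, and $\{P^L,P^R\}=0$). For the functional independence you take a genuinely different, though closely related, route. The paper computes $\rank Z^\lambda$ by inclusion--exclusion on the two constituent sets, $\rank Z^\lambda=\left[\frac n2\right]+s^\lambda-\dim(\sspan d\bar C\cap\sspan dC^{L\lambda})$, and identifies the overlap with the space $G$ of covectors $\sum K_{ij}\,dP^L_{ij}$, $K\in\ker({\rm ad}_{P^L}|_{\galg^\lambda})$, whose dimension is $\sigma_3^\lambda$ (zero for $u>1$); the key step there is $\sspan dP^{Li}\cap(\sspan dP^{Li})^\angle=\sspan dC^{Li}$, i.e.\ exactly the Casimir-differential/centralizer identification you invoke. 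You instead prove the two inclusions $\sspan dZ^\lambda\subseteq W$ and $W\subseteq\sspan dZ^\lambda$ with $W=\sspan dB^\lambda\cap(\sspan dB^\lambda)^\angle$, splitting $K\in\sspan(S,S^\lambda)$ via formula (\ref{w}) and handling the two pieces with (\ref{dbskew}) and the block-wise centralizer argument respectively. What your version buys is that it establishes equality (\ref{zblam}) of Lemma \ref{defect2} directly as a byproduct, whereas the paper obtains (\ref{zblam}) only afterwards by comparing the already-known ranks; what it costs is a heavier reliance on the internal formula (\ref{w}) from the proof of Proposition \ref{defect0} and on all three numbers $\sigma,\sigma_2^\lambda,\sigma_3^\lambda$, where the paper's count needs only $\sigma_3^\lambda$. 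Your closing remark on genericity is handled correctly: each of the finitely many conditions ($m\in\mtyp$, distinct eigenvalues of every diagonal block $A_j(m)$, the generic values in Lemma \ref{q1b}) fails on a closed null set, so they hold simultaneously on a full-measure subset of $M$.
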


\begin{proof}
The equality $\{\bar C, B^\lambda\}=0$ follows from (\ref{cb}).
From the definition of Casimir function, and the fact that $P^L$
is a Poisson map, it follows that $\{C^{Li},P^{Li}\}=0\, \forall
\,i=1, \dots, u$, where $P^{Li}:= \{P^L_{hk}: (h,k)\in I^i\}$, see
formula (\ref{ilam}). Relation (\ref{plpl}) implies
$\{P^{Li},P^{Lj}\}=0\, \forall \,i\neq j$, whence $\{C^{Li},
P^{Lj}\} =0$. Furthermore, (\ref{prpl}) implies
$\{C^{L\lambda},P^R\}=0$. Hence $\{C^{L\lambda}, B^\lambda\}=0$,
and equality (\ref{zb}) is proved.

According to proposition \ref{r1}, applied to the group $SO(n)$
and its subgroups $SO(q_j)$, $j=1, \dots, u$, both sets $\bar C$
and $C^{L\lambda}$ are almost everywhere functionally independent,
so that
\begin{equation}\label{zl1}
\rank \bar C= \ipn2\,, \qquad \rank C^{L\lambda} = s^\lambda\,.
\end{equation}
It follows that
\begin{align}\label{rzl1}
\rank Z^\lambda &= \dim \sspan(d\bar C, dC^{L\lambda}) \nonumber
\\
&= \ipn2 +s^\lambda -\dim (\sspan d\bar C \cap \sspan
dC^{L\lambda})\,.
\end{align}

Let us then consider the set
\begin{equation}\label{w1}
G =\Big\{w\in T^*_m M:w=\sum_{i<j} K_{ij} dP^L_{ij}\,,\, K\in {\rm
Ker} \left ({\rm ad}_{P^L}|_{\galg^\lambda} \right) \Big\}\,.
\end{equation}
Using (\ref{kpr}) and (\ref{dbskew}) we see that
\begin{equation}\label{w2}
G= \sspan d\bar C \cap \sspan dP^{L\lambda}\,.
\end{equation}
Since $\sspan dC^{L\lambda} \subseteq \sspan dP^{L\lambda}$, we
obtain
\begin{equation}\label{w3}
G\supseteq \sspan d\bar C \cap \sspan dC^{L\lambda}\,.
\end{equation}
On the other hand, any $w\in G$ can be expressed as
$w=\sum_{i=1}^u w_i$, with $w_i\in \sspan dP^{Li}$. For any $z\in
\sspan dP^{Li}$ we have $0=\Pi(w,z) =\Pi(w_i,z)$. Hence $w_i \in
\sspan dP^{Li} \cap(\sspan dP^{Li})^\angle= \sspan dC^{Li}$, where
the last equality follows from the application of (\ref{spanc}) to
the subspace $L_i$. It follows that $w\in \sspan dC^{L\lambda}$.
Hence, using (\ref{w2}) and (\ref{w3}), we conclude that
\[
G= \sspan d\bar C \cap \sspan dC^{L\lambda}\,,
\]
so that
\begin{equation}\label{zl2}
\dim \left({\rm Span}d \bar C \cap {\rm Span}dC^{L\lambda} \right)
= \dim G= \dim(\ker {\rm ad}_{P^L}|_{\galg^\lambda})=
\sigma^\lambda_3\,.
\end{equation}
Recalling formula (\ref{sig3}), and the equalities $s^\lambda=
\sigma_3^\lambda= [n/2]$ for $u=1$, we then obtain from
(\ref{rzl1}) that $\rank Z^\lambda = z^\lambda$ for any $u$, so
that the set $Z^\lambda$ is functionally independent.
\end{proof}

\begin{lem} \label{defect2}
Almost everywhere in $T^*G$ we have
\begin{equation}\label{zblam}
\sspan dZ^\lambda= \sspan dB^\lambda \cap (\sspan
dB^\lambda)^\angle\,.
\end{equation}
Moreover, if $u>1$ we have
\begin{align}
\rank B^\lambda &= 2N- \sum_{i<j} q_i q_j= \frac 12 \left(
n^2-2n +\sum_{i=1}^u q_i^2\right)\,, \label{rblam2}\\
k(B^\lambda) &= z^\lambda= \ipn2+ \sum_{i=1}^u \left[\frac
{q_i}2\right] = n-\left[\frac{d(q)+1}2\right]
\,, \label{kblam2}\\
r(B^\lambda) &= \sum_{i<j} q_i q_j- z^\lambda= \frac 12 \left(
n^2-2n-\sum_{i=1}^u q_i^2 \right)+\left[\frac{d(q)+1}2\right]
\label{dblam2}\,,
\end{align}
where $k(B^\lambda)$ and $r(B^\lambda)$ are respectively the
centrality and the defect of integrability (see definition
\ref{kr}) of the set $B^\lambda$.
\end{lem}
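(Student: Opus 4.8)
The plan is to obtain every assertion by assembling facts already proved, with essentially no computation beyond elementary bookkeeping. First I would prove the identity (\ref{zblam}). By Proposition \ref{zlamb} we have $\{Z^\lambda,B^\lambda\}=0$, hence $\sspan dZ^\lambda\subseteq(\sspan dB^\lambda)^\angle$. Next I would check the inclusion $\sspan dZ^\lambda\subseteq\sspan dB^\lambda$: by Proposition \ref{coin} the functions of $\bar C$ may be regarded as functions of $P^R$, so $\sspan d\bar C\subseteq\sspan dP^R$; each function of $C^{Li}$ depends only on the subset $P^{Li}\subseteq P^{L\lambda}$, so $\sspan dC^{L\lambda}\subseteq\sspan dP^{L\lambda}$; and since $P^R$ and $P^{L\lambda}$ are both contained in $B^\lambda$, this yields $\sspan dZ^\lambda\subseteq\sspan dB^\lambda$. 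Combining the two inclusions,
\[
\sspan dZ^\lambda\subseteq W:=\sspan dB^\lambda\cap(\sspan dB^\lambda)^\angle .
\]
To upgrade this to an equality I would compare dimensions at a typical $m\in M$. By Proposition \ref{zlamb}, $\dim\sspan dZ^\lambda=\rank Z^\lambda=z^\lambda$. By formula (\ref{dimw2}) of Proposition \ref{defect0}, $\dim W=\sigma+\sigma_2^\lambda-\sigma_3^\lambda$; for $u=1$ this equals $\sigma=\ipn2$ (then $\galg^\lambda=\galg$ forces $\sigma_2^\lambda=\sigma_3^\lambda=\sigma$, and $\sigma=\ipn2$ by Lemma \ref{sigman}), while for $u>1$ Lemma \ref{q1b} gives $\sigma_2^\lambda=s^\lambda$, $\sigma_3^\lambda=0$, so $\dim W=\ipn2+s^\lambda$. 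In both cases $\dim W=z^\lambda$, so the inclusion above is an equality almost everywhere, which proves (\ref{zblam}).

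For the three numerical formulas I restrict to $u>1$ and use Lemma \ref{q1b} throughout. Formula (\ref{rblam}) of Proposition \ref{defect0}, together with $\sigma_1^\lambda=\sum_{i<j}q_iq_j=\frac12(n^2-\sum_i q_i^2)$ and $2N=n(n-1)$, gives $\rank B^\lambda=\frac12(n^2-2n+\sum_i q_i^2)$, which is formula (\ref{rblam2}). For the centrality, on one hand $Z^\lambda$ is a functionally independent set of functions of $B^\lambda$ in involution with all of $B^\lambda$ (Proposition \ref{zlamb}), so $k(B^\lambda)\geq\rank Z^\lambda=z^\lambda$; on the other hand Lemma \ref{kf} (equivalently the inequality (\ref{kblam})) gives $k(B^\lambda)\leq\dim W=z^\lambda$. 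Hence $k(B^\lambda)=z^\lambda=\ipn2+\sum_i[q_i/2]$. Using $\sum_i[q_i/2]=(n-d(q))/2$ and $n\equiv d(q)\pmod{2}$ (since $n=\sum_i q_i$ and $d(q)$ counts the odd $q_i$), a short check on the parity of $n$ rewrites this as $n-[(d(q)+1)/2]$, which is (\ref{kblam2}). Finally, by Definition \ref{kr}, $r(B^\lambda)=2N-\rank B^\lambda-k(B^\lambda)=\sigma_1^\lambda-z^\lambda=\sum_{i<j}q_iq_j-z^\lambda$, and substituting the expressions just found gives (\ref{dblam2}).

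The only step that is not purely mechanical is getting the inclusion $\sspan dZ^\lambda\subseteq\sspan dB^\lambda$ right: it genuinely requires Proposition \ref{coin} in order to view $\bar C$ as built out of $P^R$ rather than out of $P^L$ (the obvious expression $\bar C=C(P^L)$ would not help, since $P^L\not\subseteq B^\lambda$ when $u>1$), together with the slightly delicate parity bookkeeping identifying $\ipn2+\sum_i[q_i/2]$ with $n-[(d(q)+1)/2]$. Everything else is read off directly from Propositions \ref{zlamb} and \ref{defect0}, Lemmas \ref{kf} and \ref{q1b}, and the definitions of centrality and defect of integrability.
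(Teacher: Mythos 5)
Your proof is correct and follows essentially the same route as the paper's: the inclusion $\sspan dZ^\lambda\subseteq W$ plus the dimension count $\dim W=z^\lambda=\rank Z^\lambda$ from (\ref{dimw2}) and Lemma \ref{q1b} for (\ref{zblam}), then (\ref{rblam}) with (\ref{sig1}) for the rank, the two-sided bound $z^\lambda\leq k(B^\lambda)\leq\dim W$ for the centrality, and the definition of the defect for the last formula. The only difference is that you spell out explicitly why $Z^\lambda$ consists of functions of $B^\lambda$ (via Proposition \ref{coin} for $\bar C$) and verify the parity identity $\ipn2+\sum_i[q_i/2]=n-[(d(q)+1)/2]$, details the paper leaves implicit.
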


\begin{proof}
Since the elements of $Z^\lambda$ are functions of $B^\lambda$,
from equality (\ref{zb}) it follows that $\sspan dZ^\lambda
\subseteq W:= \sspan dB^\lambda \cap (\sspan dB^\lambda)^\angle$.
From (\ref{dimw2}), taking into account (\ref{sign2}),
(\ref{sig2}) and (\ref{sig3}), we obtain that $\dim W= z^\lambda$.
Moreover, we know from proposition \ref{zlamb} that the set
$Z^\lambda$ is functionally independent, so that $\dim (\sspan
dZ^\lambda)= z^\lambda$. Hence we conclude that $\sspan
dZ^\lambda= W$, so that (\ref{zblam}) is proved.

Equality (\ref{rblam2}) follows from (\ref{rblam}) and
(\ref{sig1}).

According to definition \ref{kr}, the centrality of $B^\lambda$ is
at least as great as the number of elements of $Z^\lambda$, i.e.,
we have $k(B^\lambda)\geq z^\lambda$. On the other hand, using
(\ref{sign2}), (\ref{sig2}) and (\ref{sig3}), we obtain from
(\ref{kblam}) that $k(B^\lambda) \leq z^\lambda$. Therefore
(\ref{kblam2}) is proved.

Finally, (\ref{dblam2}) follows from definition \ref{kr} and from
equalities (\ref{rblam2})--(\ref{kblam2}).
\end{proof}
Note that one can easily prove, by induction on $u$, that
$r(B^\lambda)$ given by formula (\ref{dblam2}) is always an even
integer.

The complete integrability of the system describing the free
rotation of an $n$-dimensional rigid body can be proved by
introducing the so-called Manakov's integrals \cite{man}. It is
not difficult to check that the function ${\cal P}(\rho) \equiv
(1/2k) \Tr (P^L +J^2 \rho)^k: T^*G \to \rn$, where $J$ is the
diagonal $n\times n$ matrix with $\lambda_1, \lambda_2, \dots,
\lambda_n$ as diagonal elements, is in involution with $H^\lambda$
for any value of the parameter $\rho$. Hence the coefficients of
the polynomial ${\cal P}(\rho)$ in the variable $\rho$ are also in
involution with $H^\lambda$, i.e.\ we have $\{c_{ij},
H^\lambda\}=0$, where
\[
{\cal P}(\rho) \equiv\frac 1{2k} \Tr (P^L +J^2 \rho)^k=
\sum_{j=0}^{k} c_{kj} \rho^{j}\,.
\]
These coefficients are not all functionally independent. One
immediately sees that $c_{kk}$ is just a constant, and that
$c_{kj}=0$ whenever $k-j$ is odd. Furthermore it can be proved
that, if one is only interested to functionally independent
elements, then one need only consider coefficients $c_{kj}$ with
$k=2, 3, \dots, n$. It is easy to see that ${\cal P}(\rho)$ is in
involution with all functions of the set $B^\lambda$, so that
$\{c_{ij}, B^\lambda\} = 0$. Moreover, one can prove that all
coefficients $c_{ij}$ are mutually in involution, $\{c_{ij},
c_{i'j'}\}=0$. This result provides in particular another proof of
the fact that all these coefficients are integrals of the system
with hamiltonian $H^\lambda$, for it can be shown that $H^\lambda$
can be expressed as a linear combination of the functions
$c_{k,k-2}$ for $k=2,\dots, n$.

It has been proved in general that the system with hamiltonian
$H^\lambda$ is integrable \cite{mish2, mish, ratiu}. In the
general case in which all generalized moments of inertia are
pairwise different, $\lambda_i\neq \lambda_j$ for $i \neq j$, an
integrable set of functions is given by $(M; P'^R)$, where
$M=(c_{k,k-2i}, k=2, \dots, n,\ i=1, 2,\dots, [k/2])$ is the
complete set of $\sum_{k=2}^n [k/2] =(1/2)(n(n-1)/2 +[n/2])$
functionally independent coefficients $c_{ij}$, and $P'^R$ is a
set of $n(n-1)/2 -[n/2]$ elements of $P^R$, such that $(\bar C,
P'^R)$ is a functionally independent set. Hence this system is
integrable with $(1/2)(n(n-1)/2 +[n/2])$ central functions. The
elements $c_{ij}$ of $M$ such that $j>0$ are called Manakov's
integrals. The remaining $[n/2]$ elements of $M$, i.e.\ $c_{2k,0}
=(1/2k)\Tr (P^L)^k$ for $k=1,2, \dots, [n/2]$, are independent of
the moments of inertia $\lambda$, and form a set of Casimir
functions equivalent to the set $\bar C$ introduced in section
\ref{fr}. Hence, an equivalent integral set of functions for the
free $n$-dimensional rigid body with pairwise different moments of
inertia is $(\bar C, \bar M; P'^R)$, where $\bar M$ is the set of
$(1/2)(n(n-1)/2 -[n/2])$ Manakov's integrals.

When the moments of inertia are not all pairwise different, the
set $M$ is no longer functionally independent. However the
integrability of the system is preserved, which means that one can
construct an integrable set of functions whose central subset is
made of the elements of $Z^\lambda$ and of a suitable subset of
$\bar M$. According to proposition \ref{strind2} such a subset of
$\bar M$ must contain just $r/2$ elements, where $r=r(B^\lambda)$
is the defect of integrability of the set $B^\lambda$ and is given
by formula (\ref{dblam2}). Therefore, the central subset will
contain $k(B^\lambda)+ r(B^\lambda)/2$ elements. This result is
expressed by the following proposition.

\begin{prop}\label{rbint}
The system with hamiltonian $H=H^\lambda$ given by formula
(\ref{hamlam}) is integrable with
\begin{equation}\label{kq}
\bar k(q)= \frac 14 \left( n^2+2n-\sum_{i=1}^u q_i^2 \right)-
\frac 12 \left[\frac{d(q)+1}2\right]
\end{equation}
central integrals.
\end{prop}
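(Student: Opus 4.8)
The plan is to recognize $\bar k(q)$ as the number of central integrals of an explicit integrable set assembled from $Z^\lambda$, the Manakov integrals, and part of the right-invariant momenta $P^R$. First I would record the arithmetic identity
\[
\bar k(q)= k(B^\lambda)+ \tfrac 12\, r(B^\lambda)\,.
\]
For $u>1$ this follows from Lemma \ref{defect2}: substituting $k(B^\lambda)= z^\lambda$ and $r(B^\lambda)= \sum_{i<j}q_iq_j- z^\lambda$ gives $\bar k(q)= \tfrac 12\big(z^\lambda+ \sum_{i<j}q_iq_j\big)$, and using $\sum_{i<j}q_iq_j= \tfrac 12(n^2- \sum_i q_i^2)$ together with $z^\lambda= n- \left[\tfrac{d(q)+1}2\right]$ one recovers exactly formula (\ref{kq}); for $u=1$ one uses instead $k(B)=\ipn2$, $r(B)=0$ from Proposition \ref{kb} and checks (\ref{kq}) directly. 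Thus it is enough to produce an integrable set $F$, containing (a function of) the hamiltonian $H^\lambda$ of (\ref{hamlam}), whose centrality equals $k(B^\lambda)+ r(B^\lambda)/2$.

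Next I would take $F:= (Z^\lambda, \bar M'; P'^R)$, where $\bar M'$ is a subset of the Manakov integrals $\bar M$ with exactly $r(B^\lambda)/2$ elements and $P'^R$ is a subset of $P^R$ with $2N- 2\bar k(q)$ elements, both chosen so that $F$ is functionally independent almost everywhere. All the commutation relations needed for the central subset $(Z^\lambda, \bar M')$ to be in involution with the whole of $F$ are already available: $\{Z^\lambda, B^\lambda\}=0$ by Proposition \ref{zlamb}; $\{\bar M, \bar M\}=0$, $\{\bar M, B^\lambda\}=0$ and $\{\bar M, H^\lambda\}=0$ by the properties of the Manakov polynomial recalled before the statement; and $\{Z^\lambda, \bar M\}=0$ because every element of $Z^\lambda$ is a function of $B^\lambda$ (indeed $\bar C= C(P^R)$ by Proposition \ref{coin}, while the functions $C^{L\lambda}$ are built from $P^{L\lambda}\subseteq B^\lambda$), hence commutes with anything that commutes with $B^\lambda$. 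Since $P'^R\subseteq P^R\subseteq B^\lambda$, it follows that $(Z^\lambda, \bar M')$ is an involutive family in involution with all of $F$ on the whole of $M$, and likewise $H^\lambda$ is in involution with all of $F$, being in involution with $B^\lambda$ (Lemmas \ref{hpr} and \ref{hpl}), with $\bar M'$, and with $Z^\lambda$.

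The hard part, and the step I expect to require genuine computation, is the quantitative claim that $\bar M'$ and $P'^R$ with the stated cardinalities can indeed be chosen making $F$ functionally independent, i.e.\ $\rank F= 2N- \bar k(q)$. For the upper bound I would show that no involutive family which is in involution with $B^\lambda$ and contains $Z^\lambda$ can have rank larger than $z^\lambda+ r(B^\lambda)/2$: the differentials of such functions lie in $(\sspan dB^\lambda)^\angle$, of dimension $\sigma_1^\lambda= \sum_{i<j}q_iq_j$ by (\ref{spandblo}) and (\ref{sig1}); the restriction of the symplectic form $\Pi$ to that subspace has kernel $(\sspan dB^\lambda)^\angle\cap \sspan dB^\lambda= \sspan dZ^\lambda$, of dimension $z^\lambda$ by (\ref{zblam}); hence an isotropic subspace of $(\sspan dB^\lambda)^\angle$ containing $\sspan dZ^\lambda$ has dimension at most $z^\lambda+ \tfrac 12(\sigma_1^\lambda- z^\lambda)= z^\lambda+ r(B^\lambda)/2$ (this is Proposition \ref{strind2} applied in the quotient symplectic space $(\sspan dB^\lambda)^\angle/\sspan dZ^\lambda$). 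For the lower bound I would invoke the known complete integrability of the free $n$-dimensional rigid body with pairwise distinct moments of inertia \cite{mish2, mish, ratiu}, for which $(\bar C, \bar M)$ is a maximal involutive family on $\galg^*\cong so(n)^*$; one then has to verify — using the block decomposition of $\galg^\lambda$ of the proof of Lemma \ref{q1b} and the explicit structure of the coefficients $c_{k,k-2i}$ of the Manakov polynomial — that $\rank (Z^\lambda, \bar M)= z^\lambda+ r(B^\lambda)/2$, so that precisely $r(B^\lambda)/2$ Manakov integrals may be adjoined to $Z^\lambda$ preserving functional independence, and that one may then pick $P'^R$ among the components of $P^R$, of the required size $2N- 2\bar k(q)$, so that $(\bar C, P'^R)$ — and hence all of $F$ — is functionally independent. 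Matching the Manakov integrals against this block structure is the real content of the argument.

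Granting this, $F= (Z^\lambda, \bar M'; P'^R)$ is functionally independent with $\rank F= 2N- \bar k(q)$, and its subset $(Z^\lambda, \bar M')$ consists of $\bar k(q)$ functionally independent functions in involution with all of $F$. By Lemma \ref{kf} one has $k(F)\le 2N- \rank F= \bar k(q)$; combined with $k(F)\ge \bar k(q)$ (definition \ref{kr}) this gives $k(F)= \bar k(q)$ and $r(F)=0$, so $F$ is an integrable set with $\bar k(q)$ central integrals. Finally, since $H^\lambda$ is a linear combination of the functions $c_{k,k-2}$, all of which lie in $Z^\lambda\cup\bar M$, the subset $\bar M'$ can be chosen so that $H^\lambda$ is functionally dependent on the central elements of $F$ (when $r(B^\lambda)=0$ one has $u\le 2$ and $H^\lambda$ is automatically a function of $Z^\lambda$); hence the system with hamiltonian $H= H^\lambda$ is integrable with a central subset of $\bar k(q)$ elements, which is the assertion of the proposition.
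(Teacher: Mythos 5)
Your proposal is correct and follows essentially the same route as the paper, which likewise builds the integrable set by adjoining to $Z^\lambda$ a subset of $r(B^\lambda)/2$ Manakov integrals (justified via proposition \ref{strind2} and the cited integrability results of Mishchenko--Fomenko and Ratiu) and verifies that $k(B^\lambda)+r(B^\lambda)/2$ equals $\bar k(q)$. You merely make explicit two points the paper leaves implicit, namely the arithmetic identity and the isotropy upper bound inside $({\rm Span}\,dB^\lambda)^\angle$, and you correctly flag that the functional independence of the adjoined Manakov integrals is the step deferred to the literature.
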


Manakov's integrals can be explicitly represented in the following
form:
\begin{equation}\label{man}
c_{k,k-2l}=\frac 1{4l}\sum_{i_1,i_2,\dots, i_{2l}} a_{k,k-2l}
^{i_1 i_2 \dots i_{2l}} P^L_{i_1 i_2} P^L_{i_2 i_3} \cdots
P^L_{i_{2l-1} i_{2l}} P^L_{i_{2l} i_1}\,,
\end{equation}
with $0<l<k/2$, where
\begin{equation} \label{akl}
a_{k,k-2l} ^{i_1 i_2 \dots i_{2l}}= \sum_{b_1\geq 0, b_2\geq 0,
\dots, b_{2l}\geq 0} \lambda_{i_1}^{2b_1} \lambda_{i_2}^{2b_2}
\cdots \lambda_{i_{2l}}^{2b_{2l}}\delta_{b_1 +b_2+ \cdots +b_{2l},
k-2l}\,.
\end{equation}
We see that $c_{k,k-2l}$ is a homogeneous polynomial of degree
$2l$ in the left-invariant momenta, while its coefficients
$a_{k,k-2l} ^{i_1, \dots, i_{2l}}$ are homogeneous polynomials of
degree $2(k-2l)$ in the generalized moments of inertia, completely
symmetrical with respect to permutations of the indexes $i_1,
\dots, i_{2l}$.

In Table \ref{tabrb} we give the number $\bar k(q)$ of central
integrals resulting from the above proposition for free
$n$-dimensional rigid bodies with $n\leq 6$. We also give the
quantities $k(B^\lambda)$ and $r(B^\lambda)$ resulting from lemma
\ref{defect2}.
\begin{table}[h]
\centering
\begin{tabular}{l|l|c|c|c }
$n$ & $q$ & $k(B^\lambda)$ & $r(B^\lambda)$& $\bar k(q)$\\
\hline
3 & (3) & 1 & 0 & 1 \\
3 & (1,2)& 2 & 0 & 2 \\
3 & (1,1,1) & 1 & 2 & 2\\
4 & (4) & 2 & 0 & 2 \\
4 & (1,3) & 3 & 0 & 3 \\
4 & (2,2) & 4 & 0 & 4 \\
4 & (1,1,2) & 3 & 2 & 4 \\
4 & (1,1,1,1) & 2 & 6 & 5 \\
5 & (5) & 2 & 0 & 2 \\
5 & (1,4) & 4 & 0 & 4 \\
5 & (2,3) & 4 & 2 & 5 \\
5 & (1,1,3) & 3 & 4 & 5 \\
5 & (1,2,2) & 4 & 4 & 6 \\
5 & (1,1,1,2) & 3 & 6 & 6 \\
5 & (1,1,1,1,1) & 2 & 8 & 6 \\
6 & (6) & 3 & 0 & 3 \\
6 & (1,5) & 5 & 0 & 5 \\
6 & (2,4) & 6 & 2 & 7 \\
6 & (3,3) & 6 & 2 & 7 \\
6 & (1,1,4) & 5 & 4 & 7 \\
6 & (1,2,3) & 5 & 6 & 8 \\
6 & (1,1,1,3) & 4 & 8 & 8 \\
6 & (1,1,2,2) & 5 & 8 & 9 \\
6 & (1,1,1,1,2) & 4 & 10 & 9 \\
6 & (1,1,1,1,1,1) & 3 & 12 & 9
\end{tabular}
\caption{\label{tabrb} Number of central integrals for free rigid
bodies.}
\end{table}

\subsection{Free rotation of a quantum rigid body}

In order to quantize a free rigid body we have to consider the
quantum impulses $\hat P^L_{ij}$, $\hat P^R_{ij}$, which are
constructed according to formula (\ref{diffp}) in correspondence
with vector fields $V^L_{ij}$, $V^R_{ij}$ respectively, $1\leq i<j
\leq n$. However, if we want to apply to this system the concept
of integral quantum system introduced in 
\cite{part1}, we are apparently faced by the problem that here the
configuration space $K=G= SO(n)$ is not a domain of the linear
space $\rn^N$. This problem is solved by the consideration of
local coordinates on $G$. Note that the main part $M\fop$ of a
linear differential
operator $\fop$ (see definition 
in \cite{part1}) is not defined intrinsically. However the symbol
$(M\fop)^{\rm smb}$ can be considered as intrinsically defined,
according to the following proposition.

\begin{prop}
The symbol $S:=(M\fop)^{\rm smb}$ of the main part of a linear
operator of class $\oop$, expressed via local coordinates $x$ on
configuration space $K$, has the form of a homogeneous polynomial
of $p$. This polynomial $S=S(x,p)$ behaves under a change of local
coordinates $x$ on $K$ as a function on the cotangent bundle
$T^*K$ to the manifold $K$. It follows from this fact that the
definition of quasi-independence of a set of operators does not
depend on the choice of local coordinates on configuration space
$K$. The same is true for the definition of quasi-integrability of
either a set of operators or an individual operator.
\end{prop}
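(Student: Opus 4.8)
The plan is to prove the three assertions in turn: that $S=(M\fop)^{\rm smb}$ is a homogeneous polynomial in $p$, that the locally defined polynomials patch into a single function on $T^*K$, and that coordinate-independence of quasi-independence and of quasi-integrability is a formal consequence of the second assertion. For the first claim I would simply recall from \cite{part1} the definition of the main part: an operator $\fop$ of class $\oop$, written in a local chart $x$ on $K$, is a (locally finite) sum of monomials in the multiplication operators $x_i$ and the derivations $\hat p_i=\partial/\partial x_i$; its order is the largest number $d$ of factors $\hat p_i$ occurring in such a monomial, and $M\fop$ is the sum of those monomials realizing this maximum. Applying the symbol map $\hat p_i\mapsto p_i$ to $M\fop$ therefore produces a polynomial $S(x,p)=\sum_{|\alpha|=d}a_\alpha(x)\,p^{\alpha}$ that is homogeneous of degree $d$ in $p$ with smooth coefficients $a_\alpha$, which is exactly what is claimed.

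For the transformation law I would use two ingredients available from \cite{part1}: the multiplicativity of the leading symbol under composition, $\bigl(M(\fop_1\circ\fop_2)\bigr)^{\rm smb}=(M\fop_1)^{\rm smb}(M\fop_2)^{\rm smb}$, and the chain rule $\partial/\partial x_i=\sum_j(\partial y_j/\partial x_i)\,\partial/\partial y_j$ for a change of chart $x\mapsto y=y(x)$. Treating $x_i$ and the Jacobian entries $\partial y_j/\partial x_i$ as zeroth-order operators, the derivation $\hat p_i$ is, in the $y$-chart, equal to $\sum_j(\partial y_j/\partial x_i)\,\partial/\partial y_j$ modulo terms of order $0$ in the new momenta; hence its main-part symbol in the $y$-chart is $\sum_j(\partial y_j/\partial x_i)\,q_j$, with $q$ the momenta dual to $y$. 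This is precisely the transformation law of the fibre coordinate $p_i$ on $T^*K$. By multiplicativity of the symbol under composition it then follows that the main-part symbol of an arbitrary monomial in $x$ and $\hat p$, and therefore of $\fop$, transforms under $x\mapsto y$ exactly by the substitution $p_i\mapsto\sum_j(\partial y_j/\partial x_i)\,q_j$ together with $x=x(y)$ in the coefficients, i.e.\ as a function on $T^*K$. I expect the only real obstacle to be the bookkeeping needed to check that the extra terms produced by differentiating the transition functions $\partial y_j/\partial x_i$ (when one composes several derivations) are of strictly lower order in the momenta and therefore drop out upon passing to the main part; this is the point at which the precise filtration-by-order conventions of \cite{part1} are used, but it should be a routine verification in local coordinates.

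It then remains to deduce the statements about quasi-independence and quasi-integrability. Since $(M\fop)^{\rm smb}$ is intrinsically defined on $T^*K$, the local expressions of the symbols $S_1,\dots,S_r$ of the main parts of a set of operators agree on chart overlaps and hence glue to globally defined functions on $T^*K$; consequently the condition that $dS_1,\dots,dS_r$ be linearly independent almost everywhere on $T^*K$, which is the definition of quasi-independence of the set of operators, does not depend on the chart in which it is checked. Quasi-integrability of a set of operators requires, in addition to quasi-independence, only that the operators (almost) commute and that the appropriate centrality/defect condition hold; these are formulated purely in terms of composition of operators and of the canonical Poisson structure of $T^*K$, all of which are coordinate-free. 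The same applies to an individual operator. Hence quasi-independence and quasi-integrability are intrinsic notions, independent of the choice of local coordinates on $K$, which completes the proof.
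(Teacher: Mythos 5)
Your proposal is correct and is precisely the argument the paper has in mind: the paper declares the proof ``obvious'' and a ``reformulation of well-known facts,'' namely the standard invariance of the principal symbol of a differential operator under coordinate changes, which is exactly what you verify via the chain rule, the multiplicativity of the leading symbol, and the observation that derivatives of the transition functions contribute only lower-order terms. Your concluding deduction that quasi-independence and quasi-integrability are therefore chart-independent likewise matches the paper's intent.
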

The proof of this proposition is obvious. The first part of the
proposition, about the representation of $S$ as a function on
$T^*K$, is actually the reformulation of well-known facts.

Let us consider the quantum system with hamiltonian operator
\begin{equation} \label{qhamlam}
\hat H= \hat H^{\lambda}= \frac 12 \sum_{i<j} \frac {\left(\hat
P^L_{ij}\right)^2}{\lambda_i +\lambda_j} 
\end{equation}
on $C^\infty (SO(n))$. We consider this system as the system
describing the free rotation of a quantum $n$-dimensional rigid
body.

\begin{prop}
For $n\leq 6$ this quantum system is quasi-integrable for any
$\lambda$, with the same number $\bar k$ of central operators as
the number $\bar k(u)$ of central integrals of the corresponding
classical system, see proposition \ref{rbint} and Table
\ref{tabrb}.

Moreover, if $q=(n)$, this quantum system is quasi-integrable for
any $n$ with $[n/2]$ central operators. If $q=(1,n-1)$, this
quantum system is quasi-integrable for any $n$ with $n-1$ central
operators.
\end{prop}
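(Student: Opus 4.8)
The plan is to quantize, by symmetrization of the classical polynomials in the left-invariant momenta $\hat P^L_{ij}$, the classical integrable set underlying Proposition \ref{rbint}. That set has central subset $(Z^\lambda,\bar M')$, where $Z^\lambda=(\bar C,C^{L\lambda})$ and $\bar M'$ is a suitably chosen set of $r(B^\lambda)/2$ Manakov integrals, together with a non-central part $B'$ drawn from $B^\lambda=(P^{L\lambda},P^R)$; for the quantum system one takes $\hat F^\lambda:=(\hat H^\lambda,(Z^\lambda)^{\rm sym},(\bar M')^{\rm sym};(B')^{\rm sym})$. Since every element of the classical set is a polynomial in the momenta that is homogeneous in $p$, it coincides with the symbol of the main part of its operator, so quasi-independence of $\hat F^\lambda$ will follow at once from the functional independence of the classical set, established almost everywhere in Proposition \ref{zlamb}, Proposition \ref{defect0} and Lemma \ref{defect2}. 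The task then reduces to checking commutativity.

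First I would dispose of the commutators that transfer without difficulty. By Proposition \ref{isom} the operators $\hat P^L_{ij},\hat P^R_{ij}$ span a Lie algebra isomorphic to the classical one, so the quantum analogues of (\ref{plpl})--(\ref{prpl}) hold. Because $\bar C=C(P^L)$ and the $C^{Lj}$ are built from Casimir functions of $so(n)$ and of the block subalgebras $so(q_j)$, corollary 4.4 of \cite{part2} gives $[(\bar C)^{\rm sym},\hat P^L]=[(\bar C)^{\rm sym},\hat P^R]=0$ and $[(C^{Lj})^{\rm sym},\hat P^{Lj}]=0$; together with $[\hat P^{Li},\hat P^{Lj}]=0$ for $i\ne j$ and $[\hat P^{Lj},\hat P^R]=0$ this yields $[(Z^\lambda)^{\rm sym},\hat P^R]=0$ and $[(Z^\lambda)^{\rm sym},(P^{L\lambda})^{\rm sym}]=0$. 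The operator (\ref{qhamlam}) contains only squares of the $\hat P^L_{ij}$, so it needs no symmetrization; the relations $[\hat H^\lambda,\hat P^R]=0$ (Lemma \ref{hpr}) and $[\hat H^\lambda,\hat P^L_{ij}]=0$ for $(i,j)\in I^\lambda$ (Lemma \ref{hpl}) transfer from their classical counterparts via proposition 4.2 (case b) of \cite{part2}, all symbols being homogeneous in $p$; and $(\bar C)^{\rm sym}$, $(C^{L\lambda})^{\rm sym}$, being central in the relevant enveloping subalgebras, commute with $\hat H^\lambda$.

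The real work concerns the Manakov operators. For $l\ge1$ let $c_{k,k-2l}^{\rm sym}$ be the symmetrization, with respect to $\hat P^L$, of the polynomial (\ref{man}), homogeneous of degree $2l$ in $p$. I would need to verify $[c_{k,k-2l}^{\rm sym},\hat P^R]=0$ (from $\{P^L_{hk},P^R_{ij}\}=0$ and Proposition \ref{isom}), $[c_{k,k-2l}^{\rm sym},(Z^\lambda)^{\rm sym}]=0$ (from the centrality of $(\bar C)^{\rm sym}$ in the enveloping algebra of $\hat P^L$ and the classical identities $\{C^{Lj},c_{k,k-2l}\}=0$, again with homogeneous symbols), mutual commutativity, and above all $[c_{k,k-2l}^{\rm sym},\hat H^\lambda]=0$. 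For $n\le4$ every Manakov integral that can enter the central subset has degree $2$, so each such commutator equals the corresponding Poisson bracket by proposition 3.1 (case 1) of \cite{part2} and the conclusion is immediate; for $n=5$ the only degree-$4$ integral that may be forced into the central subset is $c_{5,1}$, and a direct reordering computation using proposition 2.1 and lemma 2.2 of \cite{part2} shows that $[c_{5,1}^{\rm sym},\hat H^\lambda]$ still vanishes, which settles $n\le5$. The genuine obstacle is $n=6$: for the partitions $q$ of $6$ for which $r(B^\lambda)/2$ exceeds the number of available degree-$2$ Manakov integrals (see Table \ref{tabrb}), the central subset is forced to include a fourth-order Manakov polynomial in the left-invariant momenta whose naive symmetrization $c^{\rm sym}$ fails to commute with $\hat H^\lambda$ -- expanding $[c^{\rm sym},\hat H^\lambda]$ and collecting the uncancelled terms leaves an anomalous contribution that is a second-order polynomial in the $\hat P^L_{ij}$. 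The hard part, and the modification announced in the Introduction, will be to exhibit an explicit second-order polynomial $\hat c^{(2)}$ in the $\hat P^L_{ij}$ with $[c^{\rm sym}+\hat c^{(2)},\hat H^\lambda]=0$, determined by solving this as a linear system on its coefficients with the quantum Euler relations of Lemma \ref{hpl} and the reordering identities of \cite{part2}; since $\hat c^{(2)}$ is itself a polynomial in the $\hat P^L_{ij}$, its addition disturbs neither the commutators with $\hat P^R$ and $(Z^\lambda)^{\rm sym}$ nor the main part of the symbol, so the corrected $\hat F^\lambda$ is quasi-integrable with exactly $\bar k(q)=k(B^\lambda)+r(B^\lambda)/2$ central operators.

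Finally, the two infinite families follow with no symmetrization difficulty. If $q=(n)$ then $\hat H^\lambda$ is proportional, up to a possible additive constant, to the standard Casimir operator $((P^L)^2)^{\rm sym}=\sum_{i<j}(\hat P^L_{ij})^2$, so $\hat F=(\hat P^L,\hat P^R)$ is quasi-integrable with central subset $(\bar C)^{\rm sym}$ of $\ipn2$ elements -- commutativity by corollary 4.4 of \cite{part2}, quasi-independence from that of the homogeneous symbols $P^L,P^R$ together with Corollary \ref{br1} -- and $\hat H^\lambda$ lies in this central subset. If $q=(1,n-1)$ then $P^{L\lambda}$ reduces to the single block $so(n-1)$, the set $Z^\lambda=(\bar C,C^{L,2})$ has $\ipn2+\left[\frac{n-1}2\right]=n-1$ elements, and $\sigma_3^\lambda=0$ by Lemma \ref{q1b}, so $r(B^\lambda)=0$ and no Manakov correction is needed; the argument of the first three paragraphs then shows that $\hat F^\lambda=(\hat H^\lambda,(Z^\lambda)^{\rm sym};(B')^{\rm sym})$ is quasi-integrable with $n-1$ central operators, all the operators involved being symmetrizations of Casimir functions.
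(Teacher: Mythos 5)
Your plan follows the paper's own route essentially step by step: symmetrize the classical integrable set in the left-invariant momenta, get quasi-independence for free from homogeneity of the symbols, transfer the easy commutators via proposition \ref{isom} and the quantization results of \cite{part2}, settle $q=(n)$ and $q=(1,n-1)$ for all $n$ using $r(B^\lambda)=0$, and for $n=6$ repair the fourth-order Manakov operator by adding a quadratic polynomial in the $\hat P^L_{ij}$. Two points, however, need correction.

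First, a misstatement that matters for the mechanism of the fix: the anomalous term in $[\hat c_{6,2}^{\rm sym},\hat H^\lambda]$ is not ``a second-order polynomial in the $\hat P^L_{ij}$'' but a \emph{third}-order one, of the form $\sum_{i<j<k} b^{ijk}\,{\rm Sym}_3(\hat P^L_{ij},\hat P^L_{jk},\hat P^L_{ki})$ (the symmetrized Poisson bracket, of degree $5$, vanishes, and the first quantum correction drops the degree by two). The correction one adds is second-order precisely because the commutator of a quadratic $\sum_{i<j}\alpha^{ij}(\hat P^L_{ij})^2$ with $\hat H^\lambda$ produces exactly such antisymmetrized cubics; matching coefficients forces $\alpha^{ij}=\tfrac56\lambda_i^2\lambda_j^2$. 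Your description of ``solving a linear system'' is the right idea, but you have not argued that the anomaly lies in the image of that linear map, which is the actual content of the $n=6$ repair.

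Second, and this is the genuine gap: for $n=6$ and, e.g., $q=(1,1,1,1,1,1)$, the central subset must contain \emph{both} fourth-order operators $\hat c_{5,1}$ and the corrected $\hat C_{6,2}$, so one must also verify $[\hat c_{5,1},\hat C_{6,2}]=0$. You list ``mutual commutativity'' in the abstract checklist but never return to it in the $n=6$ discussion, where you only control commutators with $\hat H^\lambda$, $\hat P^R$ and $(Z^\lambda)^{\rm sym}$. This commutator of two fourth-order symmetrized polynomials is not reachable by the reordering results of \cite{part2} (which handle degree $2$ against degree $\le 4$); it has to be computed directly, and its vanishing is a nontrivial cancellation between $[\hat c_{5,1},\hat c_{6,2}]$ and $\tfrac5{12}\sum_{i<j}\lambda_i^2\lambda_j^2[\hat c_{5,1},(\hat P^L_{ij})^2]$. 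Without this check the central subset is not shown to be commutative and quasi-integrability at $n=6$ is not established. (A smaller version of the same omission occurs at $n=5$: settling $[\hat c_{5,1}^{\rm sym},\hat H^\lambda]=0$ is not quite enough; one needs $[\hat c_{l,l-2},\hat c_{5,1}]=0$ for all the degree-$2$ Manakov operators in the central subset, though there the single formula $b^{[ijk]}_{l,5}=0$, valid for all $l$, covers every case at once.)
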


\begin{proof}
Let $\hat P^L$ and $\hat P^R$ denote the sets of operators $\hat
P^L=(\hat P^L_{ij}, 1\leq i<j \leq n)$ and $\hat P^R=(\hat
P^R_{ij}, 1\leq i<j \leq n)$. Let us consider the set $\hat B:=
(\hat P^L, \hat P^R)$ containing $2N$ operators. Let $\hat C^R$
denote the set of operators which are obtained by symmetrization
with respect to $\hat P^R$ from the functions of set $C^R=
C(P^R)$, i.e., $\hat C^R:= (C^R)^ {\rm sym}$. Analogously we
define $\hat C^L$, $\hat P^{L\lambda}$, $\hat B^{\lambda}$, $\hat
C^{L\lambda}$, $\hat Z^{\lambda}$, i.e., $\hat C^L:= (C^L)^ {\rm
sym}$ etc. We first show that the commutation relations $[\hat
B^\lambda, \hat H^\lambda]= [\hat B^\lambda, \hat Z^\lambda]=0$
follow from the analogous classical relations $\{B^\lambda,
H^\lambda\}= \{B^\lambda, Z^\lambda\}=0$ and from the propositions
about quantization of 
\cite{part2} and section \ref{qcff} of the present paper. The
commutators of the operators of set $\hat B$ have the same form as
the Poisson brackets of the corresponding classical functions,
since all these functions are linearly dependent on canonical
impulses $p$, see formulas (\ref{iplpr})--(\ref{prpl}) and
proposition \ref{isom}:
\begin{align}
[\hat P_{ij}^L, \hat P_{hk}^L] &= -\delta_{ih} \hat P^L_{jk} -
\delta_{jk} \hat P^L_{ih} + \delta_{ik} \hat P^L_{jh}
+ \delta_{jh} \hat P^L_{ik} \,, \label{qplpl}\\
[\hat P_{ij}^R, \hat P_{hk}^R] &= \delta_{ih} \hat P^R_{jk} +
\delta_{jk} \hat P^R_{ih} - \delta_{ik} \hat P^R_{jh}
- \delta_{jh} \hat P^R_{ik} \,, \label{qprpr}\\
[\hat P_{ij}^L, \hat P_{hk}^R] &= 0 \,. \label{qprpl}
\end{align}
The relation $[\hat H^{\lambda}, \hat B^{\lambda}] =0$ then
follows from $\{H^\lambda, B^\lambda\}=0$ using proposition
4.2, case b, of \cite{part2}. Similarly, since
\[
\{C^L(P^L),B^\lambda\}=0 \,, \qquad \hat C^L= (C^L)^ {\rm sym} \,,
\]
the equality $[\hat C^L, \hat B^\lambda]=0$ is obtained by
applying corollary 
4.4 of \cite{part2}. Analogously, we obtain the equality $[\hat
C^{L\lambda}, \hat B^{\lambda}] =0$ from the corresponding
classical relation $\{C^{L\lambda}, B^{\lambda}\} =0$. We have
thus proved that $[\hat B^\lambda, \hat Z^\lambda] =0$.

If $q=(n)$ or $q=(1,n-1)$, according to lemma \ref{defect2} the
defect of integrability of the set $B^\lambda$ in the classical
case is $r(B^\lambda)=0$. Moreover, we have $k(B^\lambda)=[n/2]$
if $q=(n)$, and $k(B^\lambda)=n-1$ if $q=(1,n-1)$. This implies
that in these two cases, for any $n$, there exists a classical
integrable set of functions of the form $F=(Z^\lambda; B')$, where
$B'\subset B^\lambda$, and the central subset $Z^\lambda$ contains
$k(B^\lambda)$ elements. Let us then consider the corresponding
set of operators $\hat F=(\hat Z^\lambda; \hat B')$. From what we
have seen above, it follows that $[\hat Z^\lambda, \hat F]=0$.
Moreover, since all functions of $F$ are homogeneous with respect
to $p$, these functions coincide with their main parts with
respect to $p$, i.e., $M(F)= F$. It is also easy to see that the
elements of $F$ are the symbols of the main parts with respect to
$\hat p$ of the elements of the corresponding set of operators
$\hat F$, i.e., $F= (M\hat F)^{\rm smb}$. Hence, the
quasi-independence of the sets of operators $\hat F$ follows
immediately from the functional independence of set of functions
$F$. One thus concludes that $\hat F$ is an integrable set of
operators with $k(B^\lambda)$ central elements. One can also
easily show that in these two cases, $\hat H^\lambda$ is a linear
combination of the elements of $\hat Z^\lambda$. The integrability
of the system describing the free quantum rigid-body is thus
proved for any $n$ in the two cases $q=(n)$ and $q=(1,n-1)$.

In the remaining cases, the classical integrable sets of functions
generally include also one or more Manakov's integrals among their
central elements. We define Manakov's operators $\hat c_{k,k-2l}$
as the symmetrization of the classical functions (\ref{man}) with
respect to the left-invariant momenta:
\begin{equation}
\hat c_{k,k-2l}= \frac 1{4l}\sum_{i_1,i_2,\dots, i_{2l}}
a_{k,k-2l} ^{i_1 i_2 \dots i_{2l}} {\rm Sym}_{2l}(\hat P^L_{i_1
i_2}, \hat P^L_{i_2 i_3}, \dots, \hat P^L_{i_{2l-1} i_{2l}}, \hat
P^L_{i_{2l} i_1})\,,
\end{equation}
with $0<l<k/2$. Note that, also in the quantum case, the
hamiltonian operator $\hat H^\lambda$ can be expressed as a linear
combination of the operators $\hat c_{k,k-2}$ for $k=2,\dots, n$.
By applying again the results of \cite{part2} we easily see that
$[\hat c_{k,k-2l}, \hat B^\lambda]=0$. However, no general theorem
ensures that Manakov's operators commute with $\hat H^\lambda$ or
among themselves. We will here limit ourselves to studying
commutators between Manakov's operators of degree lower that 6 in
the momenta. This will be sufficient to establish the
quasi-integrability of the free quantum rigid body in spatial
dimensions $n\leq 6$.

The commutator between two Manakov's operators, when one of them
is of second degree, can be evaluated by making use of proposition
2.3 of \cite{part2}, and of the algebra (\ref{qplpl}) of
left-invariant momenta. In this way, with some computation we find
for any $l,h$:
\begin{align}
[\hat c_{l,l-2}, \hat c_{h,h-2}] &=0 \,, \label{c2c2}\\
[\hat c_{l,l-2}, \hat c_{h,h-4}] &= \frac 16\sum_{i,j,k}
b^{ijk}_{l,h} {\rm Sym}_{3}(\hat P^L_{i j}, \hat P^L_{j k}, \hat
P^L_{k i})\,, \label{c2c4}
\end{align}
where
\begin{align}
b^{ijk}_{l,h}= &\ a^{ij}_{l,l-2} \Big(2a_{h,h-4}^{iijk}-
3a_{h,h-4}^{iikk} - \sum_{p\neq i,j,k}
a_{h,h-4}^{iikp}\Big) \nonumber\\
&+\sum_{p\neq i,j,k} a^{kp}_{l,l-2} \big(a_{h,h-4}^{iijk} -
a_{h,h-4}^{iijp}\big)\,. \label{blh}
\end{align}
Note that the operator ${\rm Sym}_{3}(\hat P^L_{i j}, \hat P^L_{j
k}, \hat P^L_{k i})$ is completely antisymmetrical with respect to
permutations of indexes $i,j,k$, so that formula (\ref{c2c4}) can
be rewritten as
\[
[\hat c_{l,l-2}, \hat c_{h,h-4}] = \sum_{i<j<k} b^{[ijk]}_{l,h}
{\rm Sym}_{3}(\hat P^L_{i j}, \hat P^L_{j k}, \hat P^L_{k i})\,,
\]
where $b^{[ijk]}_{l,h}$ denotes the complete antisymmetrization of
coefficient $b^{ijk}_{l,h}$.

In formula (\ref{blh}) the coefficients $a^{ij}_{l,l-2}$ and
$a^{ijkp}_{h,h-4}$ have to be replaced by their explicit
expressions given by (\ref{akl}). We have
\begin{equation}\label{al2}
a^{ij}_{l,l-2}= \sum_{k=0}^{l-2} \lambda_i^{2(l-2-k)}
\lambda_j^{2k}= \frac{\lambda_i^{2(l-1)}- \lambda_j^{2(l-1)}}
{\lambda_i^{2}- \lambda_j^{2}}\,.
\end{equation}
Moreover, for $h=5$ we have
\[
a^{ijkp}_{h,h-4}= a^{ijkp}_{5,1} = \lambda_i^{2}+ \lambda_j^{2}+
\lambda_k^{2}+ \lambda_p^{2}\,,
\]
so that from (\ref{blh}) we easily obtain $b^{[ijk]}_{l,5}=0$,
which implies
\begin{equation}
[\hat c_{l,l-2}, \hat c_{5,1}] =0
\end{equation}
and consequently $[\hat H^\lambda, \hat c_{5,1}] =0$. From these
results it follows that the free quantum rigid body is a
quasi-integrable system for spatial dimensions $n\leq 5$. The
integrable set of functions of the classical system can in fact be
quantized by replacing Manakov's integrals with the corresponding
operators. Since these operators are homogeneous in the momenta,
the symbols of their main parts with respect to $\hat p$ coincide
with the corresponding classical functions. Hence the
quasi-independence of the sets of operators is a consequence of
the functional independence of the classical sets of functions.

For $h=6$ the coefficient $a^{ijkp}_{h,h-4}$ becomes
\[
a^{ijkp}_{6,2} = \lambda_i^{4}+ \lambda_j^{4}+ \lambda_k^{4}+
\lambda_p^{4}+ \lambda_i^{2}\lambda_j^{2} +
\lambda_i^{2}\lambda_k^{2}+ \lambda_i^{2}\lambda_p^{2}+
\lambda_j^{2}\lambda_k^{2}+ \lambda_j^{2}\lambda_p^{2}+
\lambda_k^{2}\lambda_p^{2}\,,
\]
and from (\ref{blh}) we obtain
\begin{align}
b^{[ijk]}_{l,6}&= \frac 56 \Big[\lambda_k^2(\lambda_j^2 -
\lambda_i^2)a^{ij}_{l,l-2}+ \lambda_i^2(\lambda_k^2 -
\lambda_j^2)a^{jk}_{l,l-2}+ \lambda_j^2(\lambda_i^2 -
\lambda_k^2)a^{ki}_{l,l-2}\Big]\label{bl6}\\
&= \frac 56 \Big[\lambda_i^{2(l-1)}(\lambda_j^{2} -
\lambda_k^{2})+ \lambda_j^{2(l-1)}(\lambda_k^{2} - \lambda_i^{2})+
\lambda_k^{2(l-1)}(\lambda_i^{2} - \lambda_j^{2})\Big]\,.
\nonumber
\end{align}
Since for a generic set $\lambda$ of generalized moments of
inertia the above expression is different from 0, we have that in
general $[\hat c_{l,l-2},\hat c_{6,2}]\neq 0$. Note also that,
putting $l=3/2$ in (\ref{al2}), we get $a^{ij}_{3/2,-1/2} =
1/(\lambda_i +\lambda_j)$, so that we can formally write $\hat
H^\lambda= -\hat c_{3/2,-1/2}$. From the above formulas we thus
directly obtain
\[
[\hat H^\lambda,\hat c_{6,2}]= \sum_{i<j<k} b^{ijk} {\rm
Sym}_{3}(\hat P^L_{i j}, \hat P^L_{j k}, \hat P^L_{k i})\,,
\]
with
\[
b^{ijk}=-\frac 56 \Big[\lambda_i(\lambda_j^{2} - \lambda_k^{2})+
\lambda_j(\lambda_k^{2} - \lambda_i^{2})+ \lambda_k(\lambda_i^{2}
- \lambda_j^{2})\Big]\,.
\]
Hence $\hat c_{6,2}$ is not in general an integral operator of the
quantum system with hamiltonian $\hat H^\lambda$.

By using again proposition 
2.3 of \cite{part2} and commutation relations (\ref{qplpl}), one
finds however that, for arbitrary symmetrical coefficients
$\alpha^{ij}=\alpha^{ji}$,
\[
-\frac 14 \sum_{ij} \alpha^{ij}[\hat c_{l,l-2},(\hat P^L_{ij})^2]=
\sum_{i<j<k} \bar b^{ijk}_{l} {\rm Sym}_{3}(\hat P^L_{i j}, \hat
P^L_{j k}, \hat P^L_{k i})\,,
\]
with
\begin{equation}
\bar b^{ijk}_{l}= (\alpha^{jk}- \alpha^{ik})a^{ij}_{l,l-2}
+(\alpha^{ki}- \alpha^{ji})a^{jk}_{l,l-2}  + (\alpha^{ij}-
\alpha^{kj})a^{ki}_{l,l-2}\,.
\end{equation}
One sees immediately that the above expression becomes identical
with (\ref{bl6}) if one chooses $\alpha^{ij}=(5/6)\lambda_i^2
\lambda_j^2$. Hence, if we define the modified Manakov's operator
\[
\hat C_{6,2} \equiv \hat c_{6,2}+ \frac 5{12}\sum_{i<j}
\lambda_i^2 \lambda_j^2 (\hat P^L_{ij})^2\,,
\]
then we get
\[
[\hat c_{l,l-2},\hat C_{6,2}]=0\,, \qquad \qquad [\hat
H^\lambda,\hat C_{6,2}]=0\,.
\]

For $n=6$, let us then consider the set of operators $\hat F$
which is obtained from the classical integrable set of functions
$F$ given by proposition \ref{rbint}, by replacing Manakov's
integrals $c_{l,l-2}$ $(l=3,\dots,6)$, $c_{5,1}$, and $c_{6,2}$,
with the operators $\hat c_{l,l-2}$, $\hat c_{5,1}$, and $\hat
C_{6,2}$ respectively. In order to prove that this set of
operators satisfies the required commutation relations, we still
have to show that
\begin{equation}\label{c56}
[\hat c_{5,1},\hat C_{6,2}]=0\,.
\end{equation}
With the techniques already employed
one obtains that
\begin{align}
\frac 5{12}\sum_{i<j} \lambda_i^2 \lambda_j^2 [\hat c_{5,1},(\hat
P^L_{ij})^2]=& -\frac 56 \sum_{hlm}\lambda_l^4\lambda_m^2
\bigg(\frac 53 {\rm Sym}_{3}(\hat
P^L_{hl}, \hat P^L_{lm}, \hat P^L_{mh}) \nonumber\\
&+\sum_{ij}{\rm Sym}_{5}(\hat P^L_{i j}, \hat P^L_{j h}, \hat
P^L_{hl}, \hat P^L_{lm}, \hat P^L_{mi})\bigg)\,. \label{c56p}
\end{align}
On the other hand, the commutator $[\hat c_{5,1},\hat c_{6,2}]$
between two fourth-order symmetrized polynomials cannot be worked
out with the tools provided in \cite{part2}. By means of a
straightforward and quite heavy calculation, we have verified that
this commutator is indeed just the opposite of the expression
(\ref{c56p}), so that equality (\ref{c56}) actually holds. Of
course, the symbol of the main part of $\hat C_{6,2}$ coincides
with the classical function $c_{6,2}$, so that the
quasi-independence of the set $\hat F$ again follows from the
functional independence of the classical set $F$. We conclude that
$\hat F$ is a quasi-integrable set of operators. The proposition
is thus completely proved.
\end{proof}

We have seen that, for $n\geq 6$, the correct quantization of
Manakov's integral $c_{6,2}$ does not coincide with the
symmetrization of the classical function with respect to the
left-invariant momenta. We have nevertheless provided a recipe to
achieve the quasi-integrability of the free quantum rigid body for
$n=6$ for arbitrary moments of inertia $\lambda$. We can
conjecture that analogous procedures can lead to the
quasi-integrability of this quantum system also for $n>6$.
However, we are unable at the moment to prove that this conjecture
is true.


\begin{thebibliography}{99}

\bibitem{part1} M. Marino and N. N. Nekhoroshev, Quantization of
classical integrable systems. Part I: quasi-integrable quantum
systems, arXiv:1001.4685 [math-ph] (2010).

\bibitem{part2} M. Marino and N. N. Nekhoroshev, Quantization of
classical integrable systems. Part II: quantization of functions
on Poisson manifolds, arXiv:1001.4701 [math-ph] (2010).

\bibitem{TMMO} N. N. Nekhoroshev, Action-angle variables and their
generalizations, {\it Trans. Moscow Math. Soc.} {\bf 26}, 180--198
(1972).

\bibitem{fomenko} A. T. Fomenko, {\it Differential Geometry and
Topology}, Consultants Bureau (New York), 1987.

\bibitem{fasso} F. Fass\`o, Superintegrable Hamiltonian Systems:
Geometry and Perturbations, {\it Acta Appl. Math.} {\bf 87},
93--121 (2005).

\bibitem{olsha0} A. M. Olshanetsky and M. A. Perelomov,
Classical integrable finite-dimensional systems related to Lie
algebras, {\it Phys. Rep.} {\bf 71}, no. 5, 314–-400 (1981).

\bibitem{olsha1} A. M. Olshanetsky and M. A. Perelomov,
Quantum integrable systems related to Lie algebras, {\it Phys.
Rep.} {\bf 94}, no. 6, 313–-404 (1983).

\bibitem{evans} N. W. Evans,
Superintegrability in classical mechanics, {\it Phys. Rev.} {\bf A
41}, 5666–-5676 (1990).

\bibitem{etingof} P. I. Etingof,
Quantum integrable systems and representations of Lie algebras,
{\it J. Math. Phys.} {\bf 36}, no. 6, 2636--2651 (1995).

\bibitem{tempesta} P. Tempesta, A. V. Turbiner, and P. Winternitz,
Exact solvability of superintegrable systems, {\it J. Math. Phys.}
{\bf 42}, 4248–-4257 (2001).

\bibitem{Gra_Wint} S. Gravel and P. Winternitz, Superintegrability
with third-order integrals in quantum and classical mechanics,
{\it J. Math. Phys.} {\bf 43}, 5902--5912 (2002).

\bibitem{marquette} I. Marquette and P. Winternitz,
Polynomial Poisson algebras for classical superintegrable systems
with a third-order integral of motion, {\it J. Math. Phys.} {\bf
48}, 012902 (2007).

\bibitem{rodriguez} M. A. Rodriguez and P. Winternitz,
Quantum superintegrability and exact solvability in $n$
dimensions, {\it J. Math. Phys.} {\bf 43}, 1309–-1322 (2002).

\bibitem{Landau} L. D. Landau and E. M. Lifshitz, {\it Mechanics},
Pergamon (London), 1960.

\bibitem{Landau2} L. D. Landau and E. M. Lifshitz, {\it Quantum Mechanics},
Pergamon (London), 1960.

\bibitem{mish} A. S. Mishchenko and A. T. Fomenko, Euler equations
on finite dimensional Lie groups, {\it Math. URSS, Izvestija} {\bf
12}, 371--389 (1978).

\bibitem{ratiu} T. Ratiu, The motion of the free $n$-dimensional
rigid body, {\it Indiana Univ. Math. J.} {\bf 29}, 609--629
(1980).

\bibitem{adler} M. Adler and P. van Moerbeke, Completely
integrable systems, Kac--Moody Lie algebras, and curves, {\it Adv.
in Math.} {\bf 38}, 267--317 (1980).

\bibitem{Arnold} V. I. Arnold, {\it Mathematical methods of classical
mechanics}, Springer-Verlag (New York), 1978.

\bibitem{man} S. V. Manakov, Note on the integration of Euler's
equations of the dynamics of an $n$-dimensional rigid body, {\it
Funcional Anal. Appl.} {\bf 4}, 328--329 (1976).

\bibitem{mish2} A. S. Mishchenko and A. T. Fomenko, On the
integration of Euler equations on semisimple Lie algebras, {\it
Soviet Math. Dokl.} {\bf 17}, 1591--1593 (1976).


\end{thebibliography}
\end{document}